\theoremstyle{plain} %
\newtheorem{theorem}{Theorem} %
\newcounter{lemmacounter} %
\newtheorem{lemma}[lemmacounter]{Lemma} %
\newtheorem{cor}{Corollary}
\theoremstyle{definition} %
\newenvironment{relemma}[1]{ %
  \newcounter{tmpcounter} %
  \setcounter{tmpcounter}{\thelemmacounter} %
  \setcounterref{lemmacounter}{#1} %
  \addtocounter{lemmacounter}{-1}
  \begin{lemma}} { %
  \end{lemma} %
  \setcounter{lemmacounter}{\thetmpcounter}} %
\newcommand*\circled[1]{ %
  \protect\tikz[baseline=(char.base)]{ %
    \protect\node[shape=circle,draw,inner sep=0.2pt] (char) {#1};}} %
\setdefaultitem{\textopenbullet}{}{}{}
\newcommand{\1}[1]{{\normalfont \ensuremath{#1^{\tiny\circled{1}}}}} %
\newcommand{\2}[1]{{\normalfont \ensuremath{#1^{\tiny\circled{2}}}}} %
\newcommand{\proj}[2]{\ensuremath{\left.#1\right|_{#2}}} %
\newcommand{\eps}{\varepsilon}
\DeclareMathOperator{\skel}{skel} %
\DeclareMathOperator{\rep}{rep} %
\DeclareMathOperator{\ex}{exp} %
\DeclareMathOperator{\fixed}{fixed} %
\DeclareMathOperator{\id}{id} %
\DeclareMathOperator{\depth}{depth} %
\DeclareMathOperator{\level}{level} %
\DeclareMathOperator{\resp}{resp} %
\begin{document}

\author{Thomas Bläsius, Ignaz Rutter}
\title{Simultaneous PQ-Ordering \\with Applications to Constrained Embedding
  Problems\thanks{Research was partially supported by
    EUROGIGA project GraDR 10-EuroGIGA-OP-003.}}

\date{Karlsruhe Institute of Technology (KIT)\\%
\texttt{firstname.lastname@kit.edu}}

\maketitle

\begin{abstract}
  In this paper, we define and study the new problem {\sc Simultaneous
    PQ-Ordering}.  Its input consists of a set of PQ-trees, which
  represent sets of circular orders of their leaves, together with a
  set of child-parent relations between these PQ-trees, such that the
  leaves of the child form a subset of the leaves of the parent.  {\sc
    Simultaneous PQ-Ordering} asks whether orders of the leaves of
  each of the trees can be chosen \emph{simultaneously}, that is, for
  every child-parent relation the order chosen for the parent is an
  extension of the order chosen for the child.
  We show that {\sc Simultaneous PQ-Ordering} is $\mathcal
  {NP}$-complete in general and that it is efficiently solvable for a
  special subset of instances, the \emph{\mbox{2-fixed} instances}.
  We then show that several constrained embedding problems can be
  formulated as such 2-fixed instances.

  In particular, we obtain a linear-time algorithm for {\sc Partially
    PQ-Constrained Planarity} for biconnected graphs, a common
  generalization of two recently considered embedding
  problems~\cite{Testingplanarityof-Angelini.etal(10),gkm-ptoei-08},
  and a quadratic-time algorithm for {\sc Simultaneous Embedding with
    Fixed Edges} for biconnected graphs with a connected intersection;
  formerly only the much more restricted case that the intersection is
  biconnected was known to be efficiently
  solvable~\cite{TestingSimultaneousEmbeddability-Angelini.etal(11),TestingSimultaneousPlanarity-Haeupler.etal(10)}.
  Both results can be extended to the case where the input graphs are
  not necessarily biconnected but have the property that each
  cutvertex is contained in at most two non-trivial blocks.  This
  includes for example the case where both graphs have maximum
  degree~5.  Moreover, we give an optimal linear-time algorithm for
  recognition of simultaneous interval graphs, improving upon a
  recent~$O(n^2 \log n)$-time algorithm due to Jampani and
  Lubiw~\cite{SimultaneousIntervalGraphs-Jampani.Lubiw(10)} and show
  that this can be used to also solve the problem of extending partial
  interval representations of graphs with~$n$ vertices and~$m$ edges
  in time~$O(n+m)$, improving a recent result of Klav{\'i}k et
  al.~\cite{ExtendingPartialRepresentations-Klavik.etal(11)}.
\end{abstract}

\section{Introduction}
\label{sec:introduction}

Many types of data can be formulated as graphs, such as UML-diagrams
in software engineering, evolutionary trees in biology, communication
networks or relationships in social networks, to name a few.  For a
human it is nearly impossible to extract useful information out of
such a graph by looking at the pure data.  The way a graph is
interpreted crucially relies on its visualization.  Besides that,
creating a small chip with a large number of transistors on it is
closely related to the problem of drawing a graph with few bends, few
crossings and high resolution (ratio between smallest and largest
distances).  Thus, drawing graphs and particularly drawing planar
graphs is an important field of research.  However, in many
applications one needs not only to find a drawing of a given graph but
also satisfy additional conditions that are for example specified by a
user in an interactive graph drawing system or stem from technical
restrictions, such as port-constraints, requiring wires to attach to a
specific side of a component~\cite{cgmsw-cmldh-11, sfhm-pchld-10,
  gkm-ptoei-08}.  Other examples are
drawings that must be compatible with a previously fixed drawing of a
subpart of the system\cite{Testingplanarityof-Angelini.etal(10),
  kuratowski-typetheoremplanarity-Jelinek.etal(11)}.  Closely related
to constrained embeddings are simultaneous embeddings, which ask for a
given set of graphs sharing some vertices and edges whether they can
be drawn simultaneously, such that the common parts are drawn the same
in all drawings.  This is for example important to compare different
snapshots of a graph that changes dynamically over time.

The constrained embedding problem we consider involves PQ-trees.  In a
PQ-tree every inner node is either a P- or a Q-node and the order of
edges around a P-node can be chosen arbitrarily, whereas the order of
edges around a Q-node is fixed up to reversal;
Figure~\ref{fig:introduction-pq} depicts an example.  Such a PQ-tree
represents a set of possible orders of its leaves.  We consider the
problem {\sc Partially PQ-Constrained Planarity} having as input a
graph $G$ together with a PQ-tree $T(v)$ for every vertex $v$ with a
subset of edges incident to $v$ as leaves.  Thus $T(v)$ restricts the
possible orders of these edges to the orders that are represented by
it.  The question is, whether $G$ has a planar drawing respecting
these restrictions.  Additionally, we consider the problem {\sc
  Simultaneous Embedding with Fixed Edges} ({\sc SEFE}) having two
planar graphs $\1G$ and $\2G$ with a common subgraph $G$ as input,
asking whether planar drawings of $\1G$ and $\2G$ exist, such that the
drawing of $G$ is the same in both; see
Figure~\ref{fig:introduction-sefe} for an example.  Jünger and
Schulz~\cite{IntersectionGraphsin-Juenger.Schulz(09)} show that if $G$
is connected, this amounts to deciding whether $\1G$ and~$\2G$ admit
embeddings whose restrictions to~$G$ coincide.  If $\1G$ and~$\2G$ are
biconnected, then the possible circular orders of edges around each
vertex can be represented by a PQ-tree, and if additionally~$G$ is
connected, we seek orderings for each of the trees that together form
embeddings of $\1G$ and~$\2G$, respectively, and coincide on the
common graph.

As a common abstraction of these problems, we introduce the auxiliary
problem {\sc Simultaneous PQ-Ordering}, which we sketch in the
following.  Given a PQ-tree $T$ with the leaves $L$ and another
PQ-tree $T'$ with leaves $L' \subseteq L$, called a \emph{child} of
$T$, are there orders $O$ and $O'$ of the leaves $L$ and $L'$
represented by the PQ-trees $T$ and $T'$, respectively, such that the
order $O$ \emph{extends} $O'$?  In this case, we say that the relation
is \emph{satisfied}.  This question is fairly easy to answer, but what
happens if $T$ has more than one child or if $T'$ has additional
parents?  The problem {\sc Simultaneous PQ-Ordering} asks for a given
collection of PQ-trees with child-parent relations specified by a DAG,
whether orders for the PQ-trees can be chosen \emph{simultaneously},
such that each child-parent relation is satisfied; see
Section~\ref{sec:simult-pq-order} for a precise definition.

Applications of PQ-trees include also the recognition of
\emph{interval graphs}, i.e., graph that admit a representation where
each vertex corresponds to an interval in the reals, such that two
vertices are adjacent if and only if the corresponding intervals
intersect.  It is thus not surprising that also constrained interval
representation problems can be formulated within the framework of {\sc
  Simultaneous PQ-Ordering}.

\begin{figure}[tb]
  \centering
  \subcaptionbox{ \label{fig:introduction-pq}}
  {\includegraphics[page=1]{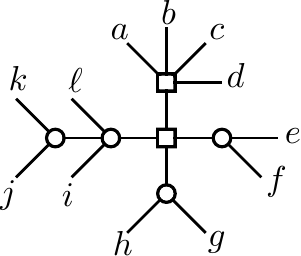}}\hspace{3em}
  \subcaptionbox{ \label{fig:introduction-sefe}}
  {\includegraphics[page=2]{fig/introduction}}
  \caption{(\subref{fig:introduction-pq}) A PQ-tree with leaves $\{a, \dots,
    \ell\}$ where P- and Q-nodes are depicted as circles and boxes,
    respectively.  For example the degree-5 Q-node at the top enforces the
    leaves $a, b, c, h$ to occur in this or its reversed order.  Furthermore,
    the two P-nodes on the left enforce the leaves $i, j, k, \ell$ to appear
    consecutively.  (\subref{fig:introduction-sefe}) Drawings of two graphs
    $\1G$ and $\2G$ on the common node set $\{1, \dots, 8\}$.  Although some of
    the vertices are drawn to similar positions in both drawings, it is hard to
    identify the differences and similarities between the two graphs.  This is
    much easier in the {\sc SEFE} on the right.}
  \label{fig:introduction}
\end{figure}

\subsection{Related Work}
\label{sec:related-work}

Since this work touches several different topics we consider the related work
about constrained embedding problems, simultaneous embedding problems,
PQ-trees and interval graphs separately.

\paragraph{Constrained Embedding.}

Constrained embedding problems in general ask for a given planar graph
whether it can be drawn without crossings in the plane, satisfying
some additional constraints.  Pach and Wenger show that every planar
graph can be draw crossing free even if the vertex positions are
prespecified by the
application~\cite{EmbeddingPlanarGraphs-Pach.Wenger(98)}.
Unfortunately, such a drawing can require linearly many bends per
edge.  Kaufmann and Wiese prove that two bends per edge are sufficient
if only the set of points in the plane is given whereas the mapping of
the vertices to these points can be
chosen~\cite{EmbeddingVerticesat-Kaufmann.Wiese(02)}.  Another
constrained embedding problem is {\sc Partially Embedded Planarity}
asking whether a planar drawing of a subgraph can be extended to a
planar drawing of the whole graph.  Angelini et al. give a linear-time
algorithm for testing {\sc Partially Embedded
  Planarity}~\cite{Testingplanarityof-Angelini.etal(10)} and Jelínek
et al. give a characterization by forbidden substructures similar to
Kuratowski's
theorem~\cite{kuratowski-typetheoremplanarity-Jelinek.etal(11)}.  The
problem {\sc PQ-Constrained Planarity} has as input a planar graph $G$
and a PQ-tree $T(v)$ for every vertex $v$ of $G$, such that the leaves
of $T(v)$ are exactly the edges incident to~$v$.  {\sc PQ-Constrained
  Planarity} asks whether $G$ has a planar drawing such that the order
of incident edges around every vertex $v$ is represented by the
PQ-tree $T(v)$.  Gutwenger et al. show that {\sc PQ-Constrained
  Planarity} can be solved in linear time by simply replacing every
vertex by a gadget and testing planarity of the resulting
graph~\cite{gkm-ptoei-08} (their main result
is a solution for {\sc Optimal Edge Insertion} with these
constraints).  Furthermore, they show how to deal with {\sc
  PQ-Constrained Planarity} if additionally the orientations of some
Q-nodes are fixed.

\paragraph{Simultaneous Embedding.}

Besides {\sc Simultaneous Embedding with Fixed Edges} ({\sc SEFE}) there are
other simultaneous embedding problems such as {\sc Simultaneous Embedding}
only requiring the common vertices to be drawn at the same
position~\cite{SimultaneousEmbeddingof-Erten.Kobourov(05),
  SimultaneousEmbeddingof-DiGiacomo.Liotta(07)}, and {\sc Simultaneous
  Geometric Embedding} requiring the edges to be straight-line
segments~\cite{SimultaneousEmbeddingof-Erten.Kobourov(05),
  simultaneousplanargraph-Brass.etal(07),
  SimultaneousGeometricGraph-Estrella-Balderrama.etal(08),
  Twotreeswhich-Geyer.etal(09), TreeandPath-Angelini.etal(11)}.  We only
consider {\sc SEFE} and there are essentially three types of results dealing
with this problem.  First, graph classes always having a
SEFE~\cite{SimultaneousEmbeddingof-Erten.Kobourov(05),
  SimultaneousEmbeddingof-DiGiacomo.Liotta(07),
  EmbeddingGraphsSimultaneously-Frati(07),
  SPQR-TreeApproachto-Fowler.etal(09),
  IntersectionGraphsin-Juenger.Schulz(09),
  Characterizationsofrestricted-Fowler.etal(11)}.  Second, graph classes
containing counter examples~\cite{simultaneousplanargraph-Brass.etal(07),
  EmbeddingGraphsSimultaneously-Frati(07),
  IntersectionGraphsin-Juenger.Schulz(09),
  Characterizationsofrestricted-Fowler.etal(11)}.  Third, results on the
complexity of the decision problem {\sc SEFE}.  Gassner et al. show that it is
$\mathcal {NP}$-complete to decide whether three or more graphs have a {\sc
  SEFE}~\cite{SimultaneousGraphEmbeddings-Gassner.etal(06)}.  Fowler at
al. show how to solve {\sc SEFE} efficiently, if $\1G$ and $G$ have at most
two and one cycles, respectively~\cite{SPQR-TreeApproachto-Fowler.etal(09)}
and Fowler et al. give an algorithm testing {\sc SEFE} if both graphs are
outerplanar~\cite{Characterizationsofrestricted-Fowler.etal(11)}.  Haeupler et
al. solve {\sc SEFE} in linear time for the case that the common graph is
biconnected~\cite{TestingSimultaneousPlanarity-Haeupler.etal(10)}.  Angelini
et al. obtain the same result with a completely different
approach~\cite{TestingSimultaneousEmbeddability-Angelini.etal(11)a}.  They
additionally solve the case where the common graph is a star.

\paragraph{PQ-Trees.}

PQ-Trees were originally introduced by Booth and
Lueker~\cite{TestingConsecutiveOnes-Booth.Lueker(76)}.  They were
designed to decide whether a set $L$ has the {\sc Consecutive Ones}
property with respect to a family $\mathcal S = \{S_1, \dots, S_k\}$
of subsets $S_i \subseteq L$.  The set $L$ has this property if a
linear order of its elements can be found, such that the elements in
each subset $S_i \in \mathcal S$ appear consecutively.  Booth and
Lueker showed how to solve {\sc Consecutive Ones} in linear time.
Furthermore, they showed that all linear orders of the elements in $L$
in which each subset $S_i \in \mathcal S$ appears consecutively can be
represented by a PQ-tree having the elements in $L$ as leaves.
Besides testing planarity in linear time they were able to decide in
linear time if a given graph is an interval graph.  In the original
approach by Booth and Lueker, the PQ-trees were rooted, representing
linear orders of their leaves.  However, they can also considered to
be unrooted representing circular
orders~\cite{PlanarityAlgorithmsvia-Haeupler.Tarjan(08)}.  Unrooted
PQ-trees are sometimes also called
PC-trees~\cite{PC-Treesvs.PQ-Trees-Hsu(01),
  PQTreesPC-Hsu.McConnell(01), PCtreesand-Hsu.McConnell(03)}.  In most
cases we will use unrooted PQ-trees representing circular orders while
the same results can be achieved for rooted PQ-trees representing
linear orders by simply adding a single leaf (see
Section~\ref{sec:pq-trees} for further details).

\paragraph{Interval Graphs.}

Fulkerson and Gross gave a characterization of interval graphs in
terms of the {\sc Consecutive Ones}
property~\cite{IncidenceMatricesand-Fulkerson.Gross(65)}, enabling
Booth and Lueker to recognize them in linear time using
PQ-trees~\cite{TestingConsecutiveOnes-Booth.Lueker(76)}.  More
recently, Klavík et al. give an $\mathcal O(nm)$ time algorithm
testing whether a given interval representation of a subgraph can be
extended to an interval representation of the whole
graph~\cite{ExtendingPartialRepresentations-Klavik.etal(11)}.  Jampani
and Lubiw show that simultaneous interval graphs can be recognized in
$\mathcal O(n^2 \log n)$
time~\cite{SimultaneousIntervalGraphs-Jampani.Lubiw(10)}.  Two graphs
with common vertices are simultaneous interval graphs, if they have
interval representations representing the common vertices by the same
intervals.

\subsection{Contribution and Outline}
\label{sec:our-contribution}

We first define basic notation and present known results, which we use
throughout this paper, in Section~\ref{sec:preliminaries}.  In
Section~\ref{sec:simult-pq-order}, we first give a precise problem
definition for {\sc Simultaneous PQ-Ordering} and show that it is
$\mathcal{NP}$-complete in general; see
Section~\ref{sec:np-hardn-simult}.  In the remainder of that section,
which forms the main part of this paper, we characterize a subset of
``simple'' instances, the so-called \emph{\mbox{2-fixed} instances}, for
which a solution can be computed efficiently, namely in quadratic
time.  We present several applications in
Section~\ref{sec:applications}, where we show how to formulate various
problems as 2-fixed instances within the framework of {\sc
  Simultaneous PQ-Ordering}, thus yielding efficient algorithms to
solve them.  The algorithms obtained in this way either solve problems
that were not known to be efficiently solvable or significantly
improve over the previously best running times.

In particular, we show that {\sc Partially PQ-Constrained Planarity}
can be solved in linear time for biconnected graphs; see
Section~\ref{sec:partially-pq-constr-planarity}.  Note that this
problem can be seen as a common generalization of the constrained
embedding problems {\sc Partially Embedded
  Planarity}~\cite{Testingplanarityof-Angelini.etal(10),
  kuratowski-typetheoremplanarity-Jelinek.etal(11)} and {\sc
  PQ-Constrained
  Planarity}~\cite{gkm-ptoei-08}.  The
former completely fixes the order of some edges around a vertex, the
latter partially fixes the order of all edges around a vertex.  {\sc
  Partially PQ-Constrained Planarity} partially fixes the order of
some edges.  Similar to the work of Gutwenger et al., we can also
handle the case where some Q-nodes have a fixed orientation.  In
addition to that, {\sc SEFE} can be formulated as a 2-fixed instance
of {\sc Simultaneous PQ-Ordering}, if both graphs are biconnected and
the common graph is connected, thus yielding a quadratic-time
algorithm for this case; see
Section~\ref{sec:simult-embedd-biconn-conn}.  This significantly
extends the results requiring that the common graph is
biconnected~\cite{TestingSimultaneousPlanarity-Haeupler.etal(10),
  TestingSimultaneousEmbeddability-Angelini.etal(11)a} for the
following reason.  If the intersection $G$ of two graphs $\1G$ and
$\2G$ is biconnected, it is completely contained in a single maximal
biconnected component of $\1G$ and $\2G$, respectively.  Thus, testing
{\sc SEFE} for $\1G$ and $\2G$ is equivalent to testing it for these
two biconnected components, since all remaining biconnected components
can be attached if and only if they are planar.  Moreover, we improve
the previously best algorithms for recognizing simultaneous interval
graphs~\cite{SimultaneousIntervalGraphs-Jampani.Lubiw(10)}
from~$O(n^2 \log n)$ to linear(Section~\ref{sec:simult-interv-graphs})
and for extending partial interval
representations~\cite{ExtendingPartialRepresentations-Klavik.etal(11)}
from~$O(nm)$ to $O(n+m)$ (Section~\ref{sec:extend-part-interv}).  We
show that the results for {\sc Partially PQ-constrained Planarity} and
{\sc SEFE} still hold if the input graphs have the property that each
cutvertex is contained in at most two nontrivial blocks in
Section~\ref{sec:thoughts-gener}.  We conclude with some prospects for
future work and some open question in Section~\ref{cha:conclusion}.

Note that all applications follow easily from the main results in
Section~\ref{sec:simult-pq-order}.  The formulations as instances of
{\sc Simultaneous PQ-Ordering} we use are straightforward and can
easily be verified to be 2-fixed, at which point the machinery
developed in the main part of this paper takes over.

\section{Preliminaries}
\label{sec:preliminaries}

In this chapter we define the notation and provide some basic tools we
use in this work.  Section~\ref{sec:graphs-planar-graphs} deals with
graphs and their connectivity, planar graphs and embeddings of planar
graphs, directed acyclic graphs and trees.  Linear and circular orders
and how permutations act on them are considered in
Section~\ref{sec:circ-line-orders}.  PQ-trees are defined in
Section~\ref{sec:pq-trees}.  Furthermore, the relation between rooted
and unrooted PQ-trees is described and operations that can be applied
to them are defined.  In Section~\ref{sec:spqr-trees-basics} we give a
short introduction to SPQR-trees, which are used to represent all
embeddings of a planar graph.  In Section~\ref{sec:spqr-trees} we show
how PQ- and SPQR-trees are related.

\subsection{Graphs, Planar Graphs, DAGs and Trees}
\label{sec:graphs-planar-graphs}

A graph $G=(V, E)$ is \emph{connected} if there is a path from $u$ to
$v$ for every pair of vertices $u, v \in V$.  A \emph{separating
  $k$-set} is a set of $k$ vertices whose removal disconnects $G$.
Separating 1-sets and 2-sets are called \emph{cutvertices} and
\emph{separation pairs}, respectively.  A graph is \emph{biconnected}
if it is connected and does not have a cutvertex and it is
\emph{triconnected} if it additionally does not have a separation
pair.  The maximal connected subgraphs (with respect to inclusion) of
$G$ are called \emph{connected components} and the maximal biconnected
subgraphs are called \emph{blocks}.  A complete subgraph of $G$ is
called a \emph{clique}.  A clique is \emph{maximal} if it is not
contained in a larger clique.  Sometimes we also use the term
\emph{node} instead of vertex to emphasize that it represents a larger
object.

A \emph{drawing} of a graph $G$ is a mapping of every vertex $v$ to a
point $(x_v, y_v)$ in the plane and a mapping of every edge $\{u, v\}$
to a Jordan curve having $(x_u, y_u)$ and $(x_v, y_v)$ as endpoints.
A drawing of $G$ is \emph{planar} if edges do not intersect except at
common endpoints.  The graph $G$ is \emph{planar} if a planar drawing
of $G$ exists.  Consider $G$ to be a connected planar graph.  Every
planar drawing of $G$ splits the plane into several connected regions,
called the \emph{faces} of the drawing.  Exactly one of these faces,
called the \emph{outer face}, is unbounded.  The boundary of each face
is a directed cycle in $G$ and two faces in different drawings are
said to be the same if they have the same boundary.  Additionally,
every planar drawing of $G$ induces for every vertex an order of
incident edges around it and two drawings inducing the same order for
every vertex are called \emph{combinatorially equivalent}.  It is
clear that two combinatorially equivalent drawings have the same
faces, which implies that they have the same topology since $G$ is
connected.  Note that being combinatorially equivalent is an
equivalence relation and the equivalence classes are called
\emph{combinatorial embeddings} of~$G$.  A combinatorial embedding
together with the choice of an outer face is a \emph{planar
  embedding}.  In most cases we do not care about which face is the
outer face, thus we mean a combinatorial embedding by simply saying
embedding.

In a directed graph we call the edges \emph{arcs} and an arc from the
\emph{source}~$u$ to the \emph{target}~$v$ is denoted by $(u, v)$.  A
directed graph $G$ without directed cycles is called \emph{directed
  acyclic graph (DAG)}.  Let $u$ and $v$ be vertices of a DAG $G$ such
that there exists a directed path from $u$ to $v$.  Then $u$ is called
an \emph{ancestor} of~$v$ and~$v$ a \emph{descendant} of $u$.  If the
arc $(u,v)$ is contained in $G$, then $u$ is a \emph{parent} of $v$
and~$v$ is a \emph{child} of $u$.  A vertex $v$ in a DAG $G$ is called
\emph{source} (\emph{sink}) if it does not have parents (children).
Note that this overloads the term source, but it will be clear from
the context which meaning is intended.  A \emph{topological ordering}
of a DAG $G$ is an ordering of its vertices such that $u$ occurs
before $v$ if $G$ contains the arc $(u, v)$.  By saying that a DAG is
processed \emph{top-down} (\emph{bottom-up}) we mean a traversal of
its vertices according to a (reversed) topological ordering.  Let $G$
be a DAG and let $v$ be a vertex.  The \emph{level} of $v$, denoted by
$\level(v)$, is the length of the shortest directed path from a source
to $v$.  The \emph{depth} of $v$, denoted by $\depth(v)$, is the
length of the longest directed path from a source to $v$.  Note that
the level and the depth have in a sense contrary properties.  Let $v$
be a vertex in $G$ and let $u$ be a parent of $v$.  Then the depth of
$u$ is strictly smaller than the depth of $v$ whereas the level
decreases by at most one: $\depth(u) < \depth(v)$; $\level(u) \ge
\level(v) - 1$.  By the level and the depth of the DAG $G$ itself we
mean the largest level and the largest depth any vertex in $G$,
respectively.

An \emph{(unrooted) tree} $T$ is a connected graph without cycles.
The degree-1 vertices of $T$ are called \emph{leaves} and the other
are \emph{inner vertices}.  A tree $T$ together with a special vertex
$r$, called the \emph{root} of $T$, is a \emph{rooted tree}.  A rooted
tree can be seen as DAG by directing all edges towards the leaves of
the tree.  Then the terms top-down, bottom-up, ancestor, descendant,
child, and parent can be defined as for DAGs.  Note that a tree with
$n$ vertices has $m = n - 1$ edges.  However, in general, the ratio
between the number of vertices (or edges) and the number of leaves is
unbound (consider a tree consisting of a single path).  We will use
the following lemma, which for trees that do not contain degree-2
vertices bounds the tree size in terms of the number of leaves.

\begin{lemma}
  \label{lem:tree-without-deg-2}
  A tree with $n_1$ leaves and without degree-2 vertices has at most
  $n_1 - 2$ inner vertices and at most $2n_1 - 3$ edges.
\end{lemma}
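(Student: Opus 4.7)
The plan is a direct double-counting argument on vertex degrees. Let $T$ be a tree with $n_1$ leaves and $n_i$ inner vertices, so that $T$ has $n = n_1 + n_i$ vertices and, being a tree, exactly $m = n - 1$ edges. The only hypotheses I will use are that each leaf has degree exactly~$1$, each inner vertex has degree at least~$2$ by definition, and the assumption of no degree-2 vertex forces every inner vertex to have degree at least~$3$.

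First, I would apply the handshake identity $\sum_v \deg(v) = 2m$ and split the sum according to leaves and inner vertices. Bounding leaves by degree~$1$ and inner vertices from below by degree~$3$, this yields $n_1 + 3n_i \le 2m = 2(n_1 + n_i - 1)$.

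Second, I would solve this inequality for $n_i$, obtaining $n_i \le n_1 - 2$, which is the first claim of the lemma. Substituting into $m = n_1 + n_i - 1$ then gives $m \le n_1 + (n_1 - 2) - 1 = 2n_1 - 3$, which is the second claim.

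There is no real obstacle here; this is essentially a one-line computation. The only thing worth a sanity check is the boundary case $n_1 = 2$, where $T$ is a single edge with $n_i = 0$ inner vertices and $m = 1$ edge, consistent with the bounds $n_i \le 0$ and $m \le 1$. Any tree on at least two vertices has $n_1 \ge 2$, so no smaller case needs separate treatment.
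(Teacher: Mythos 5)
Your proof is correct. Both you and the paper ultimately rely on the same two facts, the handshake identity $\sum_v \deg(v) = 2m$ and the tree relation $m = n - 1$, but you route the argument slightly differently. The paper first reduces to an extremal configuration: it takes a tree on $n_1$ leaves with the maximum possible number of edges, argues via a vertex-splitting operation that in such a tree every inner vertex has degree exactly~3, and then solves the resulting system of \emph{equalities} to get $n_3 = n_1 - 2$ and $m = 2n_1 - 3$. You skip the extremal reduction and work directly with the \emph{inequality} $\deg(v) \ge 3$ for inner vertices, which gives $n_1 + 3n_i \le 2(n_1 + n_i - 1)$ and hence the bounds in one step. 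Your version is a bit leaner because it avoids the auxiliary claim that degree-$\ge 4$ vertices can be split without changing the leaf count, which the paper must justify; the trade-off is that the paper's argument also exhibits the extremal tree attaining the bound. Both are sound, and the underlying computation is identical.
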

\begin{proof}
  Let $T$ be a tree with $n_1$ leaves and the maximum number of edges
  possible.  Then every inner vertex in $T$ has degree~3, because a
  vertex with four incident edges $e_1,\dots,e_4$ could be split into
  two vertices with incident edges $e_1, e_2$ and $e_3, e_4$
  respectively, plus an additional edge connecting them.  Clearly, $T$
  has also the maximum number of inner vertices for the fixed number
  of leaves $n_1$.  Let now~$n$ and~$m$ denote the total number of
  vertices and edges of~$T$, respectively, and let~$n_3$ denote the
  number of vertices of degree~3 in~$T$.  Since every vertex of~$T$
  has either degree~3 or is a leaf, we have $n = n_1 + n_3$.  Since
  $T$ is a tree we have $m = n - 1$ and, by counting the edge
  incidences, we get $2m = n_1 + 3n_3$.  Together these three
  equations imply $n_3 = n_1 - 2$, and therefore $m = 2n_1 - 3$.
\end{proof}

\subsection{Linear and Circular Orders and Permutations}
\label{sec:circ-line-orders}

Let $L$ be a finite set (all sets we consider are finite).  A sequence
$O$ of all elements in $L$ specifies a relation ``$\le$'' on $L$ in
the way that $\ell_1 \le \ell_2$ for $\ell_1 \not= \ell_2 \in L$ if
and only if $\ell_2$ occurs behind $\ell_1$ in~$O$.  Such a relation
is called \emph{linear order} (or also \emph{total order}) and is
identified with the sequence $O$ specifying it.  Let $O_1$ and $O_2$
be two linear orders on $L$ and let $\ell \in L$ be an arbitrary
element.  Let further $O_i'$ (for $i = 1, 2$) be the order that is
obtained from $O_i$ by concatenating the smallest suffix containing
$\ell$ with the largest prefix not containing~$\ell$.  We call $O_1$
and $O_2$ \emph{circularly equivalent} if $O_1'$ and $O_2'$ are the
same linear order.  Not that this is a equivalence relation not
depending on the chosen element $\ell$.  The equivalence classes are
called \emph{circular orders}.  For example for $L = \{a, \dots, e\}$
the orders $O_1 = baedc$ and $O_2 = dcbae$ are circularly equivalent
and thus define the same circular order since $O_1' = O_2' = aedcb$,
if we choose $\ell = a$.  In most cases we consider circular orders.
Unless stated otherwise, we refer to circular orders by simply writing
orders.  Note that a linear order can be seen as a graph with vertex
set $L$ consisting of a simple directed path, whereas a circular order
corresponds to a graph consisting of a simple directed cycle
containing $L$ as vertices, see Figure~\ref{fig:preliminaries-orders}
for an example.  Let $L$ be a set and let $O$ be a circular order of
its elements.  Let further $S \subseteq L$ be a subset and let $O'$ be
the circular order on $S$ that is induced by $O$.  Then $O'$ is a
\emph{suborder} of $O$ and $O$ is an \emph{extension} of $O'$.  Note
that $S$ does not really need to be a subset of $L$.  Instead it can
also be an arbitrary set together with an injective map $\varphi : S
\rightarrow L$.  We overload the terms suborder and extension for this
case by calling an order $O'$ of $S$ a \emph{suborder} of $O$ and $O$
an \emph{extension} of $O'$ if $\varphi(O')$ is a suborder of $O$,
where $\varphi(O')$ denotes the order obtained from $O'$ by applying
$\varphi$ to each element.

\begin{figure}[tb]
  \centering \subcaptionbox{ \label{fig:preliminaries-orders}}
  {\includegraphics[page=1]{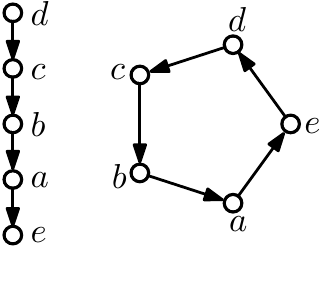}}\hspace{3em}
  \subcaptionbox{ \label{fig:preliminaries-permutations}}
  {\includegraphics[page=2]{fig/preliminaries-orders-permutations}}
  \caption{(\subref{fig:preliminaries-orders}) The interpretation of
    the linear and circular order $dcbae$ as simple path and simple
    cycle, respectively.  (\subref{fig:preliminaries-permutations})
    The permutation $\varphi = (aec) \circ (bfd)$ on the left can be
    seen as as a clockwise rotation by~2 of the circular order
    $abcdef$, and thus is order preserving, whereas the permutation
    $\varphi = (af) \circ (be) \circ (cd)$ in the middle is order
    reversing.  However, $\varphi = (af) \circ (be) \circ (cd)$ is not
    only order reversing but also order preserving (rotation by~3)
    with respect to the order $abcfed$ as shown on the right.  The
    permutations $\varphi$ are depicted as thin arrows with empty
    arrowheads and the different permutation cycles are distinct by
    solid, dashed and dotted lines.}
  \label{fig:preliminaries-orders-permutations}
\end{figure}

In the following, we consider permutations on the set $L$ and provide
some basic properties on how these permutations act on circular orders
of $L$.  Let $L$ be a set and let $\varphi : L \rightarrow L$ be a
permutation.  The permutation $\varphi$ can be decomposed into $r$
disjoint \emph{permutation cycles} $\varphi = (\ell_1 \varphi(\ell_1)
\dots \varphi^{k_1}(\ell_1)) \circ \dots \circ (\ell_r \varphi(\ell_r)
\dots \varphi^{k_r}(\ell_r))$.  We call $k_i$ the \emph{length} of the
cycle $(\ell_i \varphi(\ell_i) \dots \varphi^{k_i}(\ell_i))$.
Fixpoints for example form a permutation cycle of length~1.  We can
compute this decomposition by starting with an arbitrary element
$\ell$ and applying $\varphi$ iteratively until we reach $\ell$ again.
Then we continue with an element not contained in any permutation
cycle so far to obtain the next cycle.  Now consider a circular order
$O$ of the elements in $L$.  The permutation $\varphi$ is called
\emph{order preserving} with respect to $O$ if $\varphi(O) = O$.  It
is called \emph{order reversing} with respect to $O$ if $\varphi(O)$
is obtained by reversing~$O$.  Note that for a fixed order $O$ the
order preserving and order reversing permutations are exactly the
rotations and reflections of the dihedral group, respectively (the
dihedral group is the group of rotations and reflections on a regular
$k$-gon).  If we interpret $O$ as a graph as mentioned above, that is,
a graph with vertex set $L$ consisting of a simple directed cycle, we
obtain that $\varphi$ is order preserving with respect to $O$ if it is
a graph isomorphism on this cycle, whereas the cycle is reversed if
$\varphi$ is order reversing with respect to $O$;
Figure~\ref{fig:preliminaries-permutations} depicts this
interpretation for an example.  We say that $\varphi$ is \emph{order
  preserving} or \emph{order reversing} if it is order preserving or
order reversing with respect to at least one order $O$.  In this
setting the order is not fixed and we want to characterize for a given
permutation if it is order preserving or order reversing and
additionally we want to find an order that is preserved or reversed,
respectively.  Note that not fixing the order has for example the
effect, that the same permutation $\varphi$ can be a rotation with
respect to one order and a reflection with respect to another, which
means that it can be order preserving and order reversing at the same
time.

\begin{lemma}
  \label{lem:permutation-order-preserving}
  A permutation $\varphi$ on the set $L$ is order preserving if and
  only if all its permutation cycles have the same length.
\end{lemma}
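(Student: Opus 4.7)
The plan is to prove both directions by thinking of a circular order as a directed cycle graph (as the paper already suggests) and of an order-preserving permutation as a rotation of that cycle.

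For the forward direction I would assume $\varphi$ is order preserving with respect to some circular order $O$ of $L$, i.e.\ $\varphi(O)=O$. Interpreting $O$ as a directed simple cycle on $n=|L|$ vertices, the identity $\varphi(O)=O$ says that $\varphi$ is a graph automorphism of that directed cycle, hence a rotation by some fixed offset $k$. The orbit of any element under rotation by $k$ on an $n$-cycle has length exactly $n/\gcd(n,k)$, independently of the chosen element, so all permutation cycles of $\varphi$ have the same length. Here one small detail to check is that any directed-cycle automorphism really is a rotation; this is immediate because such an automorphism is determined by the image of a single vertex.

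For the backward direction, assume every permutation cycle of $\varphi$ has the same length $k$. Write the cycles as
\[
C_i = \bigl(\ell_i^0,\ell_i^1,\dots,\ell_i^{k-1}\bigr),\qquad i=1,\dots,r,
\]
with $\varphi(\ell_i^j)=\ell_i^{j+1}$ (superscripts mod~$k$). I would then exhibit the circular order obtained by interleaving the cycles,
\[
O \;=\; \ell_1^0\,\ell_2^0\,\cdots\,\ell_r^0\;\;\ell_1^1\,\ell_2^1\,\cdots\,\ell_r^1\;\;\cdots\;\;\ell_1^{k-1}\,\ell_2^{k-1}\,\cdots\,\ell_r^{k-1},
\]
and observe that applying $\varphi$ elementwise shifts every ``block'' $\ell_1^j\cdots\ell_r^j$ to the next block $\ell_1^{j+1}\cdots\ell_r^{j+1}$, with the last block wrapping to the first. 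Hence $\varphi(O)$ is a cyclic shift of $O$ by $r$ positions and therefore defines the same circular order. Thus $\varphi$ is order preserving with respect to $O$.

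I expect no serious obstacle here; the only slightly delicate point is making sure we think of $O$ as a \emph{circular} order throughout, so that the rotation by $r$ in the backward direction indeed yields the same equivalence class, and that the forward direction genuinely uses the directed cycle structure (not the linear one) to conclude that the automorphism must be a rotation. Both are handled by the interpretation of circular orders as directed cycles recalled in Section~\ref{sec:circ-line-orders}.
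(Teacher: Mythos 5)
Your proof is correct. The backward direction is essentially the paper's: you construct the same interleaved circular order (blocks of cycle representatives, shifted through successive powers of $\varphi$) and observe that $\varphi$ acts on it as a cyclic shift by $r$, hence preserves the circular order. For the forward direction you take a genuinely different route. The paper argues element-by-element: picking two consecutive elements $\ell_i,\ell_{i+1}$ of $O$ that lie in cycles of different lengths $k_i<k_{i+1}$, it observes that $\varphi^{k_i}$ fixes $\ell_i$ but moves $\ell_{i+1}$, so $\varphi^{k_i}(O)\neq O$, contradicting $\varphi(O)=O$. You instead invoke the structural fact that automorphisms of a directed $n$-cycle are exactly the rotations, and then use that every orbit under rotation by an offset $k$ has the same length $n/\gcd(n,k)$. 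Your version makes the dihedral/cyclic-group picture (which the paper only gestures at informally) carry the argument and is arguably cleaner conceptually; the paper's is more elementary and self-contained, needing only the local observation about consecutive elements rather than the classification of cycle automorphisms. Both are valid.
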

\begin{proof}
  Assume $\varphi$ consists of $r$ permutation cycles of length $k$,
  let $\ell_i$ be an element in the $i$th permutation cycle.  Then
  $\varphi$ is order preserving with respect to the following circular
  order.
  \[\ell_1 \dots
  \ell_r \quad \varphi(\ell_1) \dots \varphi(\ell_r) \quad \dots \quad
  \varphi^k(\ell_1) \dots \varphi^k(\ell_k)\]

  Assume we have a circular order $O = \ell_1 \dots \ell_n$ such that
  $\varphi(O) = O$.  We show that the permutation cycles of two
  consecutive elements $\ell_i$ and $\ell_{i+1}$ have the same size.
  This claim holds if $\ell_i$ and $\ell_{i+1}$ are contained in the
  same permutation cycle.  Assume they are in different permutation
  cycles with lengths $k_{i}$ and $k_{i+1}$, respectively, such that
  $k_i < k_{i+1}$.  Then $\varphi^{k_i}(\ell_{i+1}) \not= \ell_{i+1}$
  is not the successor of $\varphi^{k_i}(\ell_i) = \ell_i$ in $O$.
  Thus, $\varphi^{k_i}(O)$ cannot be the same circular order $O$ and
  hence $\varphi$ is not order preserving, which is a contradiction.
\end{proof}

\begin{lemma}
  \label{lem:permutation-order-reverting}
  A permutation $\varphi$ on the set $L$ is order reversing if and
  only if all its permutation cycles have length~2, except for at most
  two cycles with length~1.
\end{lemma}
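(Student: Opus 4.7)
The plan is to prove both directions by working concretely with a circular order reversed by $\varphi$.

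For the ``only if'' direction, I would fix a circular order $O = \ell_0 \ell_1 \dots \ell_{n-1}$ reversed by $\varphi$. The equality of $\varphi(O)$ with the reverse of $O$ as \emph{circular} orders is equivalent to the existence of an integer $j$ (encoding the rotational offset) such that
\[\varphi(\ell_i) = \ell_{(j-i) \bmod n} \qquad \text{for every } i \in \{0, \dots, n-1\}.\]
Applying $\varphi$ twice immediately yields $\varphi^2(\ell_i) = \ell_{(j-(j-i)) \bmod n} = \ell_i$, so $\varphi^2 = \id$, which forces every permutation cycle of $\varphi$ to have length $1$ or $2$. The fixpoints of $\varphi$ are exactly the $\ell_i$ with $2i \equiv j \pmod{n}$, a linear congruence having at most $\gcd(2,n) \le 2$ solutions in $\{0, \dots, n-1\}$. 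This bounds the number of length-$1$ cycles by two.

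For the ``if'' direction, write $\varphi$ as a product of $r$ transpositions $(a_1, b_1), \dots, (a_r, b_r)$ together with $s \in \{0,1,2\}$ fixpoints. I would exhibit an explicit order reversed by $\varphi$ in each case: for $s=0$ the order $a_1 a_2 \dots a_r b_r b_{r-1} \dots b_1$; for $s=1$ with fixpoint $c_1$ the order $c_1 a_1 a_2 \dots a_r b_r b_{r-1} \dots b_1$; and for $s=2$ with fixpoints $c_1, c_2$ the order $c_1 a_1 \dots a_r c_2 b_r \dots b_1$. In each case, applying $\varphi$ sends the sequence to a cyclic rotation of its reverse, which can be verified by direct inspection.

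The main obstacle is the ``only if'' direction, where one must correctly convert the equality of $\varphi(O)$ and the reverse of $O$ \emph{as circular orders} into the indexed formula above, remembering to introduce the rotational offset $j$; without this offset the formula would be wrong and the claim $\varphi^2 = \id$ would not follow directly. Once the formula is established, both $\varphi^2 = \id$ and the fixpoint bound are essentially immediate, and the ``if'' direction is a routine case analysis that exploits the palindromic structure of the constructed orders.
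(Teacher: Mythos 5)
Your proof is correct and follows the same overall structure as the paper's: the ``only if'' direction uses the observation that an order-reversing permutation must be an involution with at most two fixpoints, and the ``if'' direction exhibits explicit reversed orders for the cases of $0$, $1$, or $2$ fixpoints. The only real difference is that you derive the dihedral-group facts (involution, at most two fixpoints) from scratch via the indexed formula $\varphi(\ell_i) = \ell_{(j-i) \bmod n}$, whereas the paper simply invokes that a reflection of a cycle squares to the identity and has at most two fixpoints; your version is more self-contained but does not change the logical route.
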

\begin{proof}
  Assume we have $\varphi = (\ell_1 \ell_1') \circ \dots \circ (\ell_r
  \ell_r')$, $\varphi = (\ell) \circ (\ell_1 \ell_1') \circ \dots
  \circ (\ell_r \ell_r')$ or $\varphi = (\ell) \circ (\ell') \circ
  (\ell_1 \ell_1') \circ \dots \circ (\ell_r \ell_r')$.  Then
  $\varphi$ reverses the orders, $\ell_1 \dots \ell_r \ell_r' \dots
  \ell_1'$, $\ell_1 \dots \ell_r \ell \ell_r' \dots \ell_1'$ and
  $\ell_1 \dots \ell_r \ell \ell_r' \dots \ell_1' \ell'$,
  respectively.

  Now assume we have an order $O$ such that $\varphi$ is order
  reversing with respect to $O$, that is it is a reflection in the
  dihedral group defined by $O$.  Thus, $\varphi^2$ is the identity
  yielding that $\varphi$ cannot contain a permutation cycle of length
  greater than~2.  Furthermore, a reflection has at most two
  fixpoints.
\end{proof}

It is clear that the characterizations given in
Lemma~\ref{lem:permutation-order-preserving} and
Lemma~\ref{lem:permutation-order-reverting} can be easily checked in
linear time.  Additionally, from the proofs of both lemmas it is clear
how to construct an order that is preserved or reversed if the given
permutation is order preserving or order reversing, respectively.

\subsection{PQ-Trees}
\label{sec:pq-trees}

Given an unrooted tree $T$ with leaves $L$ having a fixed circular
order of edges around every vertex, that is, having a fixed
combinatorial embedding, the circular order of the leaves (as they
occur along the outer face of the embedding) is also fixed.  In an
\emph{unrooted PQ-tree} for some inner nodes, the \emph{Q-nodes}, the
circular order of incident edges is fixed up to reversal, for the
other nodes, the \emph{P-nodes}, this order can be chosen arbitrarily.
Hence, an unrooted PQ-tree represents a set of circular orders of its
leaves.  Given a set $L$, a set of circular orders $\mathcal L$ of $L$
is called \emph{PQ-representable}, if there is an unrooted PQ-tree
with leaves $L$ representing it.  Formally, the empty set, saying that
no order is possible, is represented by the \emph{null tree}, whereas
the \emph{empty tree} has the empty set as leaves and represents the
set containing only the empty order.  A simple example for an unrooted
PQ-tree is shown in Figure~\ref{fig:PQ-trees-unrooted-rooted-a}.  Note
that not every set of orders is PQ-representable; for example every
PQ-representable set of orderings must be closed under reversal.

In the same way, we can define a \emph{rooted PQ-tree} representing
sets of linear orders by replacing \emph{circular} by \emph{linear}
and additionally choosing an inner node of the PQ-tree as root.  There
is an equivalence between unrooted and rooted PQ-trees in the
following sense.  Let $T$ be an unrooted PQ-tree with leaves $L$,
representing the set of circular orders~$\mathcal L$.  If we choose
one leaf $\ell\in L$ to be the \emph{special leaf}, every circular
order in $\mathcal L$ can be seen as a linear order of $L' := L -
\ell$ by breaking the cycle at $\ell$.  Since every circular order in
$\mathcal L$ yields a different linear order, we obtain a bijection to
a set of linear orders $\mathcal L'$.  We can construct a rooted
PQ-tree $T'$ with the leaves $L'$ representing $\mathcal L'$ as
follows.  First, we choose the special leaf $\ell$ to be the root of
$T$.  Then, for every Q-node we obtain a linear order from the given
circular order by breaking the cycle at the (unique) parent.  Finally,
we remove $\ell$ and choose its (unique) child as the new root.
Hence, given an unrooted PQ-tree, we can work with its rooted
equivalent instead, by choosing one leaf to be the special leaf; see
Figure~\ref{fig:PQ-trees-unrooted-rooted} for an example.  Conversely,
rooted PQ-trees can be represented by unrooted ones by inserting a
single leaf.  In most cases we will work with unrooted PQ-trees
representing sets of circular orders.  Unless stated otherwise, we
thus refer to circular orders and unrooted PQ-trees if we write orders
and PQ-trees, respectively.

\begin{figure}[tb]
  \centering \subcaptionbox{ \label{fig:PQ-trees-unrooted-rooted-a}}
  {\includegraphics[page=1]{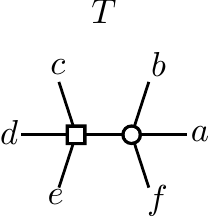}}\hspace{3em}
  \subcaptionbox{ \label{fig:PQ-trees-unrooted-rooted-b}}
  {\includegraphics[page=2]{fig/PQ-trees-unrooted-rooted}}
  \caption{(\subref{fig:PQ-trees-unrooted-rooted-a}) An unrooted
    PQ-tree $T$ with leaves $L=\{a,\dots,f\}$, where P- and Q-nodes
    are drawn as circles and boxes, respectively.  By choosing an
    order for $a, b, f$ and concatenating it with $cde$ or $edc$, we
    obtain all circular orders in~$\mathcal L$.
    (\subref{fig:PQ-trees-unrooted-rooted-b}) Choosing $a$ as the
    special leaf yields the rooted PQ-tree $T'$ with leaves
    $L'=\{b,\cdots,f\}$.  By choosing an arbitrary order for
    $b,\square,f$ where $\square$ stands for $cde$ or $edc$, we obtain
    all orders in $\mathcal L'$.  Note that this simply means to break
    the cyclic orders in $\mathcal L$ at the special leaf~$a$.}
  \label{fig:PQ-trees-unrooted-rooted}
\end{figure}

PQ-trees were introduced by Booth and
Lueker~\cite{TestingConsecutiveOnes-Booth.Lueker(76)} in the rooted
version.  Let $L$ be a finite set and let $\mathcal S = \{S_1, \dots,
S_k\}$ be a family of subsets $S_i \subseteq L$.  Booth and Lueker
showed that the set $\mathcal L$ containing all linear orders in which
the elements in each set $S_i$ appear consecutively is
PQ-representable.  Note that $\mathcal L$ could be the empty set,
since in no order all subsets $S_i$ appear consecutively, then
$\mathcal L$ is represented by the null tree.  This result can be
easily extended to unrooted PQ-trees and circular orders in which the
subsets~$\mathcal S$ appear consecutively, which will become clearer
in a moment.

As mentioned above, not every set of orders $\mathcal L$ is
PQ-representable, but we will see three operations on sets of orders
that preserve the property of being PQ-representable.  Given a subset
$S\subseteq L$, the \emph{projection} of $\mathcal L$ to $S$ is the
set of orders of $S$ achieved by restricting every order in $\mathcal
L$ to $S$.  The \emph{reduction} with $S$ is the subset of $\mathcal
L$ containing the orders where the elements of $S$ appear
consecutively.  Given two sets of orders $\mathcal L_1$ and $\mathcal
L_2$ on the same set $L$, their \emph{intersection} is simply
$\mathcal L_1 \cap \mathcal L_2$.  That projection, reduction and
intersection preserve the property of being PQ-representable can be
shown constructively.  But first we introduce the following notation,
making our life a bit easier.  Let $T$ be a PQ-tree with leaf set $L$,
representing $\mathcal L$, and let $\mu$ be an inner node with
incident edges $\varepsilon_1,\dots,\varepsilon_k$.  Removing
$\varepsilon_i$ splits $T$ into two components.  We say that the
leaves contained in the component not containing $\mu$ \emph{belong}
to $\varepsilon_i$ with respect to $\mu$, and we denote the set of
these leaves by $L_{\varepsilon_i,\mu}$.  In most cases it is clear
which node $\mu$ we refer to, so we simply write $L_{\varepsilon_i}$.
Note that the sets $L_{\varepsilon_i}$ form a partition of $L$.

\begin{compactdesc}
\item[Projection] Let $T$ be a PQ-Tree with leaves $L$, representing
  the set of orders $\mathcal L$.  The projection to $S\subseteq L$ is
  represented by the PQ-tree $T'$ that is obtained form $T$ by
  removing all leaves not contained in $S$ and simplifying the result,
  where simplifying means, that former inner nodes now having degree~1
  are removed iteratively and that degree-2 nodes together with both
  incident edges are iteratively replaced by single edges.  We denote
  the tree resulting from the projection of $T$ to $S$ by $\proj T S$
  and we often call $\proj T S$ itself the projection of $T$ to $S$.
\item[Reduction] Recall that the reduction with a set $S$ reduces a
  set of orders to these orders in which all elements in $S$ appear
  consecutively.  The reduction can be seen as the operation PQ-trees
  were designed for by Booth and
  Lueker~\cite{TestingConsecutiveOnes-Booth.Lueker(76)}.  They showed
  for a rooted PQ-tree $T$ representing the linear orders $\mathcal L$
  that the reduction to $S$ is again PQ-representable and the PQ-tree
  representing it can be computed in $\mathcal O(|L|)$ time.  For an
  unrooted PQ-tree $T$ we can consider the rooted PQ-tree $T'$ instead
  by choosing $\ell \in L$ as special leaf.  Since the reduction
  with~$L$ and~$S \setminus L$ are equivalent, we may assume without
  loss of generality that~$\ell \notin S$, and we obtain the reduction
  of~$T$ by reducing~$T'$ with~$S$, reinserting~$\ell$ and
  unrooting~$T'$ again.  This shows for a family of subsets $\mathcal
  S = \{S_1, \dots, S_k\}$ that the set containing all circular orders
  in which each subset $S_i \subseteq L$ appears consecutively can be
  represented by an unrooted PQ-tree $T$.  Thus, applying a reduction
  with $S$ to a given PQ-tree $T$ can be seen as adding the subset $S$
  to $\mathcal S$.  Therefore, we denote the result of the reduction
  of $T$ with $S$ by $T + S$ and we often call $T + S$ itself the
  reduction of $T$ with $S$.
\item[Intersection] For an inner node $\mu$, all leaves
  $L_{\varepsilon}$ belonging to an incident edge $\varepsilon$ appear
  consecutively in every order contained in~$\mathcal L$.
  Furthermore, if $\mu$ is a Q-node with two consecutive incident
  edges $\varepsilon$ and $\varepsilon'$, all leaves in
  $L_{\varepsilon}\cup L_{\varepsilon'}$ need to appear consecutively.
  On the other hand, if we have an order of $L$ satisfying these
  conditions for every inner node, it is contained in~$\mathcal L$.
  Hence, $T$ can be seen as a sequence of reductions applied to the
  set of all orders, which is represented by the star with a P-node as
  center.  Now, given two unrooted PQ-trees $T_1$ and $T_2$ with the
  same leaves, we obtain their intersection by applying the sequence
  of reductions given by $T_1$ to~$T_2$.  Note that the size of all
  these reductions can be quadratic in the size of~$T_1$.  However
  Booth showed how they can be applied consuming time linear in the
  size of $T_1$ and $T_2$~\cite{PQ-TreeAlgorithms-Booth(75)}.  We
  denote the intersection of $T_1$ and $T_2$ by $T_1 \cap T_2$.
\end{compactdesc}

Let $\proj T S$ be the projection of $T$ to $S\subseteq L$.  The
extension of an order of $S$ represented by $\proj T S$ to an order of
$L$ represented by $T$ is straightforward.  An inner node in $T$ is
either contained in $\proj T S$ or it was removed in the
simplification step.  If a Q-node in $T$ is also contained in $\proj T
S$, its orientation is determined by the orientation chosen in $\proj
T S$ and we call it \emph{fixed}, otherwise its orientation can be
chosen arbitrarily and we call it \emph{free}.  For a P-node not
contained in $\proj T S$, the order of incident edges can be chosen
arbitrarily.  If a P-node is contained in $\proj T S$, every incident
edge is either also contained, was removed or replaced (and the
replacement was not removed).  The order of the contained and replaced
edges is fixed, and the removed edges can be inserted arbitrarily.  We
call the removed edges (and the edges incident to removed P-nodes)
\emph{free} and all other edges \emph{fixed}.

Let $T + S$ be the reduction of a PQ-tree $T$ with leaves $L$ with the
subset $S \subseteq L$.  Choosing an order in the reduction $T + S$ of
course determines an order of the whole leaf set~$L$. Hence, it
determines the order of incident edges for every inner node in $T$.
For every Q-node $\mu$ in $T$ there exists exactly one Q-node in $T +
S$ determining its orientation, we call it the \emph{representative}
of $\mu$ with respect to the reduction with $S$ and denote it by
$\rep_S(\mu)$, where the index is omitted, if it is clear from the
context.  Note that one Q-node in $T + S$ can be the representative of
several Q-nodes in $T$.  For a P-node $\mu$ we cannot find such a
representative in $T + S$ since it may depend on several nodes in $T +
S$.  However, if we consider a P-node $\mu'$ in $T + S$ there is
exactly one P-node $\mu$ in $T$ that depends on $\mu'$.  We say that
$\mu'$ \emph{stems} from this P-node $\mu$.

\begin{figure}[bt]
  \centering
  \includegraphics[page=1]{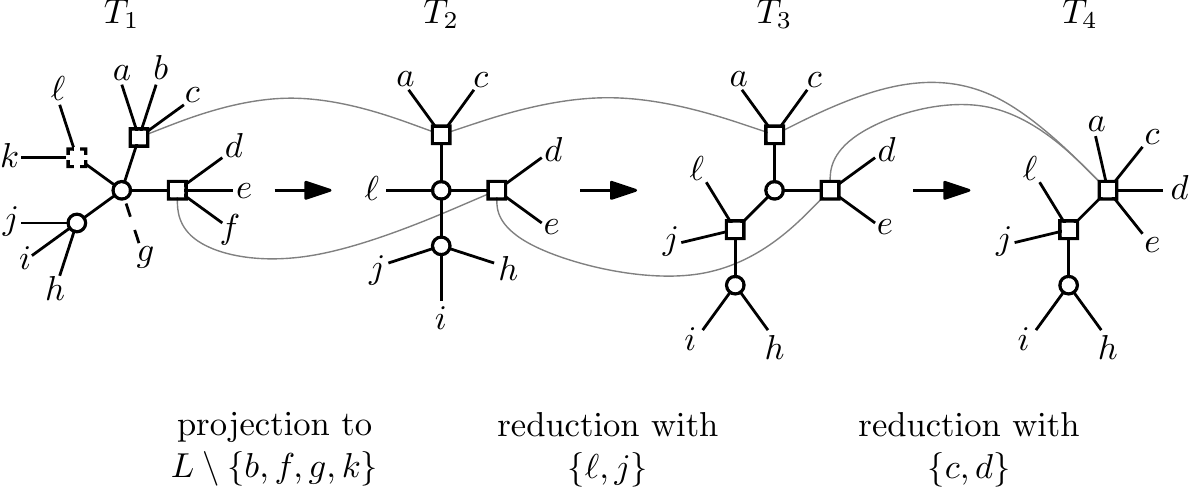}
  \caption{We start with the PQ-tree $T_1$ on the left and project it
    to $L\setminus \{b,f,g,k\}$ yielding $T_2$.  There is one Q-node
    and one edge incident to a P-node, both drawn dashed, that do not
    appear in $T_2$ and hence are free.  The trees $T_3$ and $T_4$ are
    obtained by applying reductions with $\{\ell,j\}$ and $\{c,d\}$ to
    $T_2$.  Note that the arrows (and even their transitive closure)
    can be interpreted as child-parent relation between the PQ-trees.
    Every fixed Q-node has a representative depicted by gray lines,
    whereas it is not so easy to find something similar for the
    P-nodes.}
  \label{fig:PQ-trees-restrictions-fixed-edges}
\end{figure}

The considerations concerning a PQ-tree $T$ with leaves $L$ together
with another PQ-tree $T'$ with leaves $L' \subseteq L$ that is a
projection or a reduction of $T$ can of course be extended to the case
where $T'$ is obtained from $T$ by a projection followed by a sequence
of reductions.  This can be further generalized to the case where $T$
and $T'$ are arbitrary PQ-trees with leaves $L$ and $L'$ with an
injective map $\varphi : L' \rightarrow L$.  Note that the injective
map ensures that $L'$ can be treated as a subset of $L$.  In this
case, we call $T'$ a \emph{child} of $T$ and $T$ a \emph{parent} of
$T'$.  Choosing an order for the leaves $L$ of $T$ induces an order
for the leaves $L'$ of $T'$, whereas an order of $L'$ only partially
determines an order of $L$.  Now we are interested in all the orders
of the leaves $L$ that are represented by $T$ and additionally induce
an order for the leaves $L'$ that is represented by $T'$.  Informally
spoken, we want to find orders represented by $T'$ and $T$
simultaneously, fitting to one another.  It is clear that $T'$ can be
replaced by $T' \cap \proj T {L'}$ without changing the possible
orders, since each possible order of the leaves $L'$ is of course
represented by the projection $\proj T {L'}$ of $T$ to $L'$.  Hence
this general case reduces to the case where $T'$ is obtained from $T$
by applying a projection and a sequence of reductions.  We can extend
the notation of free and fixed nodes to this situation as follows.  An
edge incident to a P-node in the parent $T$ is free with respect to
the child $T'$ if and only if it is free with respect to the
projection $\proj T {L'}$.  If all edges are free, the whole P-node is
called free.  Similarly, a Q-node is free with respect to $T'$ if and
only if it is free with respect to $\proj T {L'}$.  Again, every fixed
Q-node $\mu$ has a representative $\rep(\mu)$ in $T'$ (which is also a
Q-node).  Figure~\ref{fig:PQ-trees-restrictions-fixed-edges} shows an
example PQ-tree together with a projection and a sequence of
reductions applied to it.

\subsection{SPQR-Trees}
\label{sec:spqr-trees-basics}

Consider a biconnected planar graph $G$ and a split pair $\{s,t\}$,
that is, $G - s - t$ is disconnected.  Let $H_1$ and $H_2$ be two
subgraphs of $G$ such that $H_1 \cup H_2 = G$ and $H_1 \cap H_2 = \{s,
t\}$.  Consider the following tree containing the two nodes $\mu_1$
and $\mu_2$ associated with the graphs $H_1 + \{s, t\}$ and $H_2 +
\{s, t\}$, respectively.  These graphs are called \emph{skeletons} of
the nodes $\mu_i$, denoted by $\skel(\mu_i)$ and the special edge
$\{s, t\}$ is said to be a \emph{virtual edge}.  The two nodes $\mu_1$
and $\mu_2$ are connected by an edge, or more precisely, the
occurrence of the virtual edges $\{s, t\}$ in both skeletons are
linked by this edge.  Now a combinatorial embedding of $G$ uniquely
induces a combinatorial embedding of $\skel(\mu_1)$ and
$\skel(\mu_2)$.  Furthermore, arbitrary and independently chosen
embeddings for the two skeletons determine an embedding of $G$, thus
the resulting tree can be used to represent all embeddings of $G$ by
the combination of all embeddings of two smaller planar graphs.  This
replacement can of course be applied iteratively to the skeletons
yielding a tree with more nodes but smaller skeletons associated with
the nodes.  Applying this kind of decomposition in a systematic way
yields the SPQR-tree as introduced by Di Battista and
Tamassia~\cite{On-LineMaintenanceof-DiBattista.Tamassia(96),
  On-LinePlanarityTesting-DiBattista.Tamassia(96)}.  The SPQR-tree
$\mathcal T$ of a biconnected planar graph $G$ contains four types of
nodes.  First, the P-nodes having a bundle of at lest three parallel
edges as skeleton and a combinatorial embedding is given by any order
of these edges.  Second, the skeleton of an R-node is triconnected
having exactly two embeddings, and third, S-nodes have a simple cycle
as skeleton without any choice for the embedding.  Finally, every edge
in a skeleton representing only a single edge in the original graph
$G$ is formally also considered to be a virtual edge linked to a
Q-node in $\mathcal T$ representing this single edge.  Note that all
leaves of the SPQR-tree $\mathcal T$ are Q-nodes.  Besides from being
a nice way to represent all embeddings of a biconnected planar graph,
the SPQR-tree has only linear size and Gutwenger and Mutzel showed how
to compute it in linear
time~\cite{LinearTimeImplementation-Gutwenger.Mutzel(01)}.
Figure~\ref{fig:spqr-tree} shows a biconnected planar graph together
with its SPQR-tree.

\begin{figure}[tb]
  \centering %
  \subcaptionbox{ \label{fig:spqr-tree}}
  {\includegraphics[page=1]{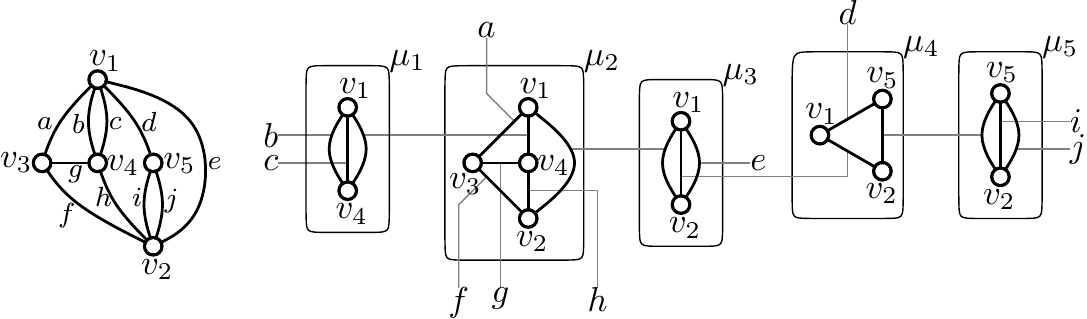}}\vspace{1em}
  \subcaptionbox{ \label{fig:PQ-trees-with-common-nodes-from-SPQR}}
  {\includegraphics[page=1]{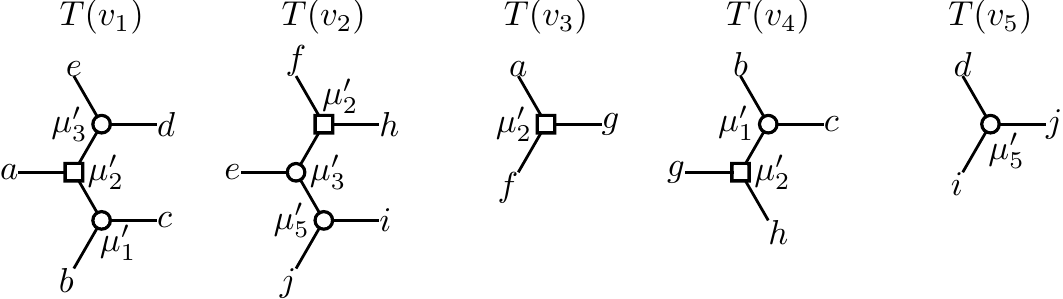}}
  \caption{(\subref{fig:spqr-tree}) A biconnected planar graph on the
    left and its SPQR-tree on the right.  The Q-nodes are depicted as
    single letters, whereas $\mu_1$, $\mu_3$ and $\mu_5$ are P-nodes,
    $\mu_2$ is an R-node and $\mu_4$ is an S-node.  The embeddings
    chosen for the skeletons yield the embedding shown for the graph
    on the left.
    (\subref{fig:PQ-trees-with-common-nodes-from-SPQR})~The embedding
    trees of the five vertices, where the inner nodes are named
    according to the nodes in the SPQR-tree they stem from.}
  \label{fig:spqr-tree-pq-trees}
\end{figure}

\subsection{Relation between PQ- and SPQR-Trees}
\label{sec:spqr-trees}

Given the SPQR-tree of a biconnected graph, it is easy to see that the
set of all possible orders of edges around a vertex is
PQ-representable.  For a vertex $v$ and a P-node in the SPQR-tree
containing $v$ in its skeleton, every virtual edge represents a set of
edges incident to $v$ that need to appear consecutively around $v$;
the order of the sets can be chosen arbitrarily.  For an R-node in the
SPQR-tree containing $v$, again every virtual edge represents a set of
edges that needs to appear consecutively, additionally the order of
the virtual edges is fixed up to reversal in this case.  Hence, there
is a bijection between the P- and R-nodes of the SPQR-tree containing
$v$ and the P- and Q-nodes of the PQ-tree representing the possible
orders of edges around $v$, respectively.  Note that the occurrence of
$v$ in the skeleton of an S-node enforces the edges belonging to one
of the two virtual edges incident to $v$ to appear consecutively
around $v$.  But since this would introduce a degree-2 node yielding
no new constraints, we can ignore the S-nodes.  We call the resulting
PQ-tree representing the possible circular orders of edges around a
vertex $v$ the \emph{embedding tree} of $v$ and denote it by $T(v)$.
Figure~\ref{fig:spqr-tree-pq-trees} depicts a planar graph together
with its SPQR-tree and the resulting embedding trees.

For every planar embedding of $G$, the circular order of edges around
every vertex $v$ is represented by the embedding tree $T(v)$, and for
every order represented by $T(v)$ we can find a planar embedding
realizing this order.  However, we cannot choose orders for the
embedding trees independently.  Consider for example the case that the
order of edges around $v_1$ in
Figure~\ref{fig:PQ-trees-with-common-nodes-from-SPQR} is already
chosen.  Since the embedding tree $T(v_1)$ contains nodes stemming
from the P-nodes $\mu_1$ and $\mu_3$ and the Q-node $\mu_2$ in the
SPQR-tree, the embedding of the skeletons in these nodes is already
fixed.  Since every other embedding tree except for $T(v_5)$ contains
nodes stemming from one of these three nodes the order of the incident
edges around $v_2$, $v_3$ and $v_4$ is at least partially determined.
In general, every P-node $\mu$ contains two vertices $v_1$ and $v_2$
in its skeleton, thus there are two embedding trees $T(v_1)$ and
$T(v_2)$ containing the P-nodes $\mu_1$ and $\mu_2$ stemming from
$\mu$.  The order of virtual edges in $\skel(\mu)$ around $v_1$ is the
opposite of the order of virtual edges around $v_2$ for a fixed
embedding of $\skel(\mu)$.  Hence, in every planar embedding of $G$
the edges around $\mu_1$ in $T(v_1)$ are ordered oppositely to the
order of edges around $\mu_2$ in $T(v_2)$.  Similarly, all Q-nodes in
the embedding trees stemming from the same R-node in the SPQR-tree
need to be oriented the same, if we choose the orders induced by one
of the two embeddings of the skeleton as reference orders of the
Q-nodes.  On the other hand, if every two P-nodes stemming from the
same P-node are ordered oppositely and all Q-nodes stemming from the
same R-node are oriented the same, we can simply use these orders and
orientations to obtain embeddings for the skeleton of every node in
the SPQR-tree, yielding a planar embedding of $G$.  Hence, all planar
embeddings of $G$ can be expressed in terms of the PQ-trees $T(v_1),
\dots, T(v_n)$, if we respect the additional constraints between nodes
stemming from the same node in the SPQR-tree.

\section{Simultaneous PQ-Ordering}
\label{sec:simult-pq-order}

As we have seen, all planar embeddings of a biconnected planar graph
$G$ can be expressed in terms of PQ-trees $T(v_1), \dots, T(v_n)$,
called the embedding trees, describing the orders of incident edges
around every vertex, if we respect some additional constraints between
the nodes of the embedding trees stemming from the same node of the
SPQR-tree.  In this section, we show how to get completely rid of the
SPQR-tree by providing a way to express these additional constraints
also in terms of PQ-trees.

The problem {\sc Simultaneous PQ-Ordering} is defined as follows.  Let $D = (N,
A)$ be a DAG with nodes $N = \{T_1, \dots, T_k\}$, where $T_i$ is a PQ-tree
representing the set of orders $\mathcal L_i$ on its leaves~$L_i$.  Every arc $a
\in A$ consist of a source $T_i$, a target $T_j$ and an injective map $\varphi :
L_j \rightarrow L_i$, and it is denoted by $(T_i, T_j; \varphi)$.  {\sc
  Simultaneous PQ-Ordering} asks whether there are orders $O_1, \dots, O_k$ with
$O_i \in \mathcal L_i$ such that an arc $(T_i, T_j; \varphi) \in A$ implies that
$\varphi(O_j)$ is a suborder of $O_i$.  Normally, we want every arc to represent
a projection followed by a sequence of reductions, which is not ensured by this
definition.  Hence, we say that an instance $D = (N, A)$ of {\sc Simultaneous
  PQ-Ordering} is \emph{normalized}, if an arc $(T_i, T_j; \varphi) \in A$
implies that $\mathcal L_i$ contains an order $O_i$ extending $\varphi(O_j)$ for
every order $O_j \in \mathcal L_j$.  It is easy to see that every instance of
{\sc Simultaneous PQ-Ordering} can be normalized.  If there is an order $O_j \in
\mathcal L_j$ such that $\mathcal L_i$ does not contain an extension of
$\varphi(O_j)$, then $O_j$ cannot be contained in any solution.  Hence, we do
not loose solutions by applying the reductions, given by $T_i$, to $T_j$.
Applying these reductions for every arc in $A$ top-down yields an equivalent
normalized instance.  From now on, all instances of {\sc Simultaneous
  PQ-Ordering} we consider are assumed to be normalized.  In most cases it is
not important to consider the map $\varphi$ explicitly, hence we often simply
write $(T_i, T_j)$ instead of $(T_i, T_j; \varphi)$ and say that $O_i$ is an
extension of $O_j$ instead of $\varphi(O_j)$.

Note that we cannot measure the size of an instance $D$ of {\sc Simultaneous
  PQ-Ordering} by the number of vertices plus the number of arcs, as it is usual
for simple graphs, since the nodes and arcs in $D$ are not of constant size in
our setting.  The size of every node in $D$ consisting of a PQ-tree $T$ is
linear in the number of nodes in $T$ or even linear in the number of leaves by
Lemma~\ref{lem:tree-without-deg-2}.  For every arc $(T_i, T_j; \varphi) \in A$
we need to store the injective map $\varphi$ from the leaves of $T_j$ to the
leaves of~$T_i$.  Thus, the size of this arc is linear in the number of leaves
in $T_j$.  Finally, the size of $D$, denoted by $|D|$, can be measured by the
size of all nodes plus the sizes of all arcs.

To come back to the embedding trees introduced in
Section~\ref{sec:spqr-trees}, we can now create a PQ-tree consisting
of a single Q-node as a common child of all embedding trees containing
a Q-node stemming from the same R-node in the SPQR-tree.  With the
right injective maps this additional PQ-tree ensures that all these
Q-nodes are oriented the same.  Similarly, we can ensure that two
P-nodes stemming from the same P-node of the SPQR-tree are ordered the
same, but what we really want is that the two P-nodes are ordered
oppositely.  Therefore, we also need \emph{reversing arcs} not
ensuring that an order is enforced to be the extension of the order
provided by the child, but requiring that it is an extension of the
reversal of this order.  To improve readability we do not consider
reversing arcs for now.  We will come back to this in
Section~\ref{sec:simult-pq-order-reversed} showing what changes if we
allow reversing arcs.

Since {\sc Simultaneous PQ-Ordering} is ${\mathcal NP}$-hard, which
will be shown in Section~\ref{sec:np-hardn-simult}, we will not solve
it in general, but we will give a class of instances that we can solve
efficiently.  In Section~\ref{sec:crit-tripl-expans} we figure out the
main problems in general instances and provide an approach to solve
{\sc Simultaneous PQ-Ordering} for ``simple'' instances.  In
Section~\ref{sec:1-critical-inst} we make precise which instances we
can solve and show how to solve them.  In Section~\ref{sec:impl-deta}
we give a detailed analysis of the running time and in
Section~\ref{sec:simult-pq-order-reversed} we show that the results on
{\sc Simultaneous PQ-Ordering} can be extended to the case where we
allow reversing arcs, that is, arcs ensuring that the order of the
source is an extension of the reversed order of the target.

\subsection[$\mathcal{NP}$-Completeness of Simultaneous
PQ-Ordering]{$\boldsymbol{\mathcal{NP}}$-Completeness of Simultaneous
  PQ-Ordering}
\label{sec:np-hardn-simult}

Let $L = \{\ell_1, \dots, \ell_n\}$ be a set of elements and let
$\Delta = \{(\ell_1^1, \ell_2^1, \ell_3^1), \dots, (\ell_1^d,
\ell_2^d, \ell_3^d)\}$ be a set of triples such that each triple
$(\ell_1^i, \ell_2^i, \ell_3^i)$ specifies a circular order for these
three elements.  The problem {\sc Cyclic Ordering} is to decide
whether there is a circular order of all elements in $L$ respecting
the circular order specified for every triple in $\Delta$.  Galil and
Megiddo proved that {\sc Cyclic Ordering} is
$\mathcal{NP}$-complete~\cite{Cyclicorderingis-Galil.Megiddo(77)}.

\begin{theorem}
  \label{thm:sim-pq-ord-np-compl}
  {\sc Simultaneous PQ-Ordering} is $\mathcal{NP}$-complete.
\end{theorem}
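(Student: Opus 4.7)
The plan is to first observe that {\sc Simultaneous PQ-Ordering} is in $\mathcal{NP}$ and then prove hardness by reducing from {\sc Cyclic Ordering}. Containment in $\mathcal{NP}$ is immediate: a certificate consists of one order $O_i \in \mathcal{L}_i$ per node $T_i$, which has polynomial size, and checking that $O_i$ is represented by $T_i$ (via the linear-time Booth--Lueker reduction) as well as that $\varphi(O_j)$ is a suborder of $O_i$ for every arc $(T_i,T_j;\varphi)$ takes polynomial time.

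For hardness I would, starting from an instance $(L,\Delta)$ of {\sc Cyclic Ordering}, construct a DAG $D$ with a ``root'' PQ-tree $T_0$ together with one child PQ-tree per triple. The root is a P-node whose leaves are $L$ augmented by a small number of auxiliary ``reference'' leaves, so that $T_0$ represents all cyclic orders of this augmented ground set. For each triple $(a,b,c)\in\Delta$ I add a child $T_{abc}$ consisting of a single Q-node whose leaves are $a,b,c$ together with the reference leaves in a cyclic order that encodes the orientation specified by the triple, linked to $T_0$ by an arc with identity-type injection. The resulting instance has size polynomial in $|L|+|\Delta|$.

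The delicate part is correctness. A single Q-node on $\{a,b,c,r\}$ does not by itself pin down one of the two cyclic orders of $\{a,b,c\}$, because Q-nodes represent their cyclic order only up to reversal. To overcome this I make the reference leaves \emph{shared} between different child-gadgets, which couples the reversal choices of all triples: a cyclic order of $T_0$ is consistent with every child's Q-node precisely when all triples ``agree'' on a single global direction. Since both {\sc Simultaneous PQ-Ordering} and {\sc Cyclic Ordering} are invariant under globally reversing all orderings (PQ-trees always represent reversal-closed sets, and reversing a cyclic order satisfying $\Delta$ yields one satisfying the reversed triples $\bar\Delta$, so $(L,\Delta)$ and $(L,\bar\Delta)$ are equisatisfiable), this ``up to global reversal'' correspondence is enough to establish the desired equivalence between the two instances. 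The main obstacle will be to verify the coupling rigorously: one has to analyse, for a prospective global cyclic order on the root's leaves, which cyclic orders it induces on each child's leaf set and show that compatibility with every child's Q-node forces exactly the oriented triple-constraints of $\Delta$, up to the global reversal above, after which the theorem follows together with the $\mathcal{NP}$-hardness of {\sc Cyclic Ordering} due to Galil and Megiddo~\cite{Cyclicorderingis-Galil.Megiddo(77)}.
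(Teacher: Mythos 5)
Your high-level plan --- $\mathcal{NP}$-membership via a guessed tuple of orders, hardness by reduction from {\sc Cyclic Ordering} of Galil and Megiddo --- coincides with the paper's. But the specific gadget you propose for the hardness direction is broken, not merely in need of verification. A Q-node containing $a,b,c$ together with shared reference leaves pins down the entire cyclic pattern of these leaves up to one global reversal, and this includes \emph{where} $a,b,c$ may sit relative to the references, not just the orientation of $\{a,b,c\}$. For instance, if the gadgets for the triples $(x,y,z)$ and $(y,x,w)$ are Q-nodes with cyclic orders $r_1 r_2 r_3\, x\, y\, z$ and $r_1 r_2 r_3\, y\, x\, w$, then after fixing the global direction so that the root's order restricted to $\{r_1,r_2,r_3\}$ is $r_1 r_2 r_3$, the first gadget forces $x$ to precede $y$ in the arc from $r_3$ to $r_1$, while the second forces $y$ to precede $x$ in that same arc --- yet this {\sc Cyclic Ordering} instance is satisfiable, for example by the cyclic order $x\,w\,y\,z$. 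No choice of the number of reference leaves or of their interleaving inside a single Q-node escapes this, because a Q-node's constraint is inherently positional rather than purely orientational, and ``same orientation as the references but at arbitrary positions'' is not a set of orders a PQ-tree on that leaf set can express.

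The paper sidesteps the problem by keeping each per-triple tree on exactly the three leaves $\ell_1^i,\ell_2^i,\ell_3^i$ --- so the root $T$ (a single P-node on $L$, with no auxiliary leaves) receives only the intended induced-order constraints --- and enforcing the ``same global direction'' requirement through a \emph{common child}: a single tree $T^\times$ on leaves $\{1,2,3\}$ that is a child of every per-triple tree via the injection $j\mapsto\ell_j^i$. This exploits the defining feature of {\sc Simultaneous PQ-Ordering}, that one node of the DAG may have several parents, and couples orientations alone without injecting any extra leaves into the root. To rescue your construction you would move the coupling down to a common grandchild of the root rather than encoding it in the leaf sets of the children, which essentially recovers the paper's reduction.
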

\begin{proof}
  It is clear that {\sc Simultaneous PQ-Ordering} is in $\mathcal{NP}$
  since it can be tested in polynomial time, if the conditions
  provided by the arcs are satisfied by given circular orders.  We
  show $\mathcal{NP}$-hardness by reducing {\sc Cyclic Ordering} to
  {\sc Simultaneous PQ-Ordering}.  Let $(L, \Delta)$ be an instance of
  {\sc Cyclic Ordering}.  We define the corresponding instance $D(L,
  \Delta)$ of {\sc Simultaneous PQ-Ordering} as follows.  We create
  one PQ-tree $T$ consisting of a single P-node with leaves $L$.  For
  every triple $(\ell_1^i, \ell_2^i, \ell_3^i)$ we create a PQ-tree
  $T(\ell_1^i, \ell_2^i, \ell_3^i)$ consisting of a single node (it
  does not matter if P- or Q-node) with leaves $\{\ell_1^i, \ell_2^i,
  \ell_3^i\}$ with an incoming arc $(T, T(\ell_1^i, \ell_2^i,
  \ell_3^i); \id)$, where $\id$ is the identity map.  With this
  construction it is still possible to choose an arbitrary order for
  each of the triples.  To ensure that they are all ordered the same,
  we introduce an additional PQ-tree $T^\times$ consisting of a single
  node with three leaves $1$, $2$ and $3$ and an incoming arc
  $(T(\ell_1^i, \ell_2^i, \ell_3^i), T^\times; \varphi)$ with
  $\varphi(j) = \ell_j^i$ for every triple $(\ell_1^i, \ell_2^i,
  \ell_3^i)$.  Figure~\ref{fig:SPQO-np-complete} illustrates this
  construction.  It is clear that the size of $D(L, \Delta)$ is linear
  in the size of $(L, \Delta)$.  It remains to show that the instance
  $(L, \Delta)$ of {\sc Cyclic Ordering} and the instance $D(L,
  \Delta)$ of {\sc Simultaneous PQ-Ordering} are equivalent.

  Assume we have a solution of $(L, \Delta)$, that is, we have a
  circular order $O$ of $L$ such that every triple $(\ell_1^i,
  \ell_2^i, \ell_3^i) \in \Delta$ has the circular order $\ell_1^i
  \ell_2^i \ell_3^i$.  The PQ-tree $T$ in $D(L, \Delta)$ has the
  leaves $L$, thus we can choose $O$ as the order of the leaves of
  $T$.  For every triple $(\ell_1^i, \ell_2^i, \ell_3^i)$ there is an
  incoming arc from $T$ to $T(\ell_1^i, \ell_2^i, \ell_3^i)$ inducing
  the circular order $\ell_1^i \ell_2^i \ell_3^i$ on its leaves.
  Furthermore, there is an outgoing arc to $T^\times$ inducing the
  order $1 2 3$.  Since all of these arcs having $T^\times$ as target
  induce the same circular order $1 2 3$, these orders are a solution
  of the instance $D(L, \Delta)$ of {\sc Simultaneous PQ-Ordering}.

  Conversely, assume we have a solution for $D(L, \Delta)$.  If the
  order of leaves in $T^\times$ is $1 3 2$, we obtain another solution
  by reversing all orders.  Thus, we can assume without loss of
  generality that the leaves of $T^\times$ have the order $1 2 3$.
  Hence, the leaves of the tree $T(\ell_1^i, \ell_2^i, \ell_3^i)$ are
  ordered $\ell_1^i \ell_2^i \ell_3^i$ for every triple $(\ell_1^i,
  \ell_2^i, \ell_3^i)$ implying that the order on the leaves $L$ of
  $T$, which is an extension of all these orders, is a solution of the
  instance $(L, \Delta)$ of {\sc Cyclic Ordering}.
\end{proof}

\begin{figure}[tb]
  \centering
  \includegraphics{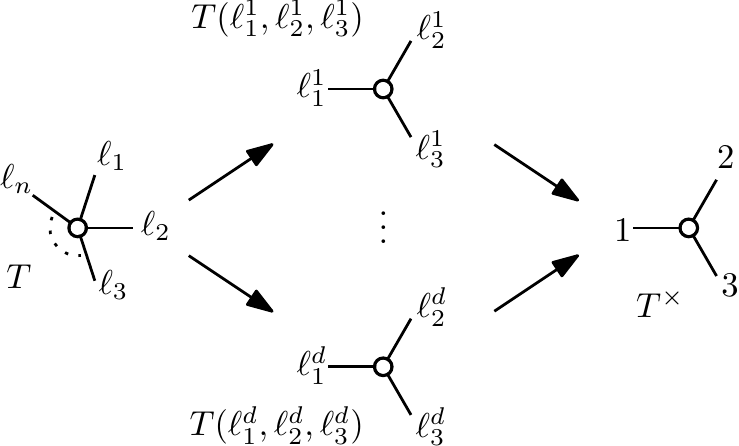}
  \caption{The instance $D(L, \Delta)$ of {\sc Simultaneous
      PQ-Ordering} corresponding to the instance $(L, \Delta)$ of {\sc
      Cyclic Ordering} with $L = \{\ell_1, \dots, \ell_n\}$ and
    $\Delta = \{(\ell_1^1, \ell_2^1, \ell_3^1), \dots, (\ell_1^d,
    \ell_2^d, \ell_3^d)\}$.}
  \label{fig:SPQO-np-complete}
\end{figure}

\subsection{Critical Triples and the Expansion Graph}
\label{sec:crit-tripl-expans}

Although {\sc Simultaneous PQ-Ordering} is $\mathcal {NP}$-complete in
general, we give in this section a strategy how to solve it for
special instances.  Afterwards, in Section~\ref{sec:1-critical-inst},
we show that this strategy really leads to a polynomial-time algorithm
for a certain class of instances.  Let $D = (N, A)$ be an instance of
{\sc Simultaneous PQ-Ordering} and let $(T, T_1) \in A$ be an arc.  By
choosing an order $O_1 \in \mathcal L_1$ and extending $O_1$ to an
order $O \in \mathcal L$, we ensure that the constraint given by the
arc $(T, T_1)$ is satisfied.  Hence, our strategy will be to choose
orders bottom-up, which can always be done for a single arc since our
instances are normalized.  Unfortunately, $T$ can have several
children $T_1, \dots, T_\ell$, and orders $O_i \in \mathcal L_i$
represented by $T_i$ for $i = 1, \dots, \ell$ cannot always be
simultaneously extended to an order $O \in \mathcal L$ represented by
$T$.  We derive necessary and sufficient conditions for the orders
$O_i$ to be simultaneously extendable to an order $O \in \mathcal L$
under the additional assumption that every P-node in $T$ is fixed with
respect to at most two children.  We consider the Q- and P-nodes in
$T$ separately.

Let $\mu$ be a Q-node in $T$.  If $\mu$ is fixed with respect to
$T_i$, there is a unique Q-node $\rep(\mu)$ in $T_i$ determining its
orientation.  By introducing a boolean variable $x_\eta$ for every
Q-node $\eta$, which is {\sc true} if $\eta$ is oriented the same as a
fixed reference orientation and {\sc false} otherwise, we can express
the condition that $\mu$ is oriented as determined by its
representative by $x_\mu = x_{\rep(\mu)}$ or $x_\mu \not=
x_{\rep(\mu)}$.  For every Q-node in $T$ that is fixed with respect to
a child $T_i$ we obtain such an (in)equality and we call the resulting
set of (in)equalities the \emph{Q-constraints}.  It is obvious that
the Q-constraints are necessary.  On the other hand, if the
Q-constraints are satisfied, all children of $T$ fixing the
orientation of $\mu$ fix it in the same way.  Note that the
Q-constraints form an instance of {\sc 2-Sat} that has linear size in
the number of Q-nodes, which can be solved in
polynomial~\cite{DecisionProblemClass-Krom(67)} and even
linear~\cite{ComplexityofTimetable-Even.etal(76),
  linear-timealgorithmtesting-Aspvall.etal(79)} time.  Hence, we only
need to deal with the P-nodes, which is not as simple.

Let $\mu$ be a P-node in $T$.  If $\mu$ is fixed with respect to only
one child $T_i$, we can simply choose the order given by $O_i$.  If
$\mu$ is additionally fixed with respect to $T_j$, it is of course
necessary that the orders $O_i$ and $O_j$ induce the same order for
the edges incident to $\mu$ that are fixed with respect to both, $T_i$
and $T_j$.  We call such a triple $(\mu, T_i, T_j)$, where $\mu$ is a
P-node in $T$ fixed with respect to the children $T_i$ and $T_j$ a
\emph{critical triple}.  We say that the critical triple $(\mu, T_i,
T_j)$ is \emph{satisfied} if the orders $O_i$ and $O_j$ induce the
same order for the edges incident to $\mu$ commonly fixed with respect
to $T_i$ and $T_j$.  If we allow multiple arcs, we can also have a
critical triple $(\mu, T', T')$ for two parallel arcs $(T, T';
\varphi_1)$ and $(T, T'; \varphi_2)$.  Clearly, all critical triples
need to be satisfied by the orders chosen for the children to be able
to extend them simultaneously.  Note that this condition is not
sufficient, if $\mu$ is contained in more than one critical triple,
which is one of the main difficulties of {\sc Simultaneous
  PQ-Ordering} for general instances.  However, the following lemma
shows that satisfying all critical triples is not only necessary but
also sufficient, if every P-node is contained in at most one critical
triple, that is, it is fixed with respect to at most two children of
$T$.  See Figure~\ref{fig:SPQO-extension-of-2-or-3-orders} for two
simple examples, illustrating that satisfying critical triples is
sufficient if every P-node is contained in at most one critical
triple, whereas the general case is not as simple.

\begin{figure}[tb]
  \centering
  \subcaptionbox{ \label{fig:SPQO-extension-of-2-or-3-orders-a}}
  {\includegraphics[page=1]{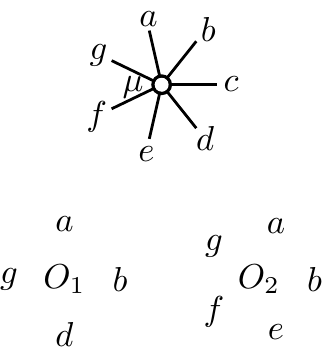}}\hspace{3em}
  \subcaptionbox{ \label{fig:SPQO-extension-of-2-or-3-orders-b}}
  {\includegraphics[page=2]{fig/SPQO-extension-of-2-or-3-orders}}
  \caption{(\subref{fig:SPQO-extension-of-2-or-3-orders-a})~We can
    find an order for the P-node $\mu$ extending the orders $O_1$ and
    $O_2$ if and only if the
    commonly fixed edges $a$, $b$ and $g$ are ordered the same. \\
    (\subref{fig:SPQO-extension-of-2-or-3-orders-b})~Although for
    every pair $\{O_i, O_j\}$ of orders out of the three orders $O_1$,
    $O_2$ and $O_3$ the commonly fixed edges are ordered the same, we
    cannot extend all three orders simultaneously.}
  \label{fig:SPQO-extension-of-2-or-3-orders}
\end{figure}

\newcommand{\lemSimultExtensionOfOrdersText}{ Let $T$ be a PQ-tree
  with children $T_1, \dots, T_\ell$, such that every P-node in $T$ is
  contained in at most one critical triple, and let $O_1, \dots,
  O_\ell$ be orders represented by $T_1, \dots, T_\ell$.  An order $O$
  that is represented by $T$ and simultaneously extends the orders
  $O_1, \dots, O_\ell$ exists if and only if the Q-constraints and all
  critical triples are satisfied.}
\begin{lemma}
  \label{lem:simult-extension-of-orders}
  \lemSimultExtensionOfOrdersText
\end{lemma}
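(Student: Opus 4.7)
The plan splits into the (straightforward) necessity direction and the (substantial) sufficiency direction.

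For necessity, suppose $O \in \mathcal L$ simultaneously extends all $O_i$. For every Q-node $\mu$ of $T$ that is fixed with respect to some child $T_i$, the orientation of $\mu$ in $O$ coincides with that of $\rep(\mu)$ in $O_i$, so the (in)equality linking $x_\mu$ to $x_{\rep(\mu)}$ holds; hence all Q-constraints are satisfied. Similarly, for every critical triple $(\mu,T_i,T_j)$, the cyclic order of the commonly fixed edges around $\mu$ is determined by $O$, and this single cyclic order is exactly what $O_i$ and $O_j$ induce on them, so the triple is satisfied.

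For sufficiency, the plan is to construct $O$ one node at a time. A circular order on the leaves of $T$ is determined by choosing, at every inner node $\mu$, a cyclic order of the incident edges; the resulting leaf order is automatically represented by $T$ as long as each local choice respects the type of $\mu$ (arbitrary at a P-node, fixed up to reversal at a Q-node). It therefore suffices to show that the local order at each $\mu$ can be chosen so that it is compatible, locally at $\mu$, with every $O_i$ that fixes $\mu$; the extension then follows since $O_i$ is the projection of $O$ to $L_i$ up to the sequence of reductions realising $T_i$. The Q-node case is handled by the Q-constraints: at a Q-node $\mu$, the requirement that its orientation agrees with $\rep(\mu)$ in each fixing $O_i$ collapses to a single value of $x_\mu$, which I adopt (choosing arbitrarily if $\mu$ is free). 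At a P-node that is free, or fixed with respect to only one child $T_i$, I either choose any cyclic order, or take the order dictated by $O_i$ on the fixed edges and insert the remaining (free) edges at arbitrary positions.

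The main obstacle is the P-node case in which $\mu$ sits in a critical triple $(\mu,T_i,T_j)$, which by hypothesis is the only one containing~$\mu$. Let $F_i$ and $F_j$ denote the edges around $\mu$ that are fixed with respect to $T_i$ and $T_j$, respectively, and set $C = F_i \cap F_j$. Because the critical triple is satisfied, $O_i$ and $O_j$ induce the same cyclic order on $C$. I first place the edges of $C$ around $\mu$ in this common cyclic order; then, between each pair of consecutive edges $e,e'$ of $C$ in the cycle, I insert first the edges of $F_i \setminus C$ that lie between $e$ and $e'$ in $O_i$ (in that order) followed by the edges of $F_j \setminus C$ that lie between $e$ and $e'$ in $O_j$ (in that order); finally, the edges that are free with respect to both $T_i$ and $T_j$ are inserted at arbitrary positions. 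If $|C|\le 1$ the construction degenerates: with $|C|=1$ I cut both cycles at the unique common edge and concatenate the two linear sequences, and with $C=\emptyset$ I simply concatenate the full orders from $O_i$ and $O_j$. By construction the resulting cyclic order at $\mu$ restricts to $O_i$ on $F_i$ and to $O_j$ on $F_j$, which is exactly what is needed for $O$ to extend both $O_i$ and $O_j$ at $\mu$. The only thing to verify is that the interleaving is unambiguous, and this is precisely where satisfaction of the critical triple enters: the edges of $C$ act as synchronisation points that split both $O_i$ and $O_j$ into matching arcs, so the merged cyclic order is well-defined.
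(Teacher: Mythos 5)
Your proposal is correct and follows essentially the same route as the paper's proof: both argue that a consistent order can be built node by node, handling Q-nodes via the $\textsc{2-Sat}$ variable assignment, P-nodes fixed by at most one child directly, and the single critical-triple P-node by merging the two induced partial orders using the commonly-fixed edges as synchronisation points. You spell out the merge more explicitly (the per-arc interleaving and the degenerate cases $|C|\le 1$) than the paper, which only sketches it, but the underlying idea is identical.
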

\begin{proof}
  The only if part is clear, since an order $O$ represented by $T$
  extending the orders $O_1, \dots, O_\ell$ yields an assignment of
  {\sc true} and {\sc false} to the variables $x_\eta$ satisfying the
  Q-constraints.  Additionally, for every critical triple $(\mu, T_i,
  T_j)$ the common fixed edges are ordered the same in $O$ as in $O_i$
  and in $O_j$ and hence $(\mu, T_i, T_j)$ is satisfied.

  Now, assume that we have orders $O_1, \dots, O_\ell$ satisfying the
  Q-constraints and every critical triple.  We show how to construct
  an order $O$ represented by $T$, extending all orders $O_1, \dots,
  O_\ell$ simultaneously.  The variable assignments for the variables
  stemming from Q-nodes in each of the children $T_1, \dots, T_\ell$
  imply an assignment of every variable stemming from a fixed Q-node
  in $T$, and hence an orientation of this Q-node.  Since the
  Q-constraints are satisfied, all children fixing a Q-node in $T$
  imply the same orientation.  The orientation of free Q-nodes can be
  chosen arbitrarily.  For a P-node $\mu$ in $T$ that is fixed with
  respect to at most one child of $T$, we can simply choose the order
  of fixed edges incident to $\mu$ as determined by the child and add
  the free edges arbitrarily.  Otherwise, $\mu$ is contained in
  exactly one critical triple $(\mu, T_i, T_j)$.  We first choose the
  order of edges incident to $\mu$ that are fixed with respect to
  $T_i$ as determined by $O_i$.  From the point of view of $T_j$, some
  of the fixed edges incident to $\mu$ are already ordered, but this
  order is consistent with the order induced by $O_j$, since $(\mu,
  T_i, T_j)$ is satisfied.  Additionally, some edges that are free
  with respect to $T_j$ are already ordered.  Of course, the remaining
  edges incident to $\mu$ that are fixed with respect to $T_j$ can be
  added as determined by $O_j$, and the free edges can be added
  arbitrarily.
\end{proof}

Since testing whether the Q-constraints are satisfiable is easy, we
concentrate on satisfying the critical triples.  Let $\mu$ be a P-node
in a PQ-tree $T$ such that $\mu$ is fixed with respect to two children
$T_1$ and $T_2$, that is, $(\mu, T_1, T_2)$ is a critical triple.  By
projecting $T_1$ and $T_2$ to representatives of the common fixed
edges incident to $\mu$ and intersecting the result, we obtain a new
PQ-tree $T(\mu, T_1, T_2)$.  There are natural injective maps from the
leaves of $T(\mu, T_1, T_2)$ to the leaves of $T_1$ and $T_2$, hence
we can add $T(\mu, T_1, T_2)$ together with incoming arcs from $T_1$
and $T_2$ to our instance~$D$ of {\sc Simultaneous PQ-Ordering}.  This
procedure of creating $T(\mu, T_1, T_2)$ is called \emph{expansion
  step} with respect to the critical triple $(\mu, T_1, T_2)$, and the
resulting new PQ-tree $T(\mu, T_1, T_2)$ is called the \emph{expansion
  tree} with respect to that triple; see
Figure~\ref{fig:SPQR-expansion-step} for an example of the expansion
step.  We say that the P-node $\mu$ in $T$ is \emph{responsible} for
the expansion tree $T(\mu, T_1, T_2)$.  Note that every expansion tree
has two incoming and no outgoing arcs at the time it is created.

\begin{figure}[tb]
  \centering
  \includegraphics{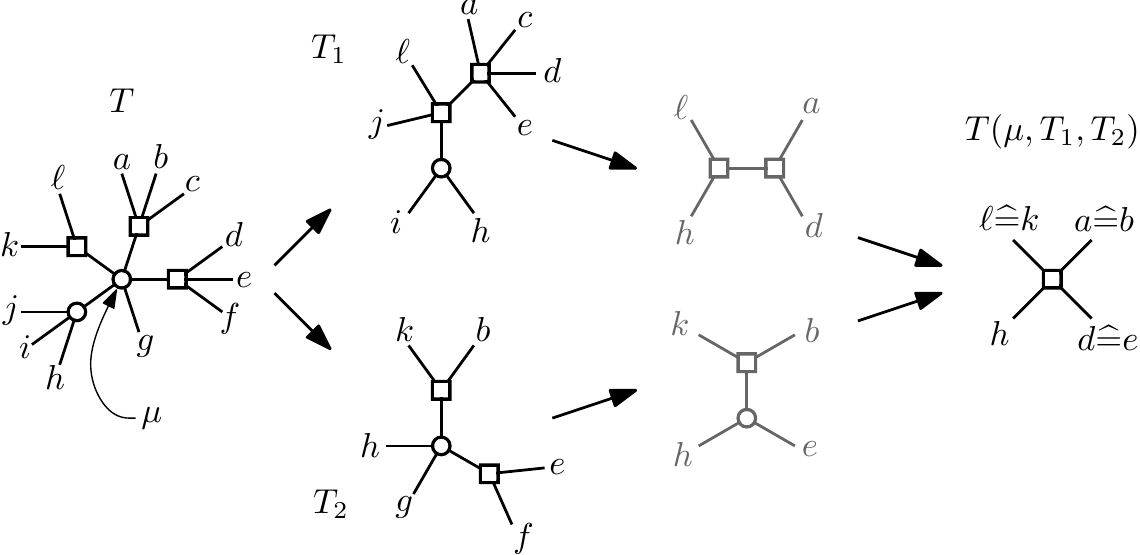}
  \caption{The P-node $\mu$ in the PQ-tree $T$ is fixed with respect
    to the children $T_1$ and $T_2$.  We first project $T_1$ and $T_2$
    to representatives of the common fixed edges incident to $\mu$ and
    intersect the result to obtain $T(\mu, T_1, T_2)$.  Note that the
    gray shaded projections only illustrate an intermediate step, they
    are not inserted.}
  \label{fig:SPQR-expansion-step}
\end{figure}

We introduce the expansion tree for the following reason.  If we find
orders $O_1$ and $O_2$ represented by $T_1$ and $T_2$ that both extend
the same order represented by the expansion tree $T(\mu, T_1, T_2)$,
we ensure that the edges incident to $\mu$ fixed with respect to both,
$T_1$ and $T_2$, are ordered the same in $O_1$ and $O_2$, or in other
words, we ensure that $O_1$ and $O_2$ satisfy the critical triple
$(\mu, T_1, T_2)$.  By Lemma~\ref{lem:simult-extension-of-orders}, we
know that satisfying the critical triple is necessary, thus we do not
loose solutions by adding expansion trees to an instance of {\sc
  Simultaneous PQ-Ordering}.  Furthermore, it is also sufficient, if
every P-node is contained in at most one critical triple (if we forget
about the Q-nodes for a moment).  Hence, given an instance $D$ of {\sc
  Simultaneous PQ-Ordering}, we would like to expand $D$ iteratively
until no unprocessed critical triples are left and find simultaneous
orders bottom-up.  Unfortunately, it can happen that the expansion
does not terminate and thus yields an infinite graph; see
Figure~\ref{fig:SPQO-infinite-expansion} for an example.  Thus, we
need to define a special case where we do not expand further.  Let
$\mu$ be a P-node of $T$ with outgoing arcs $(T, T_1; \varphi_1)$ and
$(T, T_2; \varphi_2)$ such that $(\mu, T_1, T_2)$ is a critical
triple.  Denote the leaves of $T_1$ and $T_2$ by $L_1$ and $L_2$,
respectively.  If $T_i$ (for $i = 1, 2$) consists only of a single
P-node, the image of $\varphi_i$ is a set of representatives of the
edges incident to $\mu$ that are fixed with respect to $T_i$.  Hence
$\varphi_i$ is a bijection between $L_i$ and the fixed edges incident
to $\mu$.  If additionally the fixed edges with respect to both, $T_1$
and $T_2$, are the same, we obtain a bijection $\varphi : L_1
\rightarrow L_2$.  Assume without loss of generality that there is no
directed path from $T_2$ to $T_1$ in the current DAG.  If there is
neither a directed path from $T_1$ to $T_2$ nor form $T_2$ to $T_1$,
we achieve uniqueness by assuming that $T_1$ comes before $T_2$ with
respect to some fixed order of the nodes in~$D$.  Instead of an
expansion step we apply a \emph{finalizing step} by simply creating
the arc $(T_1, T_2; \varphi)$.  This new arc ensures that the critical
triple $(\mu, T_1, T_2)$ is satisfied if we have orders for the leaves
$L_1$ and $L_2$ respecting $(T_1, T_2; \varphi)$.  Since no new node
is inserted, we do not run into the situation where we create the same
PQ-tree over and over again.

For the case that $(\mu, T', T')$ is a critical triple resulting from
two parallel arcs $(T, T'; \varphi_1)$ and $(T, T'; \varphi_2)$, we
can apply the expansion step as described above.  If the conditions
for a finalizing step are given, that is $T'$ consists of a single
P-node and both maps $\varphi_1$ and $\varphi_2$ fix the same edges
incident to $\mu$, a finalizing step would introduce a self loop with
the permutation $\varphi$ associated with it.  In this case, we omit
the loop and mark $(T, T'; \varphi_1)$ and $(T, T'; \varphi_2)$ as a
\emph{critical double arc} with the associated permutation $\varphi$.
When choosing orders bottom-up in the DAG, we have to explicitly
ensure that all critical triples stemming from critical double arcs
are satisfied.  To simplify this, we ensure that all targets of
critical double arcs are sinks in the expansion graph.  This follows
from the construction, except for the case when the critical double
arc is already contained in the input instance.  In this case, we
apply one additional expansion step, which essentially clones the
double arc.  We thus distinguish between the two cases that $T'$ is an
expansion tree and that it was already contained in $D$.  If it is an
expansion tree, we do nothing and mark the critical triple as
processed.  Otherwise, we apply an expansion step having the effect
that the resulting expansion tree again satisfies the conditions to
apply a finalizing step and additionally is an expansion tree.  Since
we want to apply Lemma~\ref{lem:simult-extension-of-orders} by
choosing orders bottom-up, it is a problem that the critical triples
belonging to critical double arcs are not satisfied automatically.
However, if every P-node is contained in at most one critical triple,
our construction ensures that the target $T'$ of a critical double arc
is a sink and no further expansion or finalizing steps can change
that.  Hence, we are free to choose any order for the leaves of $T'$
and we will use Lemma~\ref{lem:permutation-order-preserving} (about
order preserving permutations) to choose it in a way satisfying the
critical triple or decide that this is impossible.

\begin{figure}[tb]
  \centering
  \includegraphics{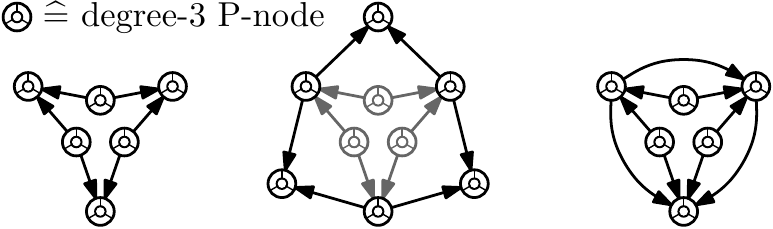}
  \caption{Consider the instance of {\sc Simultaneous PQ-Ordering} on
    the left, where every PQ-tree consists of a single P-node with
    degree~3.  The DAG in the center shows the result after expanding
    three times.  The so far processed part is shaded gray and for the
    remaining part we are in the same situation as before, hence
    iterated expansion would yield an infinite DAG.  To prevent
    infinite expansion we apply finalizing steps resulting in the DAG
    on the right.}
  \label{fig:SPQO-infinite-expansion}
\end{figure}

To sum up, we start with an instance $D$ of {\sc Simultaneous
  PQ-Ordering}.  As long as $D$ contains unprocessed critical triples
$(\mu, T_1, T_2)$ we apply expansion steps (or finalizing steps if
$T_1$ and $T_2$ are essentially the same) and mark $(\mu, T_1, T_2)$
as processed.  The resulting graph is called the \emph{expansion
  graph} of $D$ and is denoted by $D_{\ex}$.  Note that $D_{\ex}$ is
also an instance of {\sc Simultaneous PQ-Ordering}.  Before showing in
Lemma~\ref{lem:exp-graph-equiv} that $D$ and $D_{\ex}$ are equivalent,
we need to show that $D_{\ex}$ is well defined, that is, it is unique
and finite.  Lemma~\ref{lem:chain-of-p-nodes-terminates} essentially
states that the P-nodes become smaller at least every second expansion
step.  We will use this result in Lemma~\ref{lem:exp-graph-well-def}
to show finiteness.

\begin{lemma}
  \label{lem:chain-of-p-nodes-terminates}
  Let $D$ be an instance of {\sc Simultaneous PQ-Ordering} and let
  $D_{\ex}$ be its expansion graph.  Let further $T$ be a PQ-tree in
  $D_{\ex}$ containing a P-node $\mu$.  If $\mu$ is responsible for an
  expansion tree $T'$ containing a P-node $\mu'$ with $\deg(\mu') =
  \deg(\mu)$, then $\mu'$ itself is not responsible for an expansion
  tree $T''$ containing a P-node $\mu''$ with $\deg(\mu'') =
  \deg(\mu') = \deg(\mu)$.
\end{lemma}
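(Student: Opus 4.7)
The plan is to show that the hypothesis forces $T'$ to have a very specific, rigid shape, and that this shape in turn makes any further expansion step at $\mu'$ fall under the finalizing-step rule, so that no expansion tree is ever created from $\mu'$.

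First I would pin down the shape of $T'$. Since $T'=T(\mu,T_1,T_2)$ is obtained as the intersection of the projections of $T_1$ and $T_2$ to one representative per common fixed edge of $\mu$, its leaf set $L'$ has $|L'|\le \deg(\mu)=d$. On the other hand, each of the $d$ subtrees around the degree-$d$ P-node $\mu'$ must contain at least one leaf of $T'$, so $|L'|\ge d$. Hence $|L'|=d$, every subtree around $\mu'$ consists of a single leaf (PQ-trees contain no degree-$2$ inner nodes after simplification), and $T'$ is the star consisting of $\mu'$ together with its $d$ leaves; in particular, $T'$ consists of a single P-node.

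Next, I would assume for contradiction that $\mu'$ is responsible for an expansion tree $T''$ containing a P-node $\mu''$ with $\deg(\mu'')=d$ and reapply the same reasoning to the critical triple $(\mu',T_1'',T_2'')$. Then $T''$ is a star with leaf set $L''$ of size exactly $d$, so all $d$ edges of $\mu'$ are commonly fixed with respect to $T_1''$ and $T_2''$. Because the maps associated with the arcs $(T',T_i'')$ are injective and must hit each of the $d$ leaves of $T'$ at least once (one preimage per fixed edge), each $T_i''$ has exactly $d$ leaves which biject with the leaves of $T'$, and the representatives in $T_i''$ used in the expansion step for $(\mu',T_1'',T_2'')$ are all of its leaves. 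Consequently $\proj{T_i''}{L_i''}=T_i''$, and for the intersection $T_1''\cap T_2''$ itself to be a star of size $d$, both $T_1''$ and $T_2''$ must already be stars, i.e., each consists of a single P-node.

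Finally, I would invoke the finalizing-step rule on $(\mu',T_1'',T_2'')$: both $T_1''$ and $T_2''$ consist of a single P-node and fix exactly the same set of $d$ edges incident to $\mu'$. Hence the algorithm applies a finalizing step (or, if the two arcs happen to be parallel, marks a critical double arc) rather than performing an expansion step, so no expansion tree arises from this triple. This contradicts the assumption that $\mu'$ is responsible for $T''$ and proves the lemma. The main obstacle I anticipate is checking that the reduction to the finalizing/double-arc rule is complete across its minor variants (parallel arcs, pre-existing vs.\ newly-created targets), but in each variant the algorithm explicitly refrains from creating a new expansion tree, so the conclusion is uniform.
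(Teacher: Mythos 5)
Your proof is correct and follows essentially the same approach as the paper's: first force $T'$ to be a star on $d=\deg(\mu)$ leaves, then argue by contradiction that both children $T_1''$, $T_2''$ in the critical triple at $\mu'$ would have to be single P-nodes fixing all $d$ edges, so a finalizing step (or critical-double-arc marking) would be applied instead of an expansion step. In fact you supply a cleaner justification than the paper at the one place it is terse (the paper merely asserts that $T_1''$ and $T_2''$ must be single P-nodes because they ``lie on a directed path from $T'$ to $T''$'', whereas you correctly derive this from the fact that their intersection is a star).
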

\begin{proof}
  Since $T'$ is created by first projecting a child of $T$ to
  representatives of edges incident to $\mu$, it can contain at most
  $\deg(\mu)$ leaves.  Thus, if $T'$ contains a P-node $\mu'$ with
  $\deg(\mu') = \deg(\mu)$, it contains no other inner node.  Now
  assume that $\mu'$ is responsible for another expansion tree $T''$
  containing a P-node $\mu''$ with $\deg(\mu'') = \deg(\mu') =
  \deg(\mu)$ and let $(\mu', T_1, T_2)$ be the corresponding critical
  triple.  Again $T''$ consists only of the single P-node $\mu''$.
  Since $T_1$ and $T_2$ lie on a directed path from $T'$ to $T''$ they
  also need to consist of single P-nodes with $\deg(\mu')$ incident
  edges.  Thus, $T_1$ and $T_2$ consist both of a single P-node having
  the same degree and they fix the same, namely all, edges incident to
  $\mu'$.  Hence we would have applied a finalizing step instead of
  creating the expansion tree $T''$; a contradiction.
\end{proof}

\begin{lemma}
  \label{lem:exp-graph-well-def}
  The expansion graph $D_{\ex}$ of an instance $D = (N, A)$ of {\sc
    Simultaneous PQ-Ordering} is unique and finite.
\end{lemma}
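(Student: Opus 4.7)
My plan is to handle uniqueness and finiteness as two separate arguments, both of which lean heavily on the determinism of the construction and on Lemma~\ref{lem:chain-of-p-nodes-terminates}.

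For uniqueness, I would observe that every single step of the construction is fully determined by the local data. Given a critical triple $(\mu, T_1, T_2)$ together with the associated injective maps, the check whether a finalizing step applies is a purely structural test on $T_1$ and $T_2$, and in the expansion case the resulting PQ-tree $T(\mu, T_1, T_2) = \proj{T_1}{S} \cap \proj{T_2}{S}$ and its two new arcs are uniquely determined. The only non-local ambiguity is the tie-breaking rule when neither $T_1$ nor $T_2$ dominates the other in the DAG, but this is resolved by the fixed global order on the nodes of $D$ mentioned in the construction. Since processing a critical triple only adds new nodes and arcs and never modifies existing ones, two unprocessed critical triples that are both ready commute; a straightforward confluence argument then shows that $D_{\ex}$ is independent of the order in which critical triples are processed.

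For finiteness, the central observation is that the expansion tree $T(\mu, T_1, T_2)$ is built by projecting to representatives of the common fixed edges incident to $\mu$, so it has at most $\deg(\mu)$ leaves; in particular, every P-node $\mu'$ appearing in it satisfies $\deg(\mu') \le \deg(\mu)$. Define the responsibility relation by saying that the P-node $\mu$ is responsible for every P-node appearing in an expansion tree $T(\mu, T_1, T_2)$ that $\mu$ triggered. The above observation forces degrees to be nonincreasing along any chain in this relation, and Lemma~\ref{lem:chain-of-p-nodes-terminates} forces a strict decrease in every second step. Hence every responsibility chain has length at most $2d$, where $d$ is the largest P-node degree in the input instance $D$.

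It then remains to bound the branching. At each step, the number of expansion trees triggered by a fixed P-node $\mu$ is at most the number of pairs of arcs fixing $\mu$ in its parent, which is finite because $\mu$ has finite degree and only finitely many outgoing arcs leave its parent. Combining this with the bounded depth of the responsibility relation yields only finitely many expansion trees in total, so $D_{\ex}$ is finite. The main obstacle is the careful bookkeeping for critical double arcs: a double arc already present in $D$ triggers one additional expansion step, and newly created double arcs must be checked to ensure they lead to sinks rather than further nontrivial expansions. Both cases are handled by noting that the extra step still produces an expansion tree of strictly smaller or equal P-node degree, so the chain argument absorbs them without change and finiteness is preserved.
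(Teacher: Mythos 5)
Your uniqueness argument matches the paper's: both observe that each expansion or finalizing step depends only on the local data $(T,T_1,T_2)$ and the incident arcs, none of which are modified by other steps, so the order of processing is irrelevant. Your finiteness argument also relies on the same core tool, Lemma~\ref{lem:chain-of-p-nodes-terminates}, but frames the bound slightly differently: you bound the length of responsibility chains directly at $2d$, whereas the paper derives the bound $\level(D_{\ex}) \le \level(D) + 4(p_{\max}+1)$ on the DAG level and then concludes finiteness from ``finitely many sources, bounded level, and no node of infinite degree.'' These are essentially the same argument in different coordinates.

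One step to be careful about: your branching bound asserts that a P-node $\mu$ triggers only finitely many expansion trees because ``only finitely many outgoing arcs leave its parent.'' That quantity is not fixed in advance --- the number of children of the PQ-tree containing $\mu$ can grow during the expansion process, since each expansion step on a critical triple $(\mu', T_1, T_2)$ adds a new child to both $T_1$ and $T_2$, and finalizing steps add further arcs. The phrase ``$\mu$ has finite degree'' does not bound this, since the number of children fixing $\mu$ is unrelated to $\deg(\mu)$. So as written the argument is slightly circular: you need the finiteness of the process to guarantee that the out-degrees stay finite. The paper's proof makes the analogous assertion that ``no node has infinite degree'' without fully expanding it, so this is a gap shared with the original, but in your write-up the justification offered for the branching bound is not one that actually supports it. The cleanest way to close it is to argue inductively on level, using the established level bound, that the nodes (and hence the out-degrees) at each level stabilize to a finite set.
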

\begin{proof}
  If we apply an expansion or a finalizing step due to a critical
  triple $(\mu, T_1, T_2)$, where $\mu$ is a P-node of the PQ-tree
  $T$, the result does only depend on the trees $T$, $T_1$ and $T_2$
  and the arcs $(T, T_1)$ and $(T, T_2)$.  By applying other expansion
  or finalizing steps, we of course do not change these trees or arcs,
  thus it does not matter in which order we expand and finalize a
  given DAG~$D$.  Hence, $D_{\ex}$ is unique and we can talk about
  \emph{the} expansion graph $D_{\ex}$ of an instance $D$ of {\sc
    Simultaneous PQ-Ordering}.

  To prove that $D_{\ex}$ is finite, we show that $\level(D_{\ex}) \le
  \level(D) + 4 \cdot (p_{\max} + 1)$, where $p_{\max}$ is the degree
  of the largest P-node in $D$.  To simplify the notation denote
  $p_{\max} + 1$ by $p_{\max}^+$.  Recall that the level of a node in
  $D$ was defined as the the shortest directed path from a sink to
  this node and $\level(D)$ is the largest level occurring in $D$.
  Note that all sources in $D_{\ex}$ are already contained in $D$,
  since every expansion tree has two incoming arcs.  Showing that the
  level of $D_{\ex}$ is finite is sufficient since there are only
  finitely many sources in $D_{\ex}$ and no node has infinite degree.
  Assume we have a PQ-tree $T_1$ in $D_{\ex}$ with $\level(T_1) >
  \level(D) + 4 \cdot p_{\max}^+$. Then $T_1$ is of course an
  expansion tree and there is a unique P-node $\mu_2$ that is
  responsible for $T_1$.  Denote the PQ-tree containing $\mu_2$ by
  $T_2$.  Since there is a directed path of length~2 from $T_2$ to
  $T_1$, we have $\level(T_2) \ge \level(T_1) - 2 > \level(D) + 4
  \cdot p_{\max}^+ - 2$.  Due to its level, $T_2$ itself needs to be
  an expansion tree and we can continue, obtaining a sequence $T_1,
  \dots, T_{2 \cdot p_{\max}^+}$ of expansion trees containing P-nodes
  $\mu_i$, such that $\mu_i$ is responsible for $T_{i-1}$.  Due to
  Lemma~\ref{lem:chain-of-p-nodes-terminates} the degree of $\mu_i$ is
  larger than the degree of $\mu_{i-2}$, hence $\deg(\mu_{2 \cdot
    p_{\max}^+}) \ge p_{\max}^+ > p_{\max}$, which is a contradiction
  to the assumption that the largest P-node in $D$ has degree
  $p_{\max}$.
\end{proof}

Now that we know that the expansion graph $D_{\ex}$ of a given
instance $D$ of {\sc Simultaneous PQ-Ordering} is well defined, we can
show what we already mentioned above, namely that $D$ and $D_{\ex}$
are equivalent.

\begin{lemma}
  \label{lem:exp-graph-equiv}
  An instance $D$ of {\sc Simultaneous PQ-Ordering} admits
  simultaneous PQ-orders if and only if its expansion graph $D_{\ex}$
  does.
\end{lemma}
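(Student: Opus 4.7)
The backward direction is immediate: every node and arc of $D$ is retained in $D_{\ex}$, so a simultaneous ordering of $D_{\ex}$ restricts to one of $D$.

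For the forward direction, I proceed by induction on the (finite, by Lemma~\ref{lem:exp-graph-well-def}) number of expansion, finalizing, and cloning steps used to build $D_{\ex}$ from $D$, showing that each step preserves solvability. The main case is an expansion step creating a new expansion tree $T' = T(\mu, T_1, T_2)$ with two new arcs into $T'$, where $\mu$ is a P-node of a parent $T$ carrying arcs $(T, T_1; \varphi_1)$ and $(T, T_2; \varphi_2)$. Given a solution with orders $O, O_1, O_2$ on $T, T_1, T_2$, I claim that the restrictions of $\varphi_1(O_1)$ and $\varphi_2(O_2)$ to the set $S$ of representatives of edges commonly fixed with respect to $T_1$ and $T_2$ agree, yielding a single circular order $O^{\ast}$ that is represented by $T'$. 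Agreement on $S$ holds because $\mu$ imposes a single order on its fixed incident edges within $O$, and both $\varphi_i(O_i)$ are suborders of $O$ by assumption. Since $O^{\ast}$ arises as the restriction of $O_i$ (represented by $T_i$) to $S$, it is represented by the projection of $T_i$ used in forming $T'$, and hence by the intersection of these two projections, which is $T'$ itself. Assigning $O^{\ast}$ to $T'$ then extends the solution and satisfies both new arcs by construction.

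For a finalizing step that adds an arc $(T_1, T_2; \varphi)$, the trees $T_1$ and $T_2$ are single P-nodes fixing exactly the same edges at $\mu$, and the same argument gives $\varphi_1(O_1) = \varphi_2(O_2)$, i.e.\ $\varphi(O_2) = O_1$, so the new arc is automatically satisfied by the existing solution. A cloning step is a special case of an expansion step and is handled identically, the cloned sink receiving the common restriction order. Finally, for a critical double arc, the required order-preservation of the associated permutation $\varphi$ on the target's order $O'$ follows at once from the two parallel arcs from $T$: they enforce $\varphi_1(O') = \varphi_2(O')$, which is precisely the condition $\varphi(O') = O'$.

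The main obstacle is the expansion step, and specifically verifying that $O^{\ast}$ is well defined and represented by the newly constructed intersection $T'$. Both points reduce to standard properties of PQ-tree projection and intersection reviewed in Section~\ref{sec:pq-trees}, combined with the observation that a P-node $\mu$ in the parent~$T$ uniquely determines the circular order of its commonly fixed incident edges, so that every child whose arc fixes $\mu$ inherits this same suborder.
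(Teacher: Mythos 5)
Your proposal is correct and follows essentially the same line of reasoning as the paper's proof: the backward direction is the trivial subgraph observation, and the forward direction shows that each expansion or finalizing step preserves a given solution because the two parent orders $O_1, O_2$ both arise as suborders of the same order $O$ around the P-node $\mu$, so they agree on the commonly fixed edges, and the resulting common suborder is represented by the intersection $T(\mu, T_1, T_2)$. The paper phrases the argument as a structural induction over expansion trees (explicitly handling the case where $T$, $T_1$, or $T_2$ are themselves expansion trees), while you induct on the number of expansion/finalizing/cloning steps; these are interchangeable. One small note: your final remark about the associated permutation of a critical double arc being order-preserving is true but not needed for this lemma --- it only becomes relevant later (in Lemma~\ref{lem:solve-1-critical-instances}) when actually choosing orders bottom-up; marking a pair of arcs as a critical double arc does not by itself change the DAG and so requires no verification here.
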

\begin{proof}
  It is clear that $D$ is a subgraph of $D_{\ex}$.  Hence, if we have
  simultaneous orders for the expansion graph $D_{\ex}$, we of course
  also have simultaneous orders for the original instance $D$.

  It remains to show that we do not loose solutions by applying
  expansion or finalizing steps.  Assume we have simultaneous orders
  for the original instance $D$.  Since every expansion tree is a
  descendant of a PQ-tree in $D$, for which the order is already
  fixed, there is no choice left for the expansion trees.  Thus, we
  only need to show that for every expansion tree all parents induce
  the same order on its leaves and that this order is represented by
  the expansion tree.  We first show this for the expansion graph
  without the arcs inserted due to finalizing steps.  Afterwards, we
  show that adding these arcs preserves valid solutions.

  Consider an expansion tree $T(\mu, T_1, T_2)$ introduced due to the
  critical triple $(\mu, T_1, T_2)$ such that $T_1$, $T_2$ and the
  tree $T$ containing $\mu$ are not expansion trees.  By construction
  $T(\mu, T_1, T_2)$ represents the edges incident to $\mu$ fixed with
  respect to $T_1$ and $T_2$.  Since the orders chosen for $T_1$,
  $T_2$ and $T$ are valid simultaneous orders, $T_1$ and $T_2$ induce
  the same order for the leaves of $T(\mu, T_1, T_2)$.  Since $T(\mu,
  T_1, T_2)$ has no other incoming arcs, we do not need to consider
  other parents.  The induced order is of course represented by the
  projection of $T_1$ and $T_2$ to the commonly fixed edges incident
  to $\mu$, and hence it is of course also represented by their
  intersection $T(\mu, T_1, T_2)$.  For the case that $T$, $T_1$ or
  $T_2$ are expansion trees, we can assume by induction that the
  orders chosen for $T$, $T_1$ and $T_2$ are valid simultaneous
  orders, yielding the same result that $T_1$ and $T_2$ induce the
  same order represented by $T(\mu, T_1, T_2)$.  It remains to show,
  that the arcs introduced by a finalizing step respect the chosen
  orders.  Let $T(\mu, T_1, T_2)$ a critical triple such that $T_1$
  and $T_2$ consist of single P-nodes both fixing the same edges in
  $\mu$.  It is clear that the order chosen for $\mu$ induces the same
  order for $T_1$ and $T_2$ with respect to the canonical bijection
  $\varphi$ between the leaves of $T_1$ and $T_2$.  Hence, adding an
  arc $(T_1, T_2; \varphi)$ preserves simultaneous PQ-orders.
\end{proof}

For now, we know that we can consider the expansion graph instead of
the original instance to solve {\sc Simultaneous PQ-Ordering}.
Lemma~\ref{lem:simult-extension-of-orders} motivates that we can solve
the instance given by the expansion graph by simply choosing orders
bottom-up, if additionally the Q-constraints are satisfiable.
However, this only works for ``simple'' instances since satisfying
critical triples is no longer sufficient for a P-node that is fixed
with respect to more than two children.  And there is another problem,
namely that the expansion graph can become exponentially large.  In
the following section we will define precisely what ``simple'' means
and additionally address the second problem by showing that the
expansion graph has polynomial size for these instances.

\subsection{1-Critical and 2-Fixed Instances}
\label{sec:1-critical-inst}

The expansion graph was introduced to satisfy the critical triples
simply by choosing orders bottom-up, which can then be used to apply
Lemma~\ref{lem:simult-extension-of-orders}, if the additional
condition that every P-node is contained in at most one critical
triple is satisfied.  Let $D$ be an instance of {\sc Simultaneous
  PQ-Ordering} and let $D_{\ex}$ be its expansion graph.  We say that
$D$ is a \emph{1-critical instance}, if in its expansion graph
$D_{\ex}$ every P-node is contained in at most one critical triple.
We will first prove a lemma helping us, to deal with critical double
arcs.  Afterwards, we show how to solve 1-critical instances
efficiently.

\begin{lemma}
  \label{lem:unsatisfied-double-edges-are-sinks}
  Let $D$ be a 1-critical instance of {\sc Simultaneous PQ-Ordering}
  with expansion graph $D_{\ex}$.  Let further $(T, T'; \varphi_1)$
  and $(T, T'; \varphi_2)$ be a critical double arc.  Then $T'$ is a
  sink in $D_{\ex}$.
\end{lemma}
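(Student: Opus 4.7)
The plan is to exploit the bookkeeping built into the expansion procedure, which guarantees that the target $T'$ of any critical double arc in $D_{\ex}$ is an expansion tree, together with 1-criticality, which then rules out any further interaction of $T'$ with other trees.

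First, I would check that $T'$ is necessarily an expansion tree. If the parallel pair $(T,T';\varphi_1),(T,T';\varphi_2)$ was already present in the input $D$ (so $T'$ itself lies in $D$), the construction explicitly performs one additional expansion step, creating a fresh expansion tree and passing the critical-double-arc label to it; if instead the pair first appears during processing, then it was produced by an expansion step and its target is an expansion tree by definition. In either case $T'$ has, at the moment of its creation, exactly the two parallel incoming arcs from $T$ and no outgoing arcs. Moreover, because the finalizing condition forces $T'$ to consist of a single P-node, $T'$ represents every order of its leaves $L'$; by normalization the projection $\proj{T}{L'}$ must then also be a single P-node, so $\mu$ is the only inner node of $T$ surviving the projection and hence the only node of $T$ fixed with respect to $T'$.

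With these facts in hand I would argue that no new arc is ever attached to $T'$. A new arc incident to $T'$ can only be produced by a finalizing or expansion step triggered by a critical triple that contains $T'$ among its two children, i.e., a triple at a parent of $T'$. By induction along the processing order $T$ remains the unique parent of $T'$; hence any such triple lives at $T$ and, by the previous paragraph, must involve $\mu$, so it is of the form $(\mu,T',T^{\dagger})$. 1-criticality forces $\mu$ to lie in a unique critical triple of $D_{\ex}$, namely the critical double arc $(\mu,T',T')$ itself, so $T^{\dagger}=T'$. In this case the would-be finalizing step is a self-loop and is excluded by construction, while the would-be expansion step is the one that would process the critical double arc; since $T'$ is already an expansion tree, this case is explicitly marked as processed without creating any new arc. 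Hence no arc, and in particular no outgoing arc, is ever added to $T'$, so $T'$ is a sink.

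The principal technical subtlety is the bookkeeping for the ``double arc already in $D$'' case: one must confirm that the extra expansion really does move the critical-double-arc marking onto an expansion tree, at which point the structural induction proceeds uniformly.
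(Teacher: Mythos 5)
Your proof is correct and follows essentially the same approach as the paper's: identify the unique P-node $\mu$ of $T$ that is fixed with respect to $T'$, observe that any new arc incident to $T'$ would require a second critical triple containing $\mu$, and invoke 1-criticality to conclude that the only candidate is $(\mu,T',T')$, which by construction produces no new arc. You are slightly more explicit than the paper in two places — verifying that $T'$ is an expansion tree by distinguishing whether the double arc was already present in $D$, and arguing by induction that $T$ remains the unique parent of $T'$ — but these are elaborations of steps the paper treats as immediate, not a different route.
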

\begin{proof}
  Since $T'$ consists only of a single P-node, there is exactly one
  P-node $\mu$ in $T$ that is fixed with respect to $T'$.  Due to the
  double arc, $\mu$ is contained in the critical triple $(\mu, T',
  T')$.  The tree $T'$ is an expansion tree by construction, hence at
  the time $T'$ is created it has only the two incoming arcs $(T, T';
  \varphi_1)$ and $(T, T'; \varphi_2)$ and no outgoing arc.  Assume
  that we can introduce an outgoing arc to $T'$ by applying an
  expansion or finalizing step.  Then $T'$ needs to be contained in
  another critical triple than $(\mu, T', T')$ and since $T$ is its
  only parent and $\mu$ is the only P-node in $T$ fixed with respect
  to $T'$, this critical triple must also contain $\mu$.  But then
  $\mu$ is contained in more than one critical triple, which is a
  contradiction to the assumption that $D$ is 1-critical.
\end{proof}

\begin{lemma}
  \label{lem:solve-1-critical-instances}
  Let $D$ be a 1-critical instance of {\sc Simultaneous PQ-Ordering}
  with expansion graph $D_{\ex}$.  In time polynomial in $|D_{\ex}|$
  we can compute simultaneous PQ-orders or decide that no such orders
  exist.
\end{lemma}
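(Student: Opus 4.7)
The plan is to work entirely on the expansion graph $D_{\ex}$, which by Lemma~\ref{lem:exp-graph-equiv} admits simultaneous PQ-orders if and only if $D$ does, and to assemble such orders by a bottom-up sweep that at each PQ-tree invokes Lemma~\ref{lem:simult-extension-of-orders}. The key structural fact we exploit is that, because $D$ is 1-critical, every P-node appearing in any tree of $D_{\ex}$ is contained in at most one critical triple, which is exactly the hypothesis needed to apply that lemma locally at each node.

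Before the sweep, I would first handle Q-node orientations globally. Collect, across all arcs of $D_{\ex}$, the Q-constraints $x_\mu = x_{\rep(\mu)}$ or $x_\mu \neq x_{\rep(\mu)}$ for every fixed Q-node. This is a 2-SAT instance whose size is linear in the total number of Q-nodes appearing in $D_{\ex}$, hence $O(|D_{\ex}|)$, and can be solved in linear time~\cite{linear-timealgorithmtesting-Aspvall.etal(79)}. If it is infeasible, reject. Otherwise, the resulting truth assignment determines a consistent orientation for every fixed Q-node in every PQ-tree of $D_{\ex}$; free Q-nodes will be oriented during the sweep.

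Next, process the PQ-trees of $D_{\ex}$ in reverse topological order. For a sink $T'$ that is not the target of a critical double arc, pick any order in $\mathcal L'$ compatible with the Q-orientations already fixed. For a sink $T'$ that is the target of a critical double arc $(T, T'; \varphi_1), (T, T'; \varphi_2)$ with associated permutation $\varphi$, the tree $T'$ consists of a single P-node, and I must choose an order $O'$ with $\varphi(O') = O'$; by Lemma~\ref{lem:permutation-order-preserving} such an order exists iff all permutation cycles of $\varphi$ have equal length, which is checkable and realizable in linear time in $|L'|$, and if it fails we reject. For an internal PQ-tree $T$ whose children have all been assigned orders, the already-chosen orientations of fixed Q-nodes satisfy the Q-constraints at $T$ by the 2-SAT assignment, and every critical triple $(\mu, T_i, T_j)$ at $T$ is satisfied automatically: by construction of $D_{\ex}$ either an expansion tree $T(\mu, T_i, T_j)$ with incoming arcs from both $T_i$ and $T_j$ was inserted, forcing $T_i$ and $T_j$ to induce the same order on the commonly fixed edges at $\mu$, or a finalizing arc $(T_i, T_j; \varphi)$ directly ties the two sibling orders. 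Lemma~\ref{lem:simult-extension-of-orders} then produces an order $O \in \mathcal L$ extending all children's orders, which can be assembled in time polynomial in the size of $T$ plus the sizes of its children.

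The main obstacle is the correctness of the reduction from ``satisfying critical triples in the lemma'' to ``respecting the arcs inserted during expansion and finalization.'' I would discharge this by verifying, case by case, that each type of inserted structure exactly captures the intended constraint: expansion trees capture the common order of commonly fixed edges via the projection-intersection construction, finalizing arcs capture the bijection between two single-P-node children fixing the same edges, and critical double arcs are covered by the sink check via Lemma~\ref{lem:permutation-order-preserving}, using Lemma~\ref{lem:unsatisfied-double-edges-are-sinks} to know their targets really are sinks so no later step can disturb the chosen order. Summing the costs (linear-time 2-SAT, plus at each node of $D_{\ex}$ a reduction/extension computation whose cost is polynomial in the local tree size) yields a total running time polynomial in $|D_{\ex}|$, as required.
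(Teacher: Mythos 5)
Your proof takes essentially the same route as the paper: pass to $D_{\ex}$ via Lemma~\ref{lem:exp-graph-equiv}, solve the Q-constraints globally as a 2-SAT instance, handle targets of critical double arcs via Lemmas~\ref{lem:unsatisfied-double-edges-are-sinks} and~\ref{lem:permutation-order-preserving}, and sweep bottom-up invoking Lemma~\ref{lem:simult-extension-of-orders} at each internal tree, noting that expansion trees and finalizing arcs guarantee the critical triples are satisfied. The only omission is the trivial preliminary check that no PQ-tree in $D_{\ex}$ is the null tree, which the paper mentions explicitly; otherwise this matches the paper's argument.
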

\begin{proof}
  Due to Lemma~\ref{lem:exp-graph-equiv}, we can solve the instance
  $D_{\ex}$ of {\sc Simultaneous PQ-ordering} instead of $D$ itself.
  Of course we cannot find simultaneous PQ-orders for the PQ-trees in
  $D_{\ex}$ if any of these PQ-trees is the null tree.  Additionally,
  Lemma~\ref{lem:simult-extension-of-orders} states that the
  Q-constraints are necessary.  We can check in linear time whether
  there exists an assignment of {\sc true} and {\sc false} to the
  variables $x_\mu$, where $\mu$ is a Q-node, satisfying the
  Q-constraints by solving a linear size instance of {\sc
    2-Sat}~\cite{ComplexityofTimetable-Even.etal(76),
    linear-timealgorithmtesting-Aspvall.etal(79)}.  Hence, if
  $D_{\ex}$ contains the null tree or the Q-constraints are not
  satisfiable, we know that there are no simultaneous PQ-orders.
  Additionally, we need to deal with the critical double arcs.  Let
  $(T, T'; \varphi_1)$ together with $(T, T'; \varphi_2)$ be a
  critical double arc.  By construction, the target $T'$ consists of a
  single P-node fixing the same edges incident to a single P-node
  $\mu$ in $T$ with respect to both edges.  Thus, $\varphi_1$ and
  $\varphi_2$ can be seen as bijections between the leaves $L'$ of
  $T'$ and the fixed edges incident to $\mu$ and hence they define a
  permutation $\varphi$ on $L'$ with $\varphi = \varphi_2^{-1} \circ
  \varphi_1$.  To satisfy the critical triple $(\mu, T', T')$, we need
  to find an order $O'$ of $L'$ such that $\varphi_1(O') =
  \varphi_2(O')$.  This equation is equivalent to $\varphi_1 \circ
  \varphi (O') = \varphi_2 \circ \varphi(O')$, and hence also to
  $\varphi(O') = O'$.  Thus, the critical triple $(\mu, T', T')$ is
  satisfied if and only if $\varphi$ is order preserving with respect
  to $O'$.  Whether $\varphi$ is order preserving with respect to any
  order can be tested in $\mathcal O(|L'|)$ time by applying
  Lemma~\ref{lem:permutation-order-preserving}.  Now assume we have a
  variable assignment satisfying the Q-constraints, no PQ-tree is the
  null tree and every permutation $\varphi$ corresponding to a
  critical double arc is order preserving.  We show how to find
  simultaneous PQ-orders for all PQ-trees in $D_{\ex}$.

  Start with a sink $T$ in $D_{\ex}$.  If $T$ is the target of a
  critical double arc, it is a single P-node and its corresponding
  permutation $\varphi$ is order preserving by assumption and hence we
  can use Lemma~\ref{lem:permutation-order-preserving} to choose an
  order that is preserved by $\varphi$.  Otherwise, orientate every
  Q-node $\mu$ in $T$ as determined by the variable $x_\mu$ stemming
  from it.  Additionally, choose an arbitrary order for every P-node
  in~$T$.  Afterwards mark $T$ as processed.  We continue with a
  PQ-tree $T$ in $D_{\ex}$ for which all of its children $T_1, \dots,
  T_\ell$ are already processed, that is, we traverse $D_{\ex}$
  bottom-up.  Since $T_1, \dots, T_\ell$ are processed, orders $O_1,
  \dots, O_\ell$ for their leaves were already chosen.  Consider a
  P-node $\mu$ in $T$ contained in a critical triple $(\mu, T_i,
  T_j)$.  If there is the expansion tree $T(\mu, T_i, T_j)$, it
  guarantees that the edges incident to $\mu$ fixed with respect to
  $T_i$ and $T_j$ are ordered the same in $O_i$ and $O_j$ and hence
  the critical triple is satisfied.  If we had to apply a finalizing
  step due to the critical triple $(\mu, T_i, T_j)$, we have an arc
  from $T_i$ to $T_j$ (or in the other direction), again ensuring that
  $O_i$ and $O_j$ induce the same order on the fixed edges incident to
  $\mu$.  In the special case that $(\mu, T_i, T_j)$ corresponds to a
  critical double arc, we know due to
  Lemma~\ref{lem:unsatisfied-double-edges-are-sinks} that $T_i = T_j$
  is a sink.  Then the critical triple is also satisfied, since we
  chose an order that is preserved by the permutation $\varphi$
  corresponding to the critical double arc.  Thus, all critical
  triples containing P-nodes in~$T$ are satisfied.  Additionally, the
  Q-constraints are satisfied and since $D$ is 1-critical every P-node
  $\mu$ in $T$ is contained in at most one critical triple.  Hence, we
  can apply Lemma~\ref{lem:simult-extension-of-orders} to extend the
  orders $O_1, \dots, O_\ell$ simultaneously to an order $O$
  represented by~$T$.  This extension can clearly be computed in
  polynomial time and hence $D_{\ex}$ can be traversed bottom-up
  choosing an order for every PQ-tree in polynomial time in the size
  of $D_{\ex}$.
\end{proof}

As mentioned above, the expansion graph can be exponentially large for
instances that are not 1-critical, which can be seen as follows.
Assume a P-node $\mu$ in the PQ-tree $T$ is fixed with respect to
three children $T_1$, $T_2$ and $T_3$.  Then this P-node is
responsible for the three expansion trees $T(\mu, T_1, T_2)$, $T(\mu,
T_1, T_3)$ and $T(\mu, T_2, T_3)$.  So every layer can be three times
larger than the layer above, hence the expansion graph may be
exponentially large even if there are only linearly many layers.  But
if we can ensure that $\mu$ is fixed with respect to at most two
children of $T$, that is, it is contained in at most one critical
triple, it is responsible for only one expansion tree.  Of course, the
resulting expansion tree can itself contain several P-nodes that can
again be responsible for new expansion trees.  We first prove a
technical lemma followed by a lemma stating that the size of the
expansion graph remains quadratic in the size of $D$ for 1-critical
instances.

\begin{lemma}
  \label{lem:unequality-for-linear-hight}
  If $\mu$ is a P-node responsible for an expansion tree $T$
  containing the P-nodes $\mu_1, \dots, \mu_k$, the following
  inequality holds.
  \begin{equation*}
    \sum_{i = 1}^k \deg(\mu_i) \le \deg(\mu) + 2k - 2
  \end{equation*}
\end{lemma}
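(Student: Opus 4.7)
The plan is a double-counting argument on the degrees in the expansion tree $T$. Let $n_1$ denote the number of leaves of $T$, let $k$ be the number of P-nodes $\mu_1,\dots,\mu_k$, and let $q$ be the number of Q-nodes $\nu_1,\dots,\nu_q$ in $T$. Since $T$ is constructed by projecting $T_1,T_2$ to representatives of edges incident to $\mu$ (followed by simplification and intersection), every leaf of $T$ corresponds to a distinct fixed edge incident to $\mu$, so $n_1\le\deg(\mu)$. Moreover, since projection and intersection produce simplified PQ-trees, every inner node has degree at least~$3$; in particular $\deg(\nu_j)\ge 3$ for every Q-node.

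Next I would count edge-endpoints. The total number of edges of $T$ is $m=n-1=(n_1+k+q)-1$, and each edge has either two endpoints at inner nodes or one endpoint at an inner node and one at a leaf. Summing the degrees of all inner nodes thus gives
\begin{equation*}
\sum_{i=1}^k \deg(\mu_i) \;+\; \sum_{j=1}^q \deg(\nu_j) \;=\; 2m - n_1 \;=\; 2k + 2q + n_1 - 2.
\end{equation*}
Using $\sum_{j=1}^q \deg(\nu_j)\ge 3q$ and rearranging yields
\begin{equation*}
\sum_{i=1}^k \deg(\mu_i) \;\le\; 2k + 2q + n_1 - 2 - 3q \;=\; 2k - q + n_1 - 2 \;\le\; 2k + n_1 - 2,
\end{equation*}
and finally substituting $n_1\le\deg(\mu)$ gives the claimed bound $\sum_{i=1}^k\deg(\mu_i)\le\deg(\mu)+2k-2$.

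The only slightly delicate point is the assertion that every inner node of $T$ has degree at least~$3$: this must be extracted from the fact that projection explicitly simplifies away degree-$1$ and degree-$2$ inner nodes, and that intersecting two such simplified PQ-trees does not reintroduce degree-$2$ inner nodes (since the intersection is again a canonical PQ-tree on the common leaf set). Once that observation is in place, the rest is just the counting identity above, so I don't expect any real obstacle beyond making this simplification claim explicit and noting that the trivial cases $n_1\le 2$ (where $T$ has no P-nodes at all, i.e.\ $k=0$) make the inequality hold vacuously.
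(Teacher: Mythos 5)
Your proof is correct and is essentially the same double-counting argument as the paper's: both sum degrees over all nodes of $T$, use $m = n-1 = n_1+k+q-1$, invoke $n_1 \le \deg(\mu)$ from the projection step, and discard the Q-node contribution via $\deg \ge 3$. The only difference is cosmetic (you subtract the leaf degrees up front rather than carrying $n_1$ through the identity), and you are right to flag that the $\deg \ge 3$ property for Q-nodes relies on the projection/intersection yielding a simplified PQ-tree with no degree-$2$ inner nodes -- a point the paper takes for granted.
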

\begin{proof}
  Let $\eta_1, \dots, \eta_\ell$ be the Q-nodes contained in~$T$ and
  let $n_1$ be the number of leaves in~$T$.  Let further $n$ and $m$
  denote the number of vertices and edges in $T$, respectively.  We
  obtain the following equation by double counting.
  \begin{equation}
    \label{eq:1}
    n_1 + \sum_{i = 1}^k \deg(\mu_i) + \sum_{i = 1}^\ell \deg(\eta_i) = 2m
  \end{equation}
  Since $T$ is a tree, we can replace $m$ by $n-1$ and due to the fact
  that every node in $T$ is either a leaf, a P-node or a Q-node, we
  can replace $n$ further by $n_1 + k + \ell$.  With some additional
  rearrangement we obtain the following from Equation~\eqref{eq:1}.
  \begin{equation}
    \label{eq:2}
    \sum_{i = 1}^k \deg(\mu_i) = n_1 + 2k - 2 + 2\ell - \sum_{i = 1}^\ell \deg(\eta_i)
  \end{equation}
  The tree $T$ has at most $\deg(\mu)$ leaves since it is obtained by
  projecting some PQ-tree to representatives of the edges incident to
  $\mu$, yielding the inequality $n_1 \le \deg(\mu)$.  Additionally,
  we have the inequality $2\ell - \sum \deg(\eta_i) \le 0$ since
  $\deg(\eta_i) \ge 3$.  Plugging these two inequalities into
  Equation~\eqref{eq:2} yields the claim.
\end{proof}

\begin{lemma}
  \label{lem:exp-graph-poly-size}
  Let $D$ be a 1-critical instance of {\sc Simultaneous PQ-Ordering}.
  The size of its expansion graph $D_{\ex}$ is quadratic in $|D|$.
\end{lemma}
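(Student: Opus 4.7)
The plan is to bound $|D_{\ex}|$ by controlling the total contribution of the new expansion trees and the arcs created during expansion or finalizing steps. By Lemma~\ref{lem:tree-without-deg-2}, each expansion tree $T_e$ has size $O(n_1(T_e))$, and by construction $n_1(T_e) \le \deg(\mu_e)$, where $\mu_e$ denotes the P-node responsible for $T_e$; the two incoming arcs of $T_e$ and any finalizing arc produced for the critical triple at $\mu_e$ have size at most $n_1(T_e) \le \deg(\mu_e)$ as well. Since $D$ is $1$-critical, every P-node is responsible for at most one critical triple, so it suffices to show that $\sum_{\mu \in D_{\ex},\, \text{P-node}} \deg(\mu) = O(|D|^2)$.

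I would then organize the P-nodes of $D_{\ex}$ as a \emph{responsibility forest}: the parent of a P-node $\mu'$ is the unique P-node $\mu$ whose expansion tree contains $\mu'$, and the roots are exactly the P-nodes that appear in trees of $D$. Writing $R(\mu)$ for the whole subtree rooted at such a root $\mu$, the plan is to prove $\sum_{\nu \in R(\mu)} \deg(\nu) = O(\deg(\mu)^2)$; summing over roots and using $\sum_\mu \deg(\mu)^2 \le (\sum_\mu \deg(\mu))^2$ together with $\sum_{\mu \in D, \text{P-node}} \deg(\mu) = O(|D|)$ (handshake applied inside the trees of $D$) then delivers the desired $O(|D|^2)$ bound.

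For a fixed root $\mu$, I would stratify $R(\mu)$ by the distance $j$ from $\mu$ and exploit both technical lemmas above. Let $D_j$ and $N_j$ denote respectively the total P-node degree and the number of P-nodes at level $j$. Summing the inequality of Lemma~\ref{lem:unequality-for-linear-hight} over all level-$j$ P-nodes that are responsible for an expansion tree, and noting that every non-responsible level-$j$ P-node has degree at least $3$ (and thus provides enough slack to absorb its share of the $-2 R_j$ term), yields the invariant
\[
D_{j+1} - 2 N_{j+1} \le D_j - 2 N_j .
\]
Combined with $D_0 - 2 N_0 = \deg(\mu) - 2$ and the general bound $D_j \ge 3 N_j$, this gives $N_j \le \deg(\mu) - 2$ and $D_j \le 3 \deg(\mu) - 6$, so every level of $R(\mu)$ contributes only $O(\deg(\mu))$ to the total. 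For the depth, Lemma~\ref{lem:unequality-for-linear-hight} additionally implies that if a P-node $\nu$ is responsible for an expansion tree with $k \ge 2$ P-nodes, then each of these children satisfies $\deg(\nu_i) \le \deg(\nu) - 1$, so degree can only remain constant along a responsibility chain via steps with $k = 1$; Lemma~\ref{lem:chain-of-p-nodes-terminates} then forbids three consecutive equal degrees, so the depth of $R(\mu)$ is $O(\deg(\mu))$. Multiplying the per-level and depth bounds yields $\sum_{\nu \in R(\mu)} \deg(\nu) = O(\deg(\mu)^2)$, as required.

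The main technical obstacle will be the level invariant $D_{j+1} - 2 N_{j+1} \le D_j - 2 N_j$: the sum of Lemma~\ref{lem:unequality-for-linear-hight} ranges only over responsible P-nodes at level $j$, so one has to argue that the ``wasted'' degree of each non-responsible P-node (at least three) more than pays for the missing $-2$ term on the right-hand side. Once this invariant is in place, the combination with the chain-length bound from Lemma~\ref{lem:chain-of-p-nodes-terminates} and the summation over roots is straightforward bookkeeping.
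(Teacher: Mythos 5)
Your proposal is correct and reaches the same $O(\deg(\mu)^2)$ per-root bound as the paper, but via a genuinely different decomposition. The paper bounds the \emph{number} of expansion trees a P-node $\mu$ of $D$ is transitively responsible for: it defines $\resp(\mu)$ and proves $\resp(\mu) \le 3\deg(\mu) - 8$ by a single induction on $\deg(\mu)$, using Lemma~\ref{lem:unequality-for-linear-hight} in the inductive step $\resp(\mu) = 1 + \sum_i \resp(\mu_i)$ and Lemma~\ref{lem:chain-of-p-nodes-terminates} to handle the degenerate case of an expansion tree that is a single P-node of the same degree; combined with the $O(\deg(\mu))$ size of each such tree (Lemma~\ref{lem:tree-without-deg-2}), this gives the quadratic-per-root bound directly. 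You instead build the same responsibility forest but analyze it by separating \emph{width} from \emph{depth}: a per-level invariant $D_{j+1} - 2N_{j+1} \le D_j - 2N_j$, derived by summing Lemma~\ref{lem:unequality-for-linear-hight} over the responsible P-nodes at level $j$ and observing that non-responsible P-nodes only help (since $\deg \ge 3$), caps each level's total degree at $3\deg(\mu)-6$; and a depth bound of $O(\deg(\mu))$ follows from Lemma~\ref{lem:unequality-for-linear-hight} (for $k\ge 2$ all children strictly decrease in degree) together with Lemma~\ref{lem:chain-of-p-nodes-terminates} (for $k=1$ at most two consecutive steps can preserve degree). Both proofs lean on the same two technical lemmas and close with the same summation over roots using $\sum_\mu \deg(\mu) = O(|D|)$ and $\sum_\mu \deg(\mu)^2 \le p_{\max}\sum_\mu\deg(\mu) = O(|D|^2)$. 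The paper's induction is more compact and gives an explicit constant; your width-times-depth decomposition is somewhat longer but more transparent about where the quadratic blow-up comes from and may be easier to adapt if the definition of expansion were modified.
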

\begin{proof}
  We first show that the total size of all expansion trees is in
  $\mathcal O(|D|^2)$.  Afterwards, we show that the size of all arcs
  that are contained in $D_{\ex}$ but not in $D$ is linear in the
  total size of all expansion trees in $D_{\ex}$.

  Every expansion tree $T$ in $D_{\ex}$ has a P-node that is
  responsible for it.  If this P-node is itself contained in an
  expansion tree, we can again find another responsible P-node some
  layers above.  Thus, we finally find a P-node $\mu$ that was already
  contained in $D$, which is \emph{transitively responsible} for the
  expansion tree $T$.  Every PQ-tree for which $\mu$ is transitively
  responsible can have at most $\deg(\mu)$ leaves, thus its size is
  linear in $\deg(\mu)$ due to Lemma~\ref{lem:tree-without-deg-2}.
  Furthermore, we show that $\mu$ can only be transitively responsible
  for $\mathcal O(\deg(\mu))$ expansion trees, and thus for expansion
  trees of total size $\mathcal O(\deg(\mu)^2)$.  With this estimation
  it is clear that the size of all expansion trees is quadratic in the
  size of $D$.  To make it more precisely, denote the number of
  PQ-trees $\mu$ is transitively responsible for by $\resp(\mu)$.  We
  show by induction over $\deg(\mu)$ that $\resp(\mu) \le 3\deg(\mu) -
  8$.

  A P-node $\mu$ with $\deg(\mu) = 3$ can be responsible for at most
  one PQ-tree, thus $\resp(\mu) \le 3\deg(\mu) - 8$ is satisfied.  If
  $\mu$ has $\deg(\mu) > 3$ incident edges, it is directly responsible
  for at most one expansion tree $T$, since our instance is
  1-critical.  In the special case that $T$ consists of a single
  P-node $\mu'$ with $\deg(\mu') = \deg(\mu)$, the PQ-tree for which
  $\mu'$ is responsible cannot again contain a P-node of degree
  $\deg(\mu)$ due to Lemma~\ref{lem:chain-of-p-nodes-terminates}.
  Otherwise, $T$ contains $k$ P-nodes $\mu_1, \dots, \mu_k$ with
  $\deg(\mu_i) < \deg(\mu)$.  In the special case, $\resp(\mu) =
  \resp(\mu') + 1$ holds and we show the inequality $\resp(\mu) \le
  3\deg(\mu) - 8$ for both cases by showing $\resp(\mu) \le 3\deg(\mu)
  - 9$ for the second case.  In the second case, $\mu$ is transitively
  responsible for $T$ and all the PQ-trees $\mu_1, \dots, \mu_k$ are
  responsible for, yielding the following equation.
  \begin{equation*}
    \resp(\mu) = 1 + \sum_{i = 1}^k \resp(\mu_i)
  \end{equation*}
  Plugging in the induction hypothesis $\resp(\mu_i) \le 3\deg(\mu_i)
  - 8$ yields the following inequality.
  \begin{equation*}
    \resp(\mu) \le 1 + 3\sum_{i = 1}^k \deg(\mu_i) - 8k
  \end{equation*}
  If $k = 1$, this inequality directly yields the claim $\resp(\mu)
  \le 3\deg(\mu) - 9$ since $\deg(\mu_1) \le \deg(\mu) - 1$.
  Otherwise, we can use Lemma~\ref{lem:unequality-for-linear-hight} to
  obtain $\resp(\mu) \le 3\deg(\mu) - 5 - 2k$.  This again yields the
  claim $\resp(\mu) \le 3\deg(\mu) - 9$ since $k > 1$.  Finally, we
  have that the induction hypothesis holds for $\mu$, and hence every
  P-node is transitively responsible for $\mathcal O(\deg(\mu))$
  expansion trees of size $\mathcal O(\deg(\mu))$.

  For an arc that is contained in $D_{\ex}$ but not in $D$ consider
  the critical triple $(\mu, T_1, T_2)$ that is responsible for it.
  Since $\mu$ is not contained in another critical triple, it is only
  responsible for the arcs $(T_1, T(\mu, T_1, T_2))$ and $(T_2, T(\mu,
  T_1, T_2))$ or $(T_1, T_2)$ in the case of a finalizing step.  The
  size of these arcs is in $\mathcal O(\deg(\mu))$ since the expansion
  tree contains at most $\deg(\mu)$ leaves and, if the finalizing step
  is applied, $T_1$ and $T_2$ are single P-nodes of degree at most
  $\deg(\mu)$.  Hence, the size of newly created arcs in $D_{\ex}$ is
  linear in the size of all PQ-trees in $D_{\ex}$, which concludes the
  proof.
\end{proof}

Putting Lemma~\ref{lem:solve-1-critical-instances} and
Lemma~\ref{lem:exp-graph-poly-size} together directly yields the
following theorem.  For a detailed runtime analysis see
Section~\ref{sec:impl-deta}, showing that quadratic time is
sufficient, which is not as obvious as it seems to be.

\begin{theorem}
  \label{thm:solve-1-crit-inst-in-poly-time}
  {\sc Simultaneous PQ-Ordering} can be solved in polynomial time for
  1-critical instances.
\end{theorem}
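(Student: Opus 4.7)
The plan is to simply combine the two preceding lemmas: Lemma~\ref{lem:solve-1-critical-instances} shows that a 1-critical instance $D$ can be solved in time polynomial in $|D_{\ex}|$, while Lemma~\ref{lem:exp-graph-poly-size} shows that $|D_{\ex}|$ itself is polynomial (in fact quadratic) in $|D|$. Composing these bounds yields a running time that is polynomial in $|D|$, which is exactly the statement of the theorem.

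More concretely, the algorithm I would describe is as follows. Given a 1-critical instance $D$, first construct the expansion graph $D_{\ex}$ by iteratively processing critical triples via expansion and finalizing steps until none remain unprocessed. By Lemma~\ref{lem:exp-graph-well-def} this process terminates and produces a uniquely determined finite DAG, and by Lemma~\ref{lem:exp-graph-poly-size} its total size is $\mathcal O(|D|^2)$. Then invoke the procedure from the proof of Lemma~\ref{lem:solve-1-critical-instances}: check that no PQ-tree is the null tree, solve the 2-Sat instance given by the Q-constraints, and for each critical double arc verify with Lemma~\ref{lem:permutation-order-preserving} that the associated permutation is order preserving; if any of these tests fail, report infeasibility. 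Otherwise, traverse $D_{\ex}$ bottom-up, assigning to each sink an order (using Lemma~\ref{lem:permutation-order-preserving} at targets of critical double arcs, and arbitrary orders respecting the 2-Sat assignment elsewhere), and at each internal PQ-tree combine the already chosen orders of its children via the construction of Lemma~\ref{lem:simult-extension-of-orders}. By Lemma~\ref{lem:exp-graph-equiv}, the resulting orders form a valid solution for $D$ itself.

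No new obstacle arises at this level: the only nontrivial step is invoking the already-proved Lemma~\ref{lem:exp-graph-poly-size} to guarantee that the expansion graph stays polynomial, which is exactly the place where the 1-critical assumption is used (to prevent the combinatorial explosion sketched before the lemma, where a P-node fixed by three children would spawn three new expansion trees per layer). The remark following the theorem signals that a sharper analysis actually yields quadratic time, but for the statement \emph{polynomial time} no further work is required beyond the composition of the two lemmas.
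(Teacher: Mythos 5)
Your proposal is correct and matches the paper's own argument exactly: the theorem is stated immediately after Lemmas~\ref{lem:solve-1-critical-instances} and~\ref{lem:exp-graph-poly-size}, with the remark that ``putting [them] together directly yields the following theorem.'' Your expanded description of the bottom-up algorithm simply unpacks what those lemmas already establish, so there is nothing to add.
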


Actually, Theorem~\ref{thm:solve-1-crit-inst-in-poly-time} tells us
how to solve 1-critical instances, which was the main goal of this
section.  However, the characterization of the 1-critical instances is
not really satisfying, since we need to know the expansion graph,
which may be exponentially large, to check whether an instance is
1-critical or not.  For our applications we can ensure that all
instances are 1-critical and hence do not need to test it
algorithmically.  But to prove for an application that all instances
are 1-critical, it would be much nicer to have conditions for
1-criticality of an instance that are defined for the instance itself
and not for some other structure derived from it.  In the remaining
part of this section we will provide sufficient conditions for an
instance to be 1-critical that do not rely on the expansion graph.

Let $D = (N, A)$ be an instance of {\sc Simultaneous PQ-Ordering}.
Let further $T$ be a PQ-tree with a parent $T'$ and let $\mu$ be a
P-node in $T$.  Recall that there is exactly one P-node~$\mu'$ in $T'$
it stems from, that is, $\mu'$ is fixed with respect to $\mu$ and no
other P-node in $T'$ is fixed with respect to $\mu$.  Note that there
may be several P-nodes in $T$ stemming from $\mu'$.  Consider a P-node
$\mu$ in the PQ-tree $T \in N$ such that $T$ is a source in $D$.  We
define the \emph{fixedness} $\fixed(\mu)$ of~$\mu$ to be the number of
children fixing it.  Now let~$\mu$ be a P-node of some internal
PQ-tree~$T$ of~$D$ with parents~$T_1,\dots,T_\ell$.  Each of the
trees~$T_i$ contains exactly one P-node~$\mu_i$ that is fixed
by~$\mu$.  Additionally, let~$k'$ be the number of children
fixing~$\mu$.  We set~$\fixed(\mu) = k' + \sum (\fixed(\mu_i) - 1)$.
We say that a P-node~$\mu$ is \emph{$k$-fixed}, if~$\fixed(\mu) \le k$
and an instance~$D$ is~\emph{$k$-fixed} for some integer~$k$ if all
its P-nodes are $k$-fixed.  The motivation for this definition is that
a P-node with fixedness $k$ in $D$ is fixed with respect to at most
$k$ children in the expansion graph $D_{\ex}$.  We obtain the
following theorem providing sufficient conditions for $D$ to be a
1-critical instance.

\begin{theorem}
  \label{thm:1-critical-instances-sufficient-cond}
  Every 2-fixed instance of {\sc Simultaneous PQ-Ordering} is
  1-critical.
\end{theorem}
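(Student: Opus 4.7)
The plan is to show that for every P-node in $D_{\ex}$, the number of children of its containing tree that fix it is at most two; this already implies that each such P-node participates in at most one critical triple. To do so, I would track how the set of children of each tree evolves under expansion steps and compare the result with the static quantity $\fixed$ computed on $D$. The crucial structural observation is that expansion steps only add child arcs: the set of parents of any tree present in $D_{\ex}$ is fixed once the tree is created.

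The first step would be to establish by induction along a topological ordering of $D$ the matching identity that, for every P-node $\mu$ in a tree $T$ of $D$, the number of children of $T$ in $D_{\ex}$ that fix $\mu$ equals $\fixed(\mu)$. The source case is immediate, since no expansion step can introduce a new child of a source tree (expansions at a P-node $\nu$ in a parent of $T$ are what produce new children of $T$, and a source has no parents). For an internal tree $T$ with parents $T_1, \dots, T_\ell$, every P-node $\mu$ in $T$ stems from a unique P-node $\mu_i$ in each $T_i$. By the inductive hypothesis, $\mu_i$ is fixed by $\fixed(\mu_i)$ children of $T_i$ in $D_{\ex}$, and for each of the other $\fixed(\mu_i) - 1$ such fixing children $T'$ the critical triple $(\mu_i, T, T')$ contributes exactly one new expansion-tree child of $T$ that fixes $\mu$. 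Summing over all parents and adding the number $k'$ of children of $T$ fixing $\mu$ already present in $D$ reproduces exactly the recursive formula defining $\fixed(\mu)$.

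In the second step I would bound the number of fixing children for P-nodes $\mu^*$ living inside expansion trees. Such a tree $T^*$ was created with exactly two parents $T_1, T_2$, and this set of parents does not change afterwards. Each P-node $\mu^*$ of $T^*$ stems from a P-node $\nu_j$ of $T_j$ for $j = 1, 2$, and by the first step together with the assumption that $D$ is 2-fixed we have $\fixed(\nu_j) \le 2$. Since $T^*$ has no children at the moment of its creation, its children in $D_{\ex}$ are exactly the expansion-tree children produced by critical triples along one of the two incoming arcs; the count of those that fix $\mu^*$ is bounded above by $(\fixed(\nu_1) - 1) + (\fixed(\nu_2) - 1) \le 2$.

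Combining the two steps, every P-node of $D_{\ex}$ is fixed by at most two children of its tree, hence is contained in at most one critical triple, so $D$ is 1-critical. The main delicate point I anticipate is verifying that an expansion tree $T(\nu, T', T'')$ indeed fixes only the P-nodes of $T'$ that stem from $\nu$ (and similarly for $T''$), so that the bookkeeping above matches the recursive formula exactly; a subtle subcase is that several P-nodes of $T'$ may stem from the same parent P-node $\nu$, which must be handled carefully when translating new critical triples at $\nu$ into fixing-child counts for each $\mu$ in $T'$ that stems from it.
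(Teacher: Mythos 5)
Your overall plan mirrors the paper's: argue separately for P-nodes in trees of~$D$ (bounding fixing children by $\fixed$) and for P-nodes in expansion trees (bounding by~2), and conclude via Lemma~\ref{lem:simult-extension-of-orders}. However, there are several genuine gaps.

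First, the ``crucial structural observation'' that \emph{the set of parents of any tree in $D_{\ex}$ is fixed once the tree is created} is false, and this breaks both of your steps. A \emph{finalizing step} inserts an arc $(T_1, T_2)$ between two \emph{pre-existing} trees, so its target $T_2$ acquires a new parent during the expansion process. That new parent $T_1$ contains a P-node $\mu_1$ fixed by $T_2$, which can subsequently participate in new critical triples $(\mu_1, T_2, T'')$ and thus generate \emph{additional} children of $T_2$ beyond what you counted. This affects your step~1 induction (a tree of $D$ may gain parents not present in $D$, so the recurrence $\fixed(\mu) = k' + \sum(\fixed(\mu_i) - 1)$ does not enumerate all sources of new fixing children) and your step~2 (where you use it to conclude that the children of an expansion tree are exactly those produced along its two original incoming arcs). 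The paper's own proof explicitly confronts this for expansion trees: it notes that a finalizing step adds an \emph{incoming} arc instead of an outgoing one, and argues that such a new incoming arc is responsible for at most one outgoing arc, so the count stays bounded by two. You need an argument of that sort; asserting parent-sets are static is not one.

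Second, in step~1 you claim an exact identity: the number of fixing children of $T$ equals $\fixed(\mu)$. That equality is almost certainly false in general and is also unnecessary. An expansion tree $T(\mu_i, T, T')$ created because of the critical triple $(\mu_i, T, T')$ is obtained by \emph{projecting and intersecting}, and nothing guarantees that the resulting child of $T$ actually fixes the particular P-node $\mu$ of $T$ you are tracking -- the relevant leaves may lie under a sibling P-node of $T$ that also stems from $\mu_i$, or the P-node may disappear entirely under intersection. The paper proves only the one-sided bound ``at most $\fixed(\mu)$,'' which is all that is needed and which sidesteps exactly the ``subtle subcase'' you flag at the end. Replacing ``equals'' by ``at most'' in your claim makes that acknowledged gap disappear.

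Third, in step~2 you invoke ``$\fixed(\nu_j)$'' for the P-nodes $\nu_j$ in the parents $T_1, T_2$ of the expansion tree $T^*$. But those parents can themselves be expansion trees, where $\fixed$ is not defined -- $\fixed$ is a quantity computed on $D$, not on $D_{\ex}$. What you actually need is an induction over the depth of $T^*$ in $D_{\ex}$ (as the paper does): for each parent $T_j$ of $T^*$, either $T_j \in N$ and step~1 bounds the number of children of $T_j$ fixing $\nu_j$ by $\fixed(\nu_j) \le 2$, or $T_j$ is an expansion tree of smaller depth and the induction hypothesis gives at most two; either way, $\nu_j$ is in at most one critical triple, hence responsible for at most one new fixing child of $T^*$.

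So: same decomposition and same accounting idea as the paper, but the proposal as written has a false lemma at its core (static parent sets), overclaims an equality that need not hold, and omits the induction needed to handle parents that are themselves expansion trees.
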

\begin{proof}
  Let~$D$ be a 2-fixed instance of {\sc Simultaneous PQ-Ordering} and
  let $D_{\ex}$ be its expansion graph.  We need to show for every
  P-node $\mu$ in $D_{\ex}$ that it is contained in at most one
  critical triple, that is, it is fixed with respect to at most two
  children.  We will show that separately for the cases where the tree
  $T$ containing $\mu$ is already contained in $D$ and where $T$ is an
  expansion tree.

  Assume that $T$ is already contained in $D$.  It is clear that $\mu$
  is fixed with respect to at most two children in $D$, since it is at
  most 2-fixed, but it may happen that $T$ has additional children in
  $D_{\ex}$.  We will show by induction over the depth of the node $T$
  in $D_{\ex}$ that $\mu$ has at most $\fixed(\mu)$ children fixing it
  in $D_{\ex}$.  Recall that the depth of a node in a DAG is defined
  as the length of the longest directed path from a source to this
  node.  For sources in $D$ it is clear that the number of children
  fixing a P-node does not increase by expanding $D$, which shows the
  base case.  For the general case let $T_1, \dots, T_\ell$ be the
  parents of $T$ and let $\mu_1, \dots, \mu_\ell$ be the corresponding
  P-nodes $\mu$ stems from.  Let further $\mu$ be fixed with respect
  to $k'$ children of $T$ in $D$.  By the definition of fixedness we
  have $\fixed(\mu) = k' + \sum (\fixed(\mu_i) - 1)$.  Note that
  $\fixed(\mu_i) \ge 1$ for every $i = 1, \dots, \ell$ since $\mu_i$
  is at least fixed with respect to $T$ and note further, that $T_i$
  has by induction at most $\fixed(\mu_i)$ children fixing $\mu_i$.
  Thus, $\mu_i$ can be contained in at most $\fixed(\mu_i) - 1$
  critical triples also containing $T$, which means, that $\mu_i$ can
  be responsible for at most $\fixed(\mu_i) - 1$ children of $T$ in
  $D_{\ex}$.  Hence, $T$ can have in $D_{\ex}$ at most $k' + \sum
  (\fixed(\mu_i) - 1) = \fixed(\mu)$ children fixing $\mu$.  By the
  assumption that $\fixed(\mu) \le 2$ we obtain that $\mu$ is
  contained in at most one critical triple in $D_{\ex}$.

  Now consider the case where $T$ is an expansion tree with P-node
  $\mu$.  At the time $T$ is created, it has two incoming and no
  outgoing arcs, denote the parents by $T_1$ and $T_2$, and the
  P-nodes $\mu$ stems from by $\mu_1$ and $\mu_2$, respectively.
  Again we show by induction over the depth of $T$ in $D_{\ex}$ that
  $T$ has at most two children fixing $\mu$.  In the base case, $T_1$
  and $T_2$ are both already contained in $D$.  As shown above,
  $\mu_1$ and $\mu_2$ can each be contained in at most one critical
  triple, hence expansion can introduce at most two children fixing
  $\mu$.  In the general case, a parent $T_i$ for $i = 1, 2$ is either
  contained in $D$ or an expansion graph.  In the first case it again
  can introduce at most one child fixing $\mu$, in the second case we
  can apply the induction hypothesis with the same result.  Note that
  in a finalizing step for one of the trees a new incoming arc is
  created instead of an outgoing arc.  But this incoming arc can
  itself of course be responsible for at most one outgoing arc, hence
  the number of children fixing a P-node cannot become larger than
  two.  Finally, we have that every P-node in every PQ-tree in
  $D_{\ex}$ is fixed with respect to at most two children, hence $D$
  is 1-critical.
\end{proof}

Theorem~\ref{thm:solve-1-crit-inst-in-poly-time} and
Theorem~\ref{thm:1-critical-instances-sufficient-cond} together
provide a framework for solving problems that can be formulated as
instances of {\sc Simultaneous PQ-Ordering}.  We can use
Theorem~\ref{thm:1-critical-instances-sufficient-cond} to prove that
the instances our application produces are 1-critical, whereas
Theorem~\ref{thm:solve-1-crit-inst-in-poly-time} tells us that we can
solve these instances in polynomial time.  Note that since the
Q-constraints are expressed as a {\sc 2-Sat} formula, it is also not
difficult to completely fix the orientations of some Q-nodes.

\subsection{Implementation Details}
\label{sec:impl-deta}

To solve an instance of {\sc Simultaneous PQ-Ordering}, we first normalize the
instance, then compute the expansion graph and finally choose orders bottom-up.
As shown in Lemma~\ref{lem:exp-graph-poly-size} the size of the expansion graph
is quadratic in the size of $D$.  All other steps that need to be applied are
simple, such as projection, intersection or the extension of an order.  All
these steps run in linear time, but unfortunately linear in the size of the
parent.  For example, in the normalization step the projection of a tree $T$ to
the leaves of its child $T'$ must be computed, consuming linear time in~$|T|$.
Since $T$ can be a large PQ-tree with many small children we need quadratic
time.  A similar problem arises when computing an expansion tree due to a
critical triple $(\mu, T_1, T_2)$.  To compute $T(\mu, T_1, T_2)$ the trees
$T_1$ and $T_2$ need to be projected to representatives of the commonly fixed
edges incident to $\mu$, consuming $\mathcal O(|T_1| + |T_2|)$ time.  Since the
resulting expansion tree $T(\mu, T_1, T_2)$ can be arbitrarily small, these
costs cannot be expressed in terms of $|T(\mu, T_1, T_2)|$.  But since $T_1$ and
$T_2$ can have linearly many expansion trees as children we potentially need
quadratic time for each PQ-tree in $D_{\ex}$ to compute the expansion graph,
yielding an $\mathcal O(|D|^4)$ time algorithm.  Another problem is the
extension of orders bottom-up.  If a PQ-tree $T$ has one child $T'$ with chosen
order, it is easy to extend this order to $T$ in $|T|$ time.  However, $T$ can
have linearly many children, yielding an algorithm consuming quadratic time per
PQ-tree and thus overall again $\mathcal O(|D|^4)$ time.  However, if
additionally the projection $\proj T {L'} $ of $T$ to the leaves $L'$ of $T'$ is
known, the order chosen for $T'$ can be extended in $\mathcal O(|T'|)$ time to
$\proj T {L'}$.  Furthermore, the extension of orders from several projections
of $T$ to $T$ can be done in time linear in the size of all projections, if some
additional projection information are stored.  In this section we show how to
compute the normalization in quadratic time, which is straightforward.
Afterwards, we give a more detailed estimation for the size of the expansion
graph of 1-critical instances.  Then, we show that computing the expansion graph
for 1-critical instances actually runs in quadratic time.  Furthermore, we show
for the normalization and the expansion that for every arc the projection of the
parent to the leaves of the child together with additional projection
information can be computed and stored without consuming additional time.  This
information can then be used to choose orders bottom-up in linear time in the
size of the expansion graph.  Altogether, this yields a quadratic time algorithm
to solve 1-critical instances of {\sc Simultaneous PQ-Ordering}.

In the remaining part of this section let $D = (N, A)$ be a 1-critical instance
of {\sc Simultaneous PQ-Ordering} with the expansion graph $D_{\ex} = (N_{\ex},
A_{\ex})$.  Let further $|D|$, $|N|$, $|A|$, $|D_{\ex}|$, $|N_{\ex}|$ and
$|A_{\ex}|$ denote the size of $D$, $N$, $A$, $D_{\ex}$, $N_{\ex}$ and
$A_{\ex}$, respectively.  Recall that the size of a node is linear in the size
of the contained PQ-tree and the size of an arc is linear in the size of its
target, which is due to the injective map that needs to be stored for every arc.
Furthermore, let $p_{\max}$ be the degree of the largest P-node in $D$ and let
$\#N$ denote the number of nodes in $D$.

\paragraph{Normalization.}

As mentioned above, we want to compute and store some additional information
besides computing the normalization.  In detail, let $(T, T')$ be an arc and let
$L'$ be the leaves of~$T'$.  For every node in the projection $\proj T {L'}$ of
$T$ to the leaves of $T'$ there is a node in $T$ it stems from and for every
edge incident to a P-node in the projection there is an edge incident to the
corresponding P-node in $T$ it stems from.  We say that the arc $(T, T')$ has
\emph{additional projection information}, if $\proj T {L'}$ with a pointer from
every node and edge to the node and edge in $T$ it stems from is known.  Note
that the arc $(T, T')$ does not become asymptotically larger due to additional
projection information.  In the following, being a normalized instance of {\sc
  Simultaneous PQ-Ordering} includes that every arc has additional projection
information.  The following lemma is not really surprising.

\begin{lemma}
  \label{lem:time-normalization}
  An instance $D = (N, A)$ of {\sc Simultaneous PQ-Ordering} can be normalized
  in $\mathcal O(\#N \cdot |N|)$ time.
\end{lemma}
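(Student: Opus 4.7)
The plan is to iterate over the arcs of $D$ in a topological top-down order: once we reach an arc $(T_i, T_j; \varphi)$, every incoming arc of $T_i$ has already been processed, so $T_i$ equals its final normalized form. For this arc, I would first walk over $T_i$ and build the projection $\proj{T_i}{L_j}$ in $\mathcal O(|T_i|)$ time by deleting the leaves of $T_i$ not in $\varphi(L_j)$ and contracting the resulting degree-one and degree-two inner nodes. During the same walk I would equip every node and every P-node edge of $\proj{T_i}{L_j}$ with a pointer to the node or edge of $T_i$ it stems from, producing the required additional projection information for the arc at no extra asymptotic cost. I would then replace $T_j$ by $T_j \cap \proj{T_i}{L_j}$ using Booth's intersection algorithm~\cite{PQ-TreeAlgorithms-Booth(75)}, which runs in time $\mathcal O(|T_j| + |\proj{T_i}{L_j}|) = \mathcal O(|T_i| + |T_j|)$.

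Reductions only tighten the set of orders represented by $T_j$ without altering its leaf set, so projection information previously attached to any arc incident to~$T_j$ (as either source or target) stays consistent. By induction on the topological order, after every incoming arc of $T_j$ has been processed, $\mathcal L_j$ consists precisely of those orders that extend to some order of every parent, i.e.\ the normalization condition holds. Processing top-down is essential here: reducing $T_j$ against an outdated $T_i$ would in general miss reductions that $T_i$ inherits later from its own parents.

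For the running time, each arc costs $\mathcal O(|T_i| + |T_j|)$, so the total work is
\[\sum_{(T_i,T_j) \in A} \bigl(|T_i| + |T_j|\bigr) \;=\; \sum_{T \in N} |T|\bigl(\mathrm{indeg}(T) + \mathrm{outdeg}(T)\bigr) \;\le\; 2(\#N - 1)\sum_{T \in N} |T| \;=\; \mathcal O(\#N \cdot |N|),\]
using that, after grouping parallel arcs $(T_i, T_j; \varphi_1), (T_i, T_j; \varphi_2),\dots$ (which share the same projection $\proj{T_i}{L_j}$ and can therefore be normalized jointly within $\mathcal O(|T_i|+|T_j|)$), each node has at most $\#N-1$ incoming and $\#N-1$ outgoing neighbours. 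The main obstacle I expect is not the asymptotic bound, which follows immediately from the classical linear-time complexities of PQ-tree projection and Booth's intersection, but the bookkeeping: ensuring that after Booth's reduction rewrites parts of $T_j \cap \proj{T_i}{L_j}$, the back-pointers into $T_i$ are still correctly attached to the surviving nodes and P-node edges, so that the additional projection information demanded by later steps of the algorithm remains meaningful.
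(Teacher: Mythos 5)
Your proof takes essentially the same approach as the paper's: project each parent onto each child's leaf set, intersect via Booth's algorithm, record the back-pointers while projecting, and charge the $\mathcal O(|T_i|)$ projection cost plus $\mathcal O(|T_j|)$ intersection cost to the arc. Your additions (the explicit top-down topological order, which the paper introduces earlier in Section~3 rather than in this proof, and the per-arc rather than per-tree cost accounting) are fine elaborations; one small imprecision is the claim that parallel arcs $(T_i,T_j;\varphi_1)$ and $(T_i,T_j;\varphi_2)$ ``share the same projection'' --- the images $\varphi_1(L_j)$ and $\varphi_2(L_j)$ can differ, so the projections can be structurally distinct --- but since the bound only needs each node to have $\mathcal O(\#N)$ incident arcs (an assumption the paper also makes implicitly when it writes ``$T$ may have $\#N$ children''), this does not affect the conclusion.
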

\begin{proof}
  To normalize an instance $D$ of {\sc Simultaneous PQ-Ordering} we need to
  project $T$ to the leaves of $T'$ and intersect the result with $T'$ for every
  arc $(T, T')$ in $D$.  The projection can be done in $\mathcal O(|T|)$ time,
  while the intersection consumes $\mathcal O(|T'|)$ time.  Note that the
  additional projection information can be simply stored directly after
  computing the projection.  Since $T$ may have $\#N$ children all these
  projections consume $\mathcal O(\#N \cdot |T|)$ time.  Summing over all
  PQ-trees yields $\mathcal O(\#N \cdot |N|)$ for the normalization of $D$.
\end{proof}

\paragraph{Size of the Expansion Graph.}

In Lemma~\ref{lem:exp-graph-poly-size} we already showed that the expansion
graph of a 1-critical instance has quadratic size.  However, this can be done
more precisely.

\begin{lemma}
  \label{lem:time-exp-size}
  Let $D$ be a 1-critical instance of {\sc Simultaneous PQ-Ordering} with the
  expansion graph $D_{\ex}$.  It holds $|D_{\ex}| \in \mathcal O(p_{\max} \cdot
  |N| + |A|)$, where $p_{\max}$ is the degree of the largest P-node in $D$.
\end{lemma}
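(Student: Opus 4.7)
The plan is to refine the argument used in Lemma~\ref{lem:exp-graph-poly-size} rather than redo it from scratch. I split $|D_{\ex}|$ into three contributions: the size of the original nodes and arcs (which contribute at most $|N|+|A|$), the total size of the newly created expansion trees, and the size of the newly created arcs. For the latter two I will re-use the bookkeeping by transitively responsible P-nodes already introduced in the previous lemma.

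First I will bound the total size of all expansion trees. Recall that every expansion tree $T$ in $D_{\ex}$ has a unique transitively responsible P-node $\mu$ that was already present in $D$. By Lemma~\ref{lem:tree-without-deg-2} the size of $T$ is $\mathcal O(\deg(\mu))$, since $T$ is obtained by projecting a parent to a subset of representatives of the edges incident to $\mu$. From the inductive inequality $\resp(\mu) \le 3\deg(\mu)-8$ established in the proof of Lemma~\ref{lem:exp-graph-poly-size}, $\mu$ is transitively responsible for $\mathcal O(\deg(\mu))$ expansion trees, so its total contribution is $\mathcal O(\deg(\mu)^2)$. The key observation I will make here is the trivial estimate $\deg(\mu)^2 \le p_{\max}\cdot \deg(\mu)$, which turns the previous quadratic bound into a linear dependence once we sum over all original P-nodes: the sum $\sum_{\mu}\deg(\mu)$ is bounded by the total number of edge incidences at P-nodes across all PQ-trees in $N$, and thus by $\mathcal O(|N|)$. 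This yields $\mathcal O(p_{\max}\cdot |N|)$ as the combined size of all expansion trees.

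For the newly created arcs I will argue as in Lemma~\ref{lem:exp-graph-poly-size}: every such arc is created because of a critical triple $(\mu,T_1,T_2)$, and since the instance is 1-critical, each P-node is responsible for at most two new arcs (or one in the finalizing case). Each of these arcs stores an injective map of size $\mathcal O(\deg(\mu))$, so their total size is again bounded by the same sum $\sum_\mu \deg(\mu)^2 \in \mathcal O(p_{\max}\cdot |N|)$. Adding the original $|N|$ and $|A|$ contributions gives the claimed $|D_{\ex}| \in \mathcal O(p_{\max}\cdot |N| + |A|)$.

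The main obstacle, and the only place where care is needed, is the step $\sum_{\mu}\deg(\mu) \in \mathcal O(|N|)$: the sum is taken only over P-nodes present in the original instance $D$, not in $D_{\ex}$, and $\deg(\mu)$ is the degree in $D$. This bound holds because each such incidence is an edge of some PQ-tree in $N$, whose total edge count is $\mathcal O(|N|)$ by Lemma~\ref{lem:tree-without-deg-2}. Once this is observed, the rest of the argument is a direct rearrangement of the estimates already proved and does not require new combinatorial input.
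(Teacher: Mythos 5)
Your proposal is correct and follows essentially the same approach as the paper: it reuses the bound $\resp(\mu) \le 3\deg(\mu)-8$ from Lemma~\ref{lem:exp-graph-poly-size}, applies the estimate $\deg(\mu)^2 \le p_{\max}\cdot\deg(\mu)$, and sums $\deg(\mu)$ over the original P-nodes to get $\mathcal O(|N|)$, exactly as the paper's proof does. The treatment of the newly created arcs is also the same, so there is no substantive difference.
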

\begin{proof}
  The proof of Lemma~\ref{lem:exp-graph-poly-size} shows that every P-node $\mu$
  can be transitively responsible for at most $3\deg(\mu) - 8$ expansion trees
  where each of these expansion trees has size $\mathcal O(\deg(\mu))$.  Thus,
  $\mu$ is responsible for expansion trees of total size $\mathcal
  O(\deg(\mu)^2)$.  To compute the total size of all expansion trees we need to
  sum over all P-nodes $\mu_1, \dots, \mu_\ell$ that are already contained in
  $D$.  The following estimations show the claimed size of $\mathcal O(p_{\max}
  \cdot |N|)$.
  \begin{equation*}
    \sum_{i = 1}^\ell \deg(\mu_i)^2 \le p_{\max} \cdot \sum_{i = 1}^\ell\deg(\mu_i) 
    \le p_{\max} \cdot |N|
  \end{equation*}

  As mentioned in the proof of Lemma~\ref{lem:exp-graph-poly-size} the size of
  all newly created arcs in $D_{\ex}$ is linear in the size of all nodes in
  $D_{\ex}$.  Thus we obtain $|D_{\ex}| \in \mathcal O(p_{\max} \cdot |N| +
  |A|)$ for the whole expansion graph.
\end{proof}

\paragraph{Computing the Expansion Graph.}

When computing the expansion tree $T(\mu, T_1, T_2)$ due to the
critical triple $(\mu, T_1, T_2)$ we need to project $T_1$ and $T_2$
to the representatives of the commonly fixed edges incident to $\mu$.
Let $T$ denote the tree containing $\mu$ and let $L_1$ and $L_2$ be
the leaves of $T_1$ and $T_2$, respectively.  First, we need to find
the commonly fixed edges and a representative for each.  Assume that
the projections $\proj T {L_1}$ and $\proj T {L_2}$ are stored as
ensured by the normalization.  Then for every edge incident to $\mu$
it can be easily tested in constant time, if it is contained in both
projections, consuming $\mathcal O(\deg(\mu))$ time overall.  With a
simple traversal of $\proj T {L_i}$ (for $i = 1, 2$) representatives
of these commonly fixed edges can be found in $\mathcal O(|T_i|)$ time
and the projection of $T_i$ to these representatives can also be done
in $\mathcal O(|T_i|)$ time.  The intersection of the two projections
yields $T(\mu, T_1, T_2)$ in $\mathcal O(|T(\mu, T_1, T_2)|)$ time,
which can be neglected.  For the two newly created arcs $(T_1, T(\mu,
T_1, T_2))$ and $(T_2, T(\mu, T_1, T_2))$ we again need to ensure that
the additional projection information are stored.  However, this
projection was already computed and can simple be stored without
additional running time.  Hence the total running time for computing
the expansion tree $T(\mu, T_1, T_2)$ is in $\mathcal O(\deg(\mu) +
|T_1| + |T_2|)$.  Thus, a superficial analysis yields quadratic
running time in the size of the expansion graph.  However, we can do
better, as shown in the following lemma.

\begin{lemma}
  \label{lem:time-comp-exp-graph}
  The expansion graph $D_{\ex}$ of a 1-critical instance $D = (N, A)$
  of {\sc Simultaneous PQ-Ordering} can be computed in $\mathcal
  O(|N|^2)$ time.
\end{lemma}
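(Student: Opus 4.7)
The plan is to compute $D_{\ex}$ in two stages: first normalize $D$ using Lemma~\ref{lem:time-normalization} in $\mathcal{O}(\#N \cdot |N|) \subseteq \mathcal{O}(|N|^2)$ time, equipping every original arc with the additional projection information, and then process the P-nodes of the growing DAG in topological order, performing the relevant expansion or finalizing step at every P-node that is fixed by two children. At a P-node $\mu$ of a tree $T$, the fact that $\mu$ is fixed by exactly two children $T_1$ and $T_2$ together with a list of the commonly fixed edges can be read off in $\mathcal{O}(\deg(\mu))$ time from the stored projections $\proj{T}{L_{T_1}}$ and $\proj{T}{L_{T_2}}$. The remainder of the step (finding representatives in $L_{T_i}$, projecting $T_i$ onto them, intersecting the two projections, and recording the projection information for the two newly created arcs) is executed exactly as described in the paragraph preceding the lemma and costs $\mathcal{O}(|T_1| + |T_2| + \deg(\mu))$.

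The sum of all $\deg(\mu)$ contributions is bounded by $|N_{\ex}| \in \mathcal{O}(p_{\max} |N| + |A|) \subseteq \mathcal{O}(|N|^2)$ by Lemma~\ref{lem:time-exp-size}, so the critical quantity is $S = \sum_{(\mu,T_1,T_2)}(|T_1| + |T_2|)$. I plan to split $S$ according to whether the parent tree $T^*$ containing $\mu$ already exists in $D$ (a \emph{top-level} step) or is itself an expansion tree (a \emph{deeper} step). For top-level steps, the inputs $T_1, T_2$ are children of $T^*$ in $D$; if $x_C$ counts the top-level triples at $T^*$ using child $C$ as an input, then because $D$ is 1-critical every P-node of $T^*$ is involved in at most one triple, and hence $\sum_C x_C \le 2P(T^*)$, where $P(T^*)$ denotes the number of P-nodes of $T^*$. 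The contribution at $T^*$ is therefore at most $(\max_C |C|) \cdot 2P(T^*) \le |N| \cdot 2P(T^*)$, and summing over $T^* \in N$ yields $\mathcal{O}(|N| \cdot \sum_{T^*} P(T^*)) \subseteq \mathcal{O}(|N|^2)$.

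For deeper steps, let $\mu_0 \in T_0$ be the original P-node transitively responsible for $T^*$. By Lemmas~\ref{lem:chain-of-p-nodes-terminates} and~\ref{lem:unequality-for-linear-hight} together with the proof of Lemma~\ref{lem:exp-graph-poly-size}, $\mu_0$ is transitively responsible for $\mathcal{O}(\deg(\mu_0))$ expansion trees of size $\mathcal{O}(\deg(\mu_0))$ each, and every input of every deeper expansion step chargeable to $\mu_0$ is itself one of these expansion trees. Consequently each of the $\mathcal{O}(\deg(\mu_0))$ deeper steps at $\mu_0$ costs $\mathcal{O}(\deg(\mu_0))$, giving $\mathcal{O}(\deg(\mu_0)^2)$ in total, and $\sum_{\mu_0 \in D} \deg(\mu_0)^2 \le p_{\max} \sum_{\mu_0} \deg(\mu_0) \le p_{\max} \cdot |N| \subseteq \mathcal{O}(|N|^2)$. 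The main obstacle is precisely this dichotomy: a naive summation over all critical triples in $D_{\ex}$ risks $\Theta(|D_{\ex}|^2)$ time, whereas the separate arguments for top-level and deeper steps, both enabled by 1-criticality, are what keep the total running time at $\mathcal{O}(|N|^2)$.
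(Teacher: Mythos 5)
Your proposal follows the same high-level strategy as the paper's proof: start from the per-step cost $\mathcal{O}(\deg(\mu)+|T_1|+|T_2|)$, use 1-criticality to cap the number of critical triples per P-node at one, and amortize the $|T_1|+|T_2|$ contributions by a case analysis that charges them to original P-nodes or original trees. The case split you use is slightly coarser than the paper's (you split only by whether the tree containing~$\mu$ is in~$N$, while the paper also distinguishes whether the more expensive input $T_1$ is in~$N$), but the bookkeeping still works out to $\mathcal{O}(|N|^2)$ in both of your cases, so the conclusion is sound.

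Two of your supporting claims, however, are stated more strongly than they hold, and should be weakened. In the top-level case you assert that the inputs $T_1,T_2$ are children of $T^*$ in~$D$; in fact an original tree $T^*\in N$ can acquire expansion-tree children in $D_{\ex}$, so they are children of $T^*$ in $D_{\ex}$ but not necessarily in $D$. This does not break your estimate, because you only use $\max_C |C|\le |N|$, which holds for expansion trees as well (their size is $\mathcal{O}(\deg(\mu_0))\le |N|$). Similarly, in the deeper case the claim "every input of every deeper expansion step chargeable to $\mu_0$ is itself one of these expansion trees" is not quite true: an input can be an original tree reattached by a finalizing arc, or an expansion tree whose transitively responsible original P-node differs from $\mu_0$ (the two parents of an expansion tree need not share a responsibility chain). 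What is true, and what your bound actually needs, is that both inputs are children of $T^*$ in $D_{\ex}$, and any child of the $\mu_0$-responsible tree $T^*$ has size $\mathcal{O}(|T^*|)=\mathcal{O}(\deg(\mu_0))$ (an expansion-tree child is a projection of $T^*$, and a finalizing-arc child is a single P-node of the same degree as $T^*$). Replacing those two claims with these size arguments makes the proposal fully rigorous and essentially equivalent to the paper's proof, which charges the three subcases to $T_1$ itself, to $\mu_1$, and to the transitively responsible $\mu''$, respectively.
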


\begin{proof}
  As mentioned above, computing the expansion tree $T(\mu, T_1, T_2)$
  consumes $\mathcal O (\deg(\mu) + |T_1| + |T_2|)$ time.  We consider
  this time as cost and show how to assign it to different parts of
  $D$ defining them to be responsible for this cost.  The cost
  $\mathcal O(\deg(\mu))$ can be simply assigned to $\mu$.  Since
  every P-node $\mu$ is contained in at most one critical triple this
  can happen at most once, yielding linear cost in total.  Assume
  without loss of generality that $|T_1| \ge |T_2|$.  In this case we
  only need to assign the cost $\mathcal O(|T_1|)$.  To do that, we
  consider three cases.

  \begin{figure}[tb]
    \centering \subcaptionbox{ \label{fig:ipl-details-a}}
    {\includegraphics[page=1]{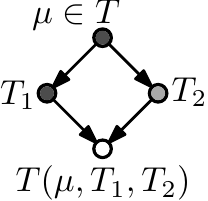}}\hspace{3em}
    \subcaptionbox{ \label{fig:ipl-details-b}}
    {\includegraphics[page=2]{fig/implementation-details}}\hspace{3em}
    \subcaptionbox{ \label{fig:ipl-details-c}}
    {\includegraphics[page=3]{fig/implementation-details}}
    \caption{Nodes in the original graph are shaded dark gray and
      expansion trees white.  Light gray is used where it does not
      matter.  (\subref{fig:ipl-details-a}) The case where $T_1$ is
      contained in the original graph.  (\subref{fig:ipl-details-b})
      The case where $T_1$ is an expansion graph but $T$ containing
      $\mu$ is not.  (\subref{fig:ipl-details-c}) The case where
      neither $T_1$ nor $T$ are expansion graphs.}
    \label{fig:implementation-details}
  \end{figure}

  {\bf If $\boldsymbol{T_1 \in N}$}, that is, $T_1$ is not an
  expansion tree, then we assign the cost $\mathcal O(|T_1|)$ to
  $T_1$.  This can happen at most as many times as $T_1$ occurs in a
  critical triple.  In each of these critical triples there
  necessarily is a P-node that is contained in a PQ-tree in a parent
  of~$T_1$.  There can be $\mathcal O(|N|)$ of these P-nodes and since
  every P-node is contained in at most one critical triple the total
  cost assigned to $T_1$ is in $\mathcal O(|N| \cdot |T_1|)$.  Note
  that no expansion tree is responsible for any cost, thus by summing
  over all PQ-trees in $D_{\ex}$ we obtain that the total cost is in
  $\mathcal O(|N|^2)$.  Figure~\ref{fig:ipl-details-a} illustrates
  this case.

  {\bf If $\boldsymbol{T_1 \not\in N}$ but $\boldsymbol{\mu \in T \in
      N}$}, that is, $T_1$ is an expansion tree, but the P-node $\mu$
  is contained in the original graph $D$.  Then $T_1$ has exactly two
  parents, like every other expansion tree, and of course one of them
  is the tree $T$ containing the P-node $\mu$.  Furthermore, there is
  a P-node $\mu_1$ responsible for $T_1$; let $T_1'$ be the PQ-tree
  containing $\mu_1$.  Thus $T_1$ was created due to a critical triple
  containing $\mu_1$ and $T$, and $T_1'$ containing $\mu_1$ needs to
  be a parent of $T$ as depicted in Figure~\ref{fig:ipl-details-b}.
  In this case we assign the cost $\mathcal O(|T_1|)$ to $T_1'$ or
  more precisely to $\mu_1$.  Since $T$ was already contained in the
  original graph, we also have $T_1' \in N$, thus again, only PQ-trees
  from the original graphs are responsible for any costs.  Since $T_1$
  is obtained by projecting $T$ and its other parent to
  representatives of edges incident to $\mu_1$ we have that $|T_1| \in
  \mathcal O(\deg(\mu_1))$.  Due to the fact that $\mu_1$ is contained
  in at most one critical triple it is overall responsible for
  $\mathcal O(\deg(\mu_1))$ cost and hence we obtain only linear cost
  by summing over all P-nodes in all PQ-trees in $D$.

  {\bf If $\boldsymbol{T_1 \not\in N}$ and $\boldsymbol{\mu \in T \not
      \in N}$}, that is, $T_1$ is an expansion tree and $\mu$ is
  contained in an expansion tree.  In other words, we are somehow
  ``far away'' from the original graph.  With the same argument as
  before, we can find a P-node $\mu'$ in a PQ-tree $T'$ that is
  responsible for the PQ-tree $T$ containing $\mu$ and this PQ-tree
  needs to be a parent of the PQ-tree $T_1'$; see
  Figure~\ref{fig:ipl-details-c}.  If $T'$ again is an expansion tree,
  we can find a P-node responsible for it and so on, until we reach a
  P-node $\mu''$ in the PQ-tree $T''$ that is transitively responsible
  for $T$ and $T'$, such that $T''$ is already contained in the graph
  $D$.  Then we assign the cost $\mathcal O(|T_1|)$ to $T''$ or more
  precisely to $\mu''$.  Since $T_1$ is a child of $T$ its size needs
  to be linear in $|T|$.  Furthermore, since $\mu''$ is transitively
  responsible for $T$, we have $|T| \in \mathcal O(\deg(\mu''))$.
  Thus we assign cost linear to $\deg(\mu'')$ to $\mu''$.  As shown
  for Lemma~\ref{lem:exp-graph-poly-size} $\mu''$ can be transitively
  responsible for at most $3\deg(\mu'') - 8$ expansion trees, thus it
  is overall responsible for $\mathcal O(\deg(\mu'')^2)$ cost.  Note
  that again only PQ-trees in $D$ are responsible for any costs.  Thus
  by summing over all P-nodes in all PQ-trees we obtain $\mathcal
  O(p_{\max} \cdot |N|)$.

  To sum up, the costs from the first case are dominating, hence we
  obtain a running time of $\mathcal O(|N|^2)$ for computing the
  expansion graph $D_{\ex}$ of a 1-critical instance $D = (N, A)$ of
  {\sc Simultaneous PQ-Ordering}.
\end{proof}

\paragraph{Extending Orders.}

As shown in Lemma~\ref{lem:solve-1-critical-instances}, {\sc
  Simultaneous PQ-Ordering} can be solved for 1-critical instances in
time polynomial in the size of the expansion graph.  There are three
things to do, first the Q-constraints need to be satisfied, which can
be checked in linear time, second the critical double arcs need to be
satisfied, which again can be done in linear time if possible, and
finally orders for the edges around P-nodes need to be chosen
bottom-up.  This is not obviously possible in linear time.  However,
the additional projection information that is stored for every arc
makes it possible, which is shown in the following lemma.

\begin{lemma}
  \label{lem:time-extending-orders}
  Let $D$ be a 1-critical instance of {\sc Simultaneous PQ-Ordering} with
  expansion graph~$D_{\ex}$.  In $\mathcal O(|D_{\ex}|)$ time we can compute
  simultaneous PQ-orders or decide that no such orders exist.
\end{lemma}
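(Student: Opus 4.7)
The plan is to process $D_{\ex}$ bottom-up in a topological order, using the projection information stored with each arc to extend orders from children to parents in amortized linear total time. First, I would run the three feasibility checks identified by Lemma~\ref{lem:solve-1-critical-instances}: verify that no PQ-tree in $D_{\ex}$ is the null tree, solve the Q-constraints as a linear-size 2-SAT instance to obtain an orientation for every fixed Q-node, and for each critical double arc check via Lemma~\ref{lem:permutation-order-preserving} that the associated permutation is order preserving, recording a witnessing preserved order. If any check fails, declare infeasibility; each check is in $\mathcal O(|D_{\ex}|)$.

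Assuming feasibility, I would traverse $D_{\ex}$ bottom-up. At each sink $T$, pick an order consistent with the 2-SAT assignment on its Q-nodes and, if $T$ is the target of a critical double arc, equal to the witnessing preserved order from the previous step; otherwise pick any order for its free P-nodes. At an internal tree $T$ whose children $T_1, \dots, T_\ell$ already carry orders $O_1, \dots, O_\ell$, proceed in two phases. First, for every arc $(T, T_i)$ lift $O_i$ to an order on the stored projection $\proj T {L_i}$ in $\mathcal O(|T_i|)$ time; the projection has leaves $L_i$ and its inner nodes and P-edges carry pointers back into $T$, so the lifted order can be written directly onto the image nodes and edges of $T$ via those pointers. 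Second, assemble the order on $T$ from these partial orders: Q-nodes are oriented according to the 2-SAT assignment; for each P-node $\mu$ of $T$, since $D$ is 1-critical $\mu$ is fixed by at most two children, and satisfaction of the unique critical triple at $\mu$ (ensured by the expansion tree, the finalizing arc, or by Lemma~\ref{lem:unsatisfied-double-edges-are-sinks} together with the order-preserving choice at the double-arc sink) makes the orders induced by those children agree on the commonly fixed edges; insert the remaining free edges arbitrarily. Lemma~\ref{lem:simult-extension-of-orders} certifies that the resulting order is represented by $T$ and extends every $O_i$.

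For the running time, the first phase at $T$ costs $\mathcal O(\sum_i |T_i|)$, and the second phase costs $\mathcal O(\sum_i |\proj T {L_i}|)$, which by Lemma~\ref{lem:tree-without-deg-2} is also $\mathcal O(\sum_i |T_i|)$ because $\proj T {L_i}$ has no degree-2 nodes and at most $|L_i| \le |T_i|$ leaves. Summed over all PQ-trees of $D_{\ex}$ this gives $\mathcal O(|A_{\ex}|) = \mathcal O(|D_{\ex}|)$, plus an additive $\mathcal O(|N_{\ex}|)$ for the traversal itself.

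The main obstacle, and the reason a naive implementation would be quadratic, is exactly this extension step: without the cached projection information each extension of an $O_i$ to $T$ would have to walk over all of $T$ rather than over $T_i$, and a single $T$ can have $\Theta(|N_{\ex}|)$ children. The additional projection information carried by every arc---maintained by the normalization and expansion procedures of Lemma~\ref{lem:time-normalization} and Lemma~\ref{lem:time-comp-exp-graph}---is what allows the per-arc work to be charged to the arc's own size, yielding the desired $\mathcal O(|D_{\ex}|)$ bound. A small technical point I would need to verify carefully is that the glueing in the second phase, where partial orders from up to two children meet at a single P-node of $T$, can genuinely be done in $\mathcal O(\deg(\mu))$ per P-node (not per child) by processing the two incident partial orders and the free edges in a single pass along the cyclic order; 1-criticality is exactly what keeps this local.
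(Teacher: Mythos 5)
Your proposal follows essentially the same route as the paper: cache the projection $\proj{T}{L_i}$ with its back-pointers per arc, lift each child's order onto the corresponding nodes of $T$ in $\mathcal O(|T_i|)$ time, merge at most two partial orders per P-node in $\mathcal O(\deg\mu)$, and charge the work to arcs so the total is $\mathcal O(|N_{\ex}| + |A_{\ex}|)$. The only imprecision is in your phase-2 accounting: the free P-nodes and free edges of $T$ do not appear in any $\proj{T}{L_i}$, so the correct per-tree bound is $\mathcal O(|T| + \sum_i |T_i|)$ rather than $\mathcal O(\sum_i |\proj{T}{L_i}|)$; your trailing ``$+\mathcal O(|N_{\ex}|)$ for the traversal'' silently absorbs this, but it is worth making the $\mathcal O(|T|)$ term explicit, as the paper does.
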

\begin{proof}
  The major work for this lemma was already done in the proof of
  Lemma~\ref{lem:solve-1-critical-instances}.  It remains to show how orders for
  the P-nodes can be chosen bottom-up in the expansion graph in linear time.

  Consider a PQ-tree $T$ in the expansion graph $D_{\ex}$ having the PQ-trees
  $T_1, \dots, T_\ell$ as children.  Assume further that orders $O_1, \dots,
  O_\ell$ are already chosen for the children.  The obvious approach to extend
  these orders simultaneously to an order represented by $T$ would take
  $\mathcal O(\ell \cdot |T|)$ time, yielding a worst case quadratic running
  time per PQ-tree in the expansion tree.  However, it can also be done in
  $\mathcal O(|T| + |T_1| + \dots + |T_\ell|)$ time, which can be seen as
  follows.  Let $T_i$ be one of the children of $T$ and let $T_i'$ be the
  projection of $T$ to the leaves of $T_i$, which was stored for the arc $(T,
  T_i)$ while normalizing and expanding.  Since $T_i'$ has as many leaves as
  $T_i$, we can apply the order $O_i$ to $T_i'$ in $\mathcal O(|T_i|)$ time,
  inducing an order of incident edges around every P-node of $T_i'$.  Now let
  $\mu_i$ be a P-node of $T_i'$ and let $\mu$ be the P-node in $T$ it stems
  from.  Recall that we can find $\mu$ in constant time and furthermore for an
  edge incident to $\mu_i$ we can find the edge incident to $\mu$ it stems
  from in constant time.  Thus, we can simply take the order of incident edges
  around $\mu_i$ and replace each edge by the edge incident to $\mu$ it stems
  from.  This order is then stored for~$\mu$.  Note that $\mu$ may store up to
  two orders in this way since it is fixed with respect to at most two
  children.  It is clear that this can be done in $\mathcal O(\deg(\mu_i))$
  time, thus processing all nodes in $T_i$ takes $\mathcal O(|T_i|)$ time.
  Now assume we have processed all children of $T$.  Then for the free P-nodes
  in $T$ there is nothing stored, for a P-node $\mu$ fixed with respect to one
  child there is one order given for a subset of edges incident to $\mu$ and
  for the P-nodes fixed with respect to two children there are two such
  orders.  In the first case, we can simply choose an arbitrary order for the
  edges incident to $\mu$, taking $\mathcal O(\deg(\mu))$ time.  In the second
  case, the free edges are added in an arbitrary way to the already ordered
  edges, which can again be done in $\mathcal O(\deg(\mu))$ time.  If we have
  two orders, these orders need to be merged, which can clearly be done in
  linear time.  Afterwards, the free edges can be added in an arbitrary way.
  This again consumes $\mathcal O(\deg(\mu))$ time.  Hence, we need for each
  node in $T$ linear time in its degree and hence $\mathcal O(|T|)$ for the
  whole tree.  Altogether we obtain the claimed $\mathcal O(|T| + |T_1| +
  \dots + |T_\ell|)$ running time for extending the orders $O_1, \dots,
  O_\ell$ to an order $O$ represented by $T$.  Recall, that $|T_i|$ is linear
  in the size of the arc $(T, T_i)$.  Thus, extending orders bottom-up in the
  expansion graph $D_{\ex} = (N_{\ex}, A_{\ex})$ takes $\mathcal O(|N_{\ex}| +
  |A_{\ex}|) = \mathcal O(|D_{\ex}|)$ time.
\end{proof}

\paragraph{Overall Running Time.}

For applications producing instances of {\sc Simultaneous PQ-Ordering} it may be
possible that reconsidering the runtime analysis containing normalization, size
and computation time of the expansion graph and order extension yields a better
running time then $\mathcal O(|N|^2)$.  However, for the general case we obtain
the following theorem by putting Lemma~\ref{lem:time-normalization},
Lemma~\ref{lem:time-exp-size}, Lemma~\ref{lem:time-comp-exp-graph} and
Lemma~\ref{lem:time-extending-orders} together.  Note that the running time is
dominated by the computation of the expansion graph.

\begin{theorem}
  \label{thm:time-solve-in-quadr-time}
  {\sc Simultaneous PQ-Ordering} can be solved in $\mathcal O(|N|^2)$ time for a
  1-critical instance $D = (N, A)$.
\end{theorem}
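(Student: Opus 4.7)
The plan is to assemble the four preceding lemmas of this section into a single three-phase algorithm and verify that each phase fits within $\mathcal O(|N|^2)$. The three phases are: (i) normalize the input instance while storing the additional projection information on each arc, (ii) compute the expansion graph $D_{\ex}$, and (iii) traverse $D_{\ex}$ bottom-up to either output simultaneous orders or detect infeasibility.

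For phase (i), I would invoke Lemma~\ref{lem:time-normalization}, giving an $\mathcal O(\#N \cdot |N|)$ bound; since $\#N \le |N|$, this is at most $\mathcal O(|N|^2)$. This phase also ensures that each arc carries its additional projection information, which is required by phase (iii). For phase (ii), I would directly apply Lemma~\ref{lem:time-comp-exp-graph} to obtain the expansion graph in $\mathcal O(|N|^2)$ time; this is the step that dominates the overall running time.

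For phase (iii), I would use Lemma~\ref{lem:time-extending-orders}, which solves the expansion graph in $\mathcal O(|D_{\ex}|)$ time. It remains to check that $|D_{\ex}| \in \mathcal O(|N|^2)$. By Lemma~\ref{lem:time-exp-size}, $|D_{\ex}| \in \mathcal O(p_{\max} \cdot |N| + |A|)$. The first summand is at most $|N|^2$ since $p_{\max} \le |N|$. For the second summand, I would use that the size of every arc $(T,T') \in A$ is linear in $|T'|$, and that a fixed target $T'$ has at most $\#N$ parents, so
\[
|A| \;\le\; \sum_{T'} \#N \cdot |T'| \;\le\; \#N \cdot |N| \;\le\; |N|^2,
\]
and hence $|D_{\ex}| \in \mathcal O(|N|^2)$. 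Summing the three phases gives the claimed bound.

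There is essentially no hard step: correctness of the algorithm follows from Lemma~\ref{lem:exp-graph-equiv} and Lemma~\ref{lem:solve-1-critical-instances}, while each running-time bound is already packaged in an individual lemma. The only mild obstacle is bookkeeping, namely making sure that the auxiliary projection information produced in phase (i) is preserved when creating new arcs in phase (ii), so that phase (iii) can truly run in time linear in $|D_{\ex}|$; but the proofs of Lemma~\ref{lem:time-normalization} and Lemma~\ref{lem:time-comp-exp-graph} explicitly maintain this invariant, so the composition goes through.
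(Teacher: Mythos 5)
Your proof is correct and takes essentially the same route as the paper, which simply remarks that the theorem follows by combining Lemmas~\ref{lem:time-normalization}, \ref{lem:time-exp-size}, \ref{lem:time-comp-exp-graph} and \ref{lem:time-extending-orders}, with the expansion-graph computation dominating the running time. Your explicit check that $|A| \le \#N\cdot|N| \le |N|^2$ (via the fact that an arc's size is linear in its target and a node has at most $\#N$ parents) is a small detail the paper leaves implicit, but it is the right observation and does not change the argument.
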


\subsection{Simultaneous PQ-Ordering with Reversing Arcs}
\label{sec:simult-pq-order-reversed}

As mentioned in Section~\ref{sec:spqr-trees} we can express all
embeddings of a biconnected planar graph in terms of PQ-trees by
considering the embedding tree $T(v)$ describing all possible orders
of incident edges around $v$, if we additionally ensure that Q-nodes
stemming from the same R-node in the SPQR-tree $\mathcal T$ are
oriented the same and pairs of P-nodes stemming from the same P-node
in $\mathcal T$ are ordered oppositely.  Forcing edges to be ordered
the same can be easily achieved with an instance of {\sc Simultaneous
  PQ-Ordering} by inserting a common child.  However, we want to
enforce edges around P-nodes to be ordered oppositely and not the
same.  Note that this cannot be achieved by simply choosing an
appropriate injective mapping from the leaves of the child to the
leaves of the parent, since it depends on the order if such a map
reverses it.

To solve this problem we introduce {\sc Simultaneous PQ-Ordering with
  Reversing Arcs}, which is an extension of the problem {\sc
  Simultaneous PQ-Ordering}.  Again, we have a DAG $D = (N, A)$ with
nodes $N = \{T_1, \dots, T_k\}$, such that every node $T_i$ is a
PQ-tree and every arc consists of a source $T_i$, a target $T_j$ and
an injective map $\varphi : L_j \rightarrow L_i$, where $L_i$ and
$L_j$ are the leaves of $T_i$ and $T_j$, respectively.  In addition to
that, every arc can be a \emph{reversing arc}.  Reversing arcs are
denoted by $(T_i, -T_j; \varphi)$, whereas normal arcs are denoted by
$(T_i, T_j; \varphi)$ as before.  {\sc Simultaneous PQ-Ordering with
  Reversing Arcs} asks whether there exist orders $O_1, \dots, O_k$
such that every normal arc $(T_i, T_j; \varphi) \in A$ implies that
$\varphi(O_j)$ is a suborder of $O_i$, whereas every reversing arc
$(T_i, -T_j; \varphi) \in A$ implies that the reversal of
$\varphi(O_j)$ is a suborder of $O_i$.  As for {\sc Simultaneous
  PQ-Ordering}, we define an instance of {\sc Simultaneous PQ-Ordering
  with Reversing Arcs} to be \emph{normalized}, if a normal arc $(T_i,
T_j; \varphi)$ implies that $\mathcal L_i$ contains an order $O_i$
extending $\varphi(O_j)$ for every order $O_j \in \mathcal L_i$ and a
reversing arc $(T_i, -T_j; \varphi)$ implies that $\mathcal L_i$
contains an order $O_i$ extending the reversal of $\varphi(O_j)$ for
every order $O_j \in \mathcal L_j$, where $\mathcal L_i$ and $\mathcal
L_j$ are the sets of orders represented by $T_i$ and $T_j$,
respectively.  Since $\mathcal L_i$ is represented by a PQ-tree, it is
closed with respect to reversing orders.  Thus, if $\mathcal L_i$
contains an order extending $\varphi(O_j)$, it also contains an order
extending the revers order of $\varphi(O_j)$.  Hence, we can normalize
an instance of {\sc Simultaneous PQ-Ordering with Reversing Arcs} in
the same way we normalize an instance of {\sc Simultaneous
  PQ-Ordering} by ignoring that some of the arcs are reversing.

In the following we show how to adapt the solution for {\sc
  Simultaneous PQ-Ordering} presented in the previous sections to
solve {\sc Simultaneous PQ-Ordering with Reversing Arcs}.  To give a
rough overview, the definitions of the Q-constraints and the critical
triples can be modified in a straight-forward manner, such that
Lemma~\ref{lem:simult-extension-of-orders}, stating that satisfying
the Q-constraints and the critical triples is necessary and sufficient
to be able to extend orders chosen for several PQ-trees to an order of
a common parent, is still true.  By declaring some of the created arcs
to be reversing, the definitions of expansion and finalizing step can
be easily adapted such that the resulting expansion trees and the
newly created arcs ensure that the responsible critical triples are
satisfied.  Thus, again the only critical triples that are not
automatically satisfied by choosing orders bottom-up correspond to
critical double arcs.
Lemmas~\ref{lem:chain-of-p-nodes-terminates},~\ref{lem:exp-graph-well-def}
and~\ref{lem:exp-graph-equiv} showing that the expansion graph is well
defined and equivalent to the original instance work in exactly the
same way.  For the definition of 1-critical instances there is no need
to change anything.
Lemma~\ref{lem:unsatisfied-double-edges-are-sinks} stating that
critical double arcs have a sink as target works as before.  In
Lemma~\ref{lem:solve-1-critical-instances} we showed how to solve
1-critical instances by testing whether the Q-constraints are
satisfiable and whether we can choose orders for the critical double
arcs satisfying the corresponding critical triple.  If this was the
case, we simply chose orders bottom-up.  Testing the Q-constraints can
now be done in the same way.  For the critical double arcs we can do
the same as before if both arcs are normal or both are reversing.  If
one of them is normal and the other is reversing, we need to check if
the corresponding permutation is order reversing instead of order
preserving, hence we use Lemma~\ref{lem:permutation-order-reverting}
instead of Lemma~\ref{lem:permutation-order-preserving}.  Afterwards,
it is again ensured that every critical triple is satisfied, hence we
can choose orders bottom-up as before.
Lemma~\ref{lem:exp-graph-poly-size} stating that the expansion graph
has quadratic size for 1-critical instances works as before, since the
only change in the definition of the expansion graph is that some arcs
are reversing arcs instead of normal arcs, which of course does not
change the size of the graph.  Finally, we can put
Lemma~\ref{lem:solve-1-critical-instances} and
Lemma~\ref{lem:exp-graph-poly-size} together yielding that {\sc
  Simultaneous PQ-Ordering with Reversing Arcs} can be solved in
polynomial time for 1-critical instances as stated before in
Theorem~\ref{thm:solve-1-crit-inst-in-poly-time} for {\sc Simultaneous
  PQ-Ordering}.
Theorem~\ref{thm:1-critical-instances-sufficient-cond} providing an
easy criterion that an instance is 1-critical works exactly the same
as before.

Let us start with the Q-constraints in more detail.  Let $\mu$ be a
Q-node in $T$ that is fixed with respect to the child $T'$ of $T$ and
let $\rep(\mu)$ be its representative in $T'$.  To ensure that $\mu$
is ordered as determined by $\rep(\mu)$, we introduced either the
constraint $x_\mu = x_{\rep(\mu)}$ or $x_\mu \not= x_{\rep(\mu)}$.
Now if the arc $(T, T')$ is reversing, we simply negate this
constraint, ensuring that $\mu$ is orientated oppositely to the
orientation determined by $\rep(\mu)$.  Let $\mu$ be a P-node in the
PQ-tree $T$ that is fixed with respect to two children $T_1$ and $T_2$
of $T$.  Then $\mu$, $T_1$ and $T_2$ together form again a critical
triple.  If both arcs $(T, T_1)$ and $(T, T_2)$ are normal arcs, we
denote this critical triple by $(\mu, T_1, T_2)$ as before.  If $(T,
-T_i)$ is a reversing arc, we symbolise that by a minus sign in the
critical triple, for example if we have the arcs $(T, T_1)$ and $(T,
-T_2)$, we denote the critical triple by $(\mu, T_1, -T_2)$.  Assume
we have orders $O_1$ and $O_2$ represented by $T_1$ and $T_2$,
respectively.  In the case that both arcs are normal or both are
reversing, we say that the critical triple is satisfied, if the edges
incident to $\mu$ fixed with respect to $T_1$ and $T_2$ are ordered
the same in both orders $O_1$ and $O_2$, which is the same definition
as before.  In the case that one of the arcs is normal and the other
is reversing, we define a critical triple to be satisfied if the order
$O_1$ induces the opposite order than $O_2$ for the commonly fixed
edges incident to $\mu$.  With these straight-forwardly adapted
definitions it is clear that the proof of
Lemma~\ref{lem:simult-extension-of-orders} works exactly as before.
To improve readability we cite this lemma here.

\begin{relemma}{lem:simult-extension-of-orders}
  \lemSimultExtensionOfOrdersText
\end{relemma}

This lemma implies that we can choose orders bottom-up, if we ensure
that the Q-constraints and the critical triples are satisfied, which
leads us to the definition of the expansion graph.  If we have a
critical triple $(\mu, (-)T_1, (-)T_2)$, in general we apply an
expansion step as before, that is, we project $T_1$ and $T_2$ to
representatives of the commonly fixed edges incident to $\mu$ and
intersect the result to obtain the expansion tree $T(\mu, (-)T_1,
(-)T_2)$.  Additionally, we add arcs from $T_1$ and $T_2$ to the
expansion tree.  The only thing we need to change is that the arc from
$T_i$ (for $i = 1, 2)$ to $T(\mu, (-)T_1, (-)T_2)$ is reversing if the
arc $(T, -T_i)$ is reversing.  Consider for example the critical
triple $(\mu, -T_1, T_2)$.  Then we have the reversing arcs $(T,
-T_1)$ and $(T_1, -T(\mu, -T_1, T_2))$ and the normal arcs $(T, T_2)$
and $(T_2, T(\mu, -T_1, T_2))$.  If we choose an order for the leaves
of $T(\mu, -T_1, T_2)$ representing the common fixed edges incident to
$\mu$, this order is reversed when it is extended to an order $O_1$
represented by $T_1$ and it remains the same by extension to an order
$O_2$ represented by~$T_2$.  Hence, the edges incident to $\mu$ fixed
with respect to $T_1$ and $T_2$ are ordered oppositely in $O_1$ and
$O_2$ implying that the critical triple $(\mu, -T_1, T_2)$ is
satisfied.  In other words by extending an order represented by
$T(\mu, -T_1, T_2)$ to an order of $T$ containing $\mu$ it is reversed
twice over the path containing $T_1$ yielding the same order as an
extension over the path containing $T_2$ not reversing it at all.  The
other three configurations work analogously.  The finalizing step can
be handled similarly.  If for a critical triple $(\mu, (-)T_1,
(-)T_2)$ both PQ-trees $T_1$ and $T_2$ consist of a single P-node
fixing the same edges incident to $\mu$, we obtain a bijection
$\varphi$ between the leaves of $T_1$ and the leaves of~$T_2$.  As
before, we create an arc from $T_2$ to $T_1$ with the map $\varphi$.
This new arc is a normal arc if both arcs $(T, (-)T_1)$ and $(T,
(-)T_2)$ are normal or if both are reversing.  If one is reversing and
one is normal, the new arc $(T_1, -T_2; \varphi)$ is reversing.
Again, this new arc ensures that the critical triple $(\mu, (-)T_1,
(-)T_2)$ is satisfied, if we choose orders bottom-up.  Note that we
need to consider the special case where we have a critical triple
$(\mu, (-)T', (-)T')$ due to a double arc.  As before we apply
expansion steps as if the children were different, ensuring that the
critical triple is satisfied.  Again, a finalizing step would
introduce a self loop, thus we simply prune expansion here (if $T'$ is
an expansion tree, otherwise we apply one more expansion step),
introducing an unsatisfied double arc.  The only difference to the
unsatisfied double arcs we had before is that the arcs may be
reversing.

For an instance $D$ of {\sc Simultaneous PQ-Ordering with Reversing
  Arcs}, we obtain the expansion graph $D_{\ex}$ by iteratively
applying expansion and finalizing steps.  Denote the expansion graph
that we would obtain from $D$ if we assume that all arcs are normal
by~$D_{\ex}'$.  It is clear that the only difference between $D_{\ex}$
and $D_{\ex}'$ is that some arcs in $D_{\ex}$ are reversing arcs.
Hence, everything we proved for the structure of the expansion graph
of an instance of {\sc Simultaneous PQ-Ordering} still holds if we
allow reversing arcs.  Particularly, we have that the expansion graph
is well defined (Lemma~\ref{lem:chain-of-p-nodes-terminates} and
Lemma~\ref{lem:exp-graph-well-def}), that the target of every
unsatisfied double arc is a sink if $D$ is 1-critical
(Lemma~\ref{lem:unsatisfied-double-edges-are-sinks}), that $|D_{\ex}|$
is polynomial in $|D|$ if $D$ is 1-critical
(Lemma~\ref{lem:exp-graph-poly-size}) and that $D$ is 1-critical if it
is at most 2-fixed
(Theorem~\ref{thm:1-critical-instances-sufficient-cond}).
Furthermore, all the implementation details provided in
Section~\ref{sec:impl-deta} still work.  Note that we say that an
instance $D$ is 1-critical if every P-node in every PQ-tree in
$D_{\ex}$ is contained in at most one critical triple, which is
exactly the same definition as before.

It remains to show, that the instances $D$ and $D_{\ex}$ are still
equivalent (Lemma~\ref{lem:exp-graph-equiv}) and that we can solve
$D_{\ex}$ by checking the Q-constraints, dealing with the unsatisfied
double arcs and finally choosing orders bottom-up, if $D$ is
1-critical (Lemma~\ref{lem:solve-1-critical-instances}).  In the proof
of Lemma~\ref{lem:exp-graph-equiv} we had to show that simultaneous
PQ-orders for all PQ-trees in $D$ induce simultaneous PQ-orders for
$D_{\ex}$.  That can be done analogously for the case where we allow
reversing arcs.  Most parts of the proof for
Lemma~\ref{lem:solve-1-critical-instances} can be adapted straight
forwardly since Lemma~\ref{lem:simult-extension-of-orders} still holds
if we allow reversing arcs.  The only difference is that the arcs in
an unsatisfied double arc can be reversing.  Consider an unsatisfied
double arc $(T, (-)T'; \varphi_1)$ and $(T, (-)T'; \varphi_2)$
together with the corresponding permutation $\varphi$ on the leaves of
$T'$.  If both arcs are normal or both are reversing, we need to check
if $\varphi$ is order preserving and choose an order that is preserved
by $\varphi$, which can be done due to
Lemma~\ref{lem:permutation-order-preserving}.  If, however, one of the
arcs is normal and the other is reversing, we need to check if
$\varphi$ is order reversing and then choose an order that is
reversed.  This is something we have not done before, but it can be
easily done by applying Lemma~\ref{lem:permutation-order-reverting}
instead of Lemma~\ref{lem:permutation-order-preserving}.  Finally,
Lemma~\ref{lem:solve-1-critical-instances} also works if we allow
reversing arcs and hence we obtain the following theorem analogously
to Theorem~\ref{thm:time-solve-in-quadr-time}

\begin{theorem}
  {\sc Simultaneous PQ-Ordering with Reversing Arcs} can be solved in
  $\mathcal O(|N|^2)$ time for a 1-critical instances $D = (N, A)$.
\end{theorem}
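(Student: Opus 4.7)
The plan is to mirror the proof of Theorem~\ref{thm:time-solve-in-quadr-time} essentially line by line, taking advantage of the preparatory discussion above showing that every structural lemma used in the non-reversing case already extends to the reversing case. First I would normalize the instance $D$ in $\mathcal O(\#N \cdot |N|)$ time exactly as in Lemma~\ref{lem:time-normalization}; normalization does not care whether arcs are reversing, since the set of orders represented by a PQ-tree is closed under reversal. Next I would construct the expansion graph $D_{\ex}$ by iteratively applying expansion and finalizing steps, with the only modification that newly created arcs are labelled reversing or normal according to the parity of the reversing arcs they replace (as described explicitly in the paragraph preceding the theorem). The size bound $|D_{\ex}| \in \mathcal O(p_{\max} \cdot |N| + |A|)$ from Lemma~\ref{lem:time-exp-size} and the computation bound $\mathcal O(|N|^2)$ from Lemma~\ref{lem:time-comp-exp-graph} depend only on the combinatorial structure of the expansion graph and thus carry over without change.

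Once $D_{\ex}$ is in hand, I would proceed exactly as in Lemma~\ref{lem:solve-1-critical-instances}, with three small adaptations. First, the Q-constraints now include negated equalities whenever the defining arc is reversing; the resulting system is still an instance of {\sc 2-Sat} of size linear in $|D_{\ex}|$ and can be decided in linear time. Second, for every critical double arc $(T, (-)T'; \varphi_1)$, $(T, (-)T'; \varphi_2)$ with associated permutation $\varphi = \varphi_2^{-1} \circ \varphi_1$, I would test order-preservation of $\varphi$ via Lemma~\ref{lem:permutation-order-preserving} when both arcs have the same sign, and order-reversibility via Lemma~\ref{lem:permutation-order-reverting} when the signs differ, in either case producing an order for the sink $T'$ in linear time (or correctly reporting infeasibility). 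Third, once all sinks have been assigned orders, I would traverse $D_{\ex}$ bottom-up and extend orders using the adapted version of Lemma~\ref{lem:simult-extension-of-orders}; every non-double critical triple is satisfied automatically by construction of the expansion graph, so the extension step at each internal node succeeds.

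The bottom-up extension itself is handled by the same trick as in Lemma~\ref{lem:time-extending-orders}: the additional projection information stored for each arc during normalization and expansion lets us propagate the order chosen for a child $T_i$ into the projection $\proj{T}{L_i}$ in $\mathcal O(|T_i|)$ time, so that the total extension work at $T$ is $\mathcal O(|T| + \sum_i |T_i|)$. The only genuinely new bookkeeping is that when the arc $(T, T_i)$ is reversing, the order induced on $\proj{T}{L_i}$ must be reversed before being merged into the partial order at the fixing P-node of $T$; this is a constant-factor overhead per leaf. Summing gives $\mathcal O(|D_{\ex}|)$ time for the whole traversal.

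The hardest conceptual point is not any single step but rather verifying that the sign bookkeeping in the expansion step is consistent, i.e. that a critical triple $(\mu, (-)T_1, (-)T_2)$ really is satisfied whenever the expansion tree $T(\mu, (-)T_1, (-)T_2)$ admits simultaneously extendable orders on $T_1$ and $T_2$. This amounts to the parity argument sketched above: an order chosen in the expansion tree is reversed once along each path labelled with an odd number of reversing arcs, so by choosing the arc signs so that the two paths from the expansion tree to $T$ have matching total parity, the induced orders at $\mu$ coincide exactly when required by the sign pattern of the triple. Once this is checked, combining the running times of normalization, expansion and bottom-up extension yields the overall bound $\mathcal O(|N|^2)$, which is dominated by the expansion step, matching Theorem~\ref{thm:time-solve-in-quadr-time}.
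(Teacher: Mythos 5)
Your proposal is correct and follows essentially the same route as the paper: re-run the non-reversing pipeline (normalize, build the expansion graph, check Q-constraints as a {\sc 2-Sat} instance, handle critical double arcs, extend orders bottom-up), with the only substantive changes being the sign/parity bookkeeping on arcs, the negated Q-constraints for reversing arcs, and the use of Lemma~\ref{lem:permutation-order-reverting} in place of Lemma~\ref{lem:permutation-order-preserving} for mixed-sign critical double arcs. The paper likewise derives the theorem by observing that every structural and runtime lemma from the non-reversing case carries over unchanged, so no new argument is needed beyond the adaptations you list.
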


Now that we know that 1-critical instances of {\sc Simultaneous
  PQ-Ordering with Reversing Arcs} can be solved essentially in the
same way as 1-critical instances of {\sc Simultaneous PQ-Ordering} we
do not longer distinguish between these two problems.  Thus, if we
create 1-critical instances of {\sc Simultaneous PQ-Ordering} in our
applications, we allow them to contain reversing arcs.

\section{Applications}
\label{sec:applications}

As mentioned in Section~\ref{sec:spqr-trees} and again in
Section~\ref{sec:simult-pq-order-reversed} to motivate why reversing arcs are
necessary, we want to express all combinatorial embeddings of a biconnected
planar graph in terms of PQ-trees or more precisely in terms of an instance of
{\sc Simultaneous PQ-Ordering}.  A detailed description of this instance is
given in Section~\ref{sec:pq-embedd-repr}.  This representation is then used
to solve {\sc Partially PQ-Constrained Planarity} for biconnected graphs
(Section~\ref{sec:partially-pq-constr-planarity}) and {\sc Simultaneous
  Embedding with Fixed Edges} for biconnected graphs with a connected
intersection (Section~\ref{sec:simult-embedd-biconn-conn}).  Furthermore, we
show in Section~\ref{sec:simult-interv-graphs} and
Section~\ref{sec:extend-part-interv} how {\sc Simultaneous PQ-Ordering} can be
used to recognize simultaneous interval graphs and extend partial interval
representations in linear time.

\subsection{PQ-Embedding Representation}
\label{sec:pq-embedd-repr}

Let $G = (V, E)$ be a planar biconnected graph and let $\mathcal T$ be its
SPQR-tree.  We want to define an instance $D(G) = (N, A)$ of {\sc Simultaneous
  PQ-Ordering} called the \emph{PQ-embedding representation} containing the
embedding trees representing the circular order of edges around every vertex as
defined in Section~\ref{sec:spqr-trees}, such that it is ensured that every set
of simultaneous PQ-orders corresponds to an embedding of $G$ and vice versa.
For every R-node $\eta$ in $\mathcal T$, we define the PQ-tree $Q(\eta)$
consisting of a single Q-node with three edges and for every P-node $\mu$ in
$\mathcal T$ with $k$ virtual edges in $\skel(\mu)$ we define the PQ-tree
$P(\mu)$ consisting of a single P-node of degree~$k$.  The trees $Q(\eta)$ and
$P(\mu)$ will ensure that embedding trees of different vertices sharing R- or
P-nodes in the SPQR-tree are ordered consistently, thus we will call them the
\emph{consistency trees}.  The node set $N$ of the PQ-embedding representation
contains the consistency trees $Q(\eta)$ and $P(\mu)$ and the embedding trees
$T(v)$ for $v \in V$.  If we consider an R-node $\eta$ in the SPQR-tree
$\mathcal T$, then there are several Q-nodes in different embedding trees
stemming from it and we need to ensure that all these Q-nodes are oriented the
same or in other words we need to ensure that they are all oriented the same as
$Q(\eta)$, which can be done by simply adding arcs from the embedding trees to
$Q(\eta)$ with suitable injective maps.  Similarly, the skeleton of every P-node
$\mu$ in $\mathcal T$ contains two vertices $v_1$ and $v_2$.  Thus, the
embedding trees $T(v_1)$ and $T(v_2)$ contain P-nodes $\mu_1$ and $\mu_2$
stemming from $\mu$ and every incident edge corresponds to a virtual edge in
$\skel(\mu)$.  We need to ensure that the order of incident edges around $\mu_1$
is the reversal of the order of edges around $\mu_2$, or in other words, we need
to ensure that the order for $\mu_1$ is the same and the order for $\mu_2$ is
the opposite to any order chosen for $P(\mu)$, which can be ensured by a normal
arc $(T(v_1), P(\mu))$ and a reversing arc $(T(v_2), -P(\mu))$.  If we solve the
PQ-embedding representation $D(G)$ as instance of {\sc Simultaneous PQ-Ordering}
we would choose orders bottom-up.  Thus, we would first choose orders for the
trees $P(\mu)$ and $Q(\mu)$, which corresponds to choosing orders for the
P-nodes and orientations for the R-nodes in the SPQR-tree.  For the embedding
trees there is no choice left, since all nodes are fixed by some children, which
is not surprising since the planar embedding is already chosen.  Hence,
extending the chosen orders to orders of the embedding trees can be seen as
computing the circular orders of edges around every vertex for given embeddings
of the skeletons of every node in $\mathcal T$.
Figure~\ref{fig:PQ-embedding-rep-from-SPQR} depicts the PQ-embedding
representation for the example we had before in
Figure~\ref{fig:PQ-trees-with-common-nodes-from-SPQR}.  Note that the size of
the PQ-embedding representation $D(G)$ is obviously linear in the size of the
SPQR-tree $\mathcal T$ of $G$, and thus linear in the size of the planar graph
$G$ itself.

\begin{figure}[tb]
  \centering
  \includegraphics[page=2]{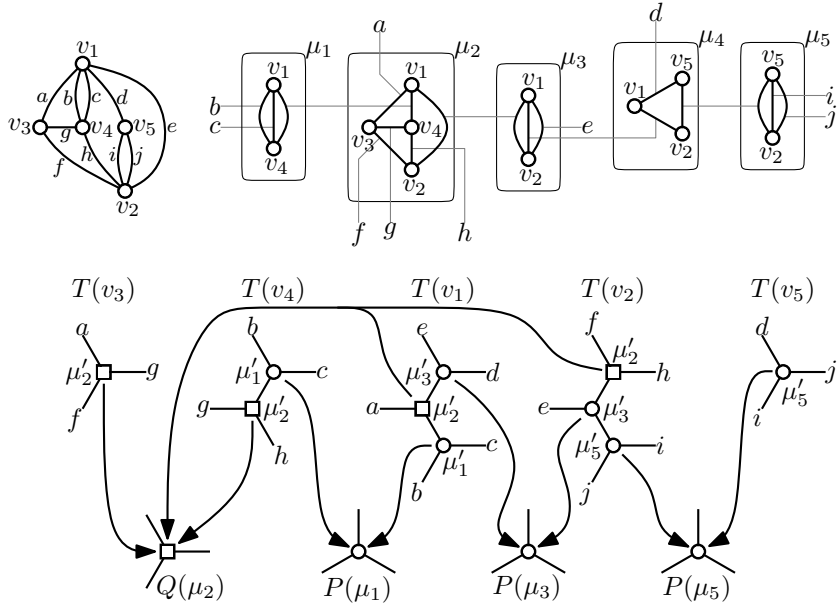}
  \caption{A biconnected planar graph and its SPQR-tree on the top and the
    corresponding PQ-embedding representation on the bottom.  The injective maps
    on the edges are not explicitly depicted, but the starting points of the
    arcs suggests which maps are suitable.}
  \label{fig:PQ-embedding-rep-from-SPQR}
\end{figure}

The PQ-embedding representation is obviously less elegant than the SPQR-tree,
also representing all embeddings of a biconnected planar graph.  At least for a
human, the planar embeddings of a graph are easy to understand by looking at the
SPQR-tree, whereas the PQ-embedding representation does not really help.
However, with the PQ-embedding representation it is easier to formulate
constraints concerning the order of incident edges around a vertex, since these
orders are explicitly expressed by the embedding trees.

\subsection{Partially PQ-Constrained Planarity}
\label{sec:partially-pq-constr-planarity}

Let $G = (V, E)$ be a planar graph and let $C = \{T'(v_1), \dots,
T'(v_n)\}$ be a set of PQ-trees, such that for every vertex $v_i \in
V$ the leaves of $T(v_i)$ are a subset $E'(v_i) \subseteq E(v_i)$ of
edges incident to $v_i$.  We call $T'(v_i)$ the \emph{constraint tree}
of the vertex $v_i$.  The problem {\sc Partially PQ-Constrained
  Planarity} asks whether a planar embedding of $G$ exists, such that
the order of incident edges $E(v_i)$ around every vertex $v_i$ induces
an order on $E'(v_i)$ that is represented by the constraint tree
$T'(v_i)$.

Given an instance $(G, C)$ of {\sc Partially PQ-Constrained
  Planarity}, it is straightforward to formulate it as an instance of
{\sc Simultaneous PQ-Ordering} if $G$ is biconnected.  Simply take the
PQ-embedding representation $D(G)$ of $G$ and add the constraint trees
together with an arc $(T(v), T'(v); \id)$ from the embedding tree to
the corresponding constraint tree.  Denote the resulting instance of
{\sc Simultaneous PQ-Ordering} by $D(G, C)$.
Figure~\ref{fig:planar-emb-with-PQ-const} depicts an example instance
of {\sc Partially PQ-Constrained Planarity} formulated as instance of
{\sc Simultaneous PQ-Ordering}.  Note that we can leave the orders of
edges around a vertex unconstrained by choosing the empty PQ-tree as
its constraint tree.  To obtain the following theorem, we need to show
that $(G, C)$ and $D(G, C)$ are equivalent, which is quite obvious,
and that $D(G, C)$ is an at most 2-fixed instance of {\sc Simultaneous
  PQ-Ordering}.

\begin{figure}[tb]
  \centering
  \includegraphics[page=3]{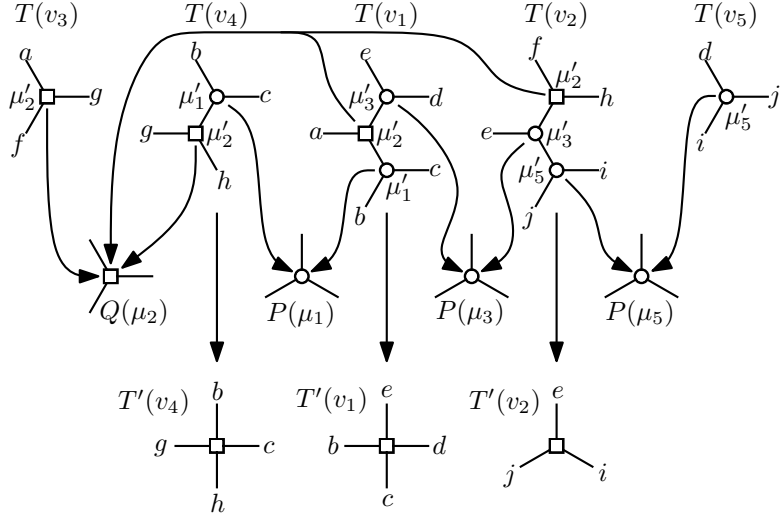}
  \caption{The PQ-embedding representation from
    Figure~\ref{fig:PQ-embedding-rep-from-SPQR} together with the
    constraint trees provided by an instance of {\sc Partially
      PQ-Constrained Planarity}.}
  \label{fig:planar-emb-with-PQ-const}
\end{figure}

\begin{theorem}
  \label{thm:plan-emb-with-pq-const-biconn}
  {\sc Partially PQ-Constrained Planarity} can be solved in quadratic
  time for biconnected graphs.
\end{theorem}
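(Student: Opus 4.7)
The plan is to verify that the instance $D(G,C)$ constructed just before the theorem statement is equivalent to $(G,C)$ and is at most 2-fixed, so that Theorem~\ref{thm:1-critical-instances-sufficient-cond} and Theorem~\ref{thm:time-solve-in-quadr-time} directly yield a quadratic-time algorithm.

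First, I would establish the equivalence. By the discussion in Section~\ref{sec:pq-embedd-repr}, simultaneous PQ-orders for the embedding trees together with the consistency trees $P(\mu^*)$ and $Q(\eta)$ are in bijection with the planar embeddings of the biconnected graph $G$: choosing an order for each $P(\mu^*)$ corresponds to choosing an embedding of $\skel(\mu^*)$, choosing an orientation for each $Q(\eta)$ corresponds to choosing an embedding of the triconnected skeleton of the R-node $\eta$, and the induced orders on the embedding trees $T(v)$ describe the rotation at every vertex. Adding an arc $(T(v), T'(v); \id)$ from each embedding tree to its constraint tree is, in a normalized instance, equivalent to requiring that the chosen rotation at $v$ restricted to $E'(v)$ be an order represented by $T'(v)$, which is precisely the PQ-constraint at $v$. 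So a solution of $D(G,C)$ gives an embedding of $G$ respecting all $T'(v)$, and vice versa.

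Next I would prove that $D(G,C)$ is 2-fixed. The embedding trees $T(v)$ are the only sources of $D(G,C)$; the consistency trees and the constraint trees are sinks. I would walk through the three kinds of PQ-trees and check $\fixed(\mu)\le 2$ for every P-node $\mu$:
\begin{compactdesc}
\item[P-nodes in $T(v)$.] Each P-node $\mu$ of $T(v)$ stems from a unique P-node $\mu^*$ of the SPQR-tree, and so is fixed by exactly one consistency child, namely $P(\mu^*)$. The only other child of $T(v)$ that could fix $\mu$ is the constraint tree $T'(v)$. Hence $\mu$ has at most two fixing children and $\fixed(\mu)\le 2$.
\item[P-nodes in $P(\mu^*)$.] Such a tree is a sink consisting of a single P-node, with exactly the two parents $T(v_1), T(v_2)$ for $\{v_1,v_2\}=\skel(\mu^*)\cap V$. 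In each $T(v_j)$ the unique P-node $\mu_j$ it is fixed by satisfies $\fixed(\mu_j)\le 2$ by the previous case, so $\fixed=\sum_j(\fixed(\mu_j)-1)\le 2$.
\item[P-nodes in $T'(v)$.] Such a sink has only $T(v)$ as a parent, and each of its P-nodes $\nu$ stems from a unique P-node $\mu$ of $T(v)$, giving $\fixed(\nu)=\fixed(\mu)-1\le 1$.
\end{compactdesc}
(The Q-constraints contributed by the $Q(\eta)$ are irrelevant for fixedness since fixedness is a property of P-nodes only.) Therefore $D(G,C)$ is 2-fixed.

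Finally I would bound the size of $D(G,C)$ and invoke the main theorems. The embedding trees and the consistency trees together have size linear in the SPQR-tree of $G$, hence $O(n)$. Each constraint tree $T'(v)$ has at most $|E'(v)|\le\deg(v)$ leaves, and by Lemma~\ref{lem:tree-without-deg-2} size $O(|E'(v)|)$, so the total size of all constraint trees is $O(\sum_v\deg(v))=O(n)$ because $G$ is planar. Thus $|N|\in O(n)$. By Theorem~\ref{thm:1-critical-instances-sufficient-cond} the instance is 1-critical, and by Theorem~\ref{thm:time-solve-in-quadr-time} it can be solved in $O(|N|^2)=O(n^2)$ time, which gives the claim. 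The main obstacle is the fixedness argument for the embedding-tree P-nodes: one must use the structural fact from Section~\ref{sec:spqr-trees} that a P-node of $T(v)$ stems from a unique P-node of the SPQR-tree, so that $P(\mu^*)$ is the only consistency child that fixes it; everything else is a clean bookkeeping.
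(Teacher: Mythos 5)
Your proposal is correct and follows essentially the same route as the paper's own proof: first arguing the equivalence of $(G,C)$ and $D(G,C)$ via the PQ-embedding representation, then checking $\fixed(\mu)\le 2$ case-by-case for P-nodes in embedding trees, consistency trees $P(\mu^*)$, and constraint trees $T'(v)$, and finally invoking Theorems~\ref{thm:1-critical-instances-sufficient-cond} and~\ref{thm:time-solve-in-quadr-time}. The only (harmless) difference is that you make the $|N|\in O(n)$ size bound explicit via Lemma~\ref{lem:tree-without-deg-2} and planarity, whereas the paper merely asserts that $|D(G,C)|$ is linear in $|(G,C)|$.
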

\begin{proof}
  Consider $(G, C)$ to be an instance of {\sc Partially PQ-Constrained
    Planarity} where $G$ is a biconnected planar graph and $C$ the set
  of constraint trees.  Let further $D(G, C)$ be the corresponding
  instance of {\sc Simultaneous PQ-Ordering}.  Since $D(G, C)$
  contains the PQ-embedding representation $D(G)$, every solution of
  $D(G, C)$ yields a planar embedding of $G$.  Additionally, this
  planar embedding respects the constraint trees since the order of
  edges around every vertex is an extension of an order of the leaves
  in the corresponding constraint tree.  On the other hand, it is
  clear that a planar embedding of $G$ respecting the constraint trees
  yields simultaneous orders for all trees in $D(G, C)$.  Since the
  size of $D(G, C)$ is linear in the size of $(G, C)$, we can solve
  $(G, C)$ in quadratic time using
  Theorem~\ref{thm:time-solve-in-quadr-time}, if $D(G, C)$ is
  1-critical.  We will show that the instance $D(G, C)$ is at most
  2-fixed, and hence, due to
  Theorem~\ref{thm:1-critical-instances-sufficient-cond} also
  1-critical.

  To compute the fixedness of every P-node in every PQ-tree in $D(G,
  C)$, we distinguish between three kinds of trees, the embedding
  trees, the consistency trees and the constraint trees.  If we
  consider a P-node $\mu$ in an embedding tree $T(v)$, this P-node is
  fixed with respect to exactly one consistency tree, namely the tree
  that represents the P-node in the SPQR-tree $\mu$ stems from.  In
  addition to the consistency trees, $T(v)$ has the constraint tree
  $T'(v)$ as child, thus $\mu$ can be fixed with respect to $T'(v)$.
  Since $T(v)$ has no parents and no other children, $\mu$ is at most
  2-fixed, that is $\fixed(\mu) \le 2$.  Consider a P-node $\mu'$ in a
  constraint tree $T'(v)$.  Since $T'(v)$ has no children and its only
  parent is $T(v)$ containing the P-node $\mu$ that is fixed by
  $\mu'$, we have by the definition of fixedness that $\fixed(\mu') =
  \fixed(\mu) - 1$.  Since $\mu$ is a P-node in an embedding tree we
  obtain $\fixed(\mu') \le 1$.  We have two kinds of consistency
  trees, some stem from P- and some from R-nodes in the SPQR-tree.  We
  need to consider only trees $P(\mu)$ stemming form P-nodes since the
  consistency trees stemming from R-nodes only contain a single
  Q-node.  Denote the single P-node in $P(\mu)$ also by $\mu$ and let
  $\mu_1$ and $\mu_2$ be the two P-nodes in the embedding trees
  $T(v_1)$ and $T(v_2)$ that are fixed with respect to $P(\mu)$.
  Since $P(\mu)$ has no child and only these two parents, we obtain
  $\fixed(\mu) = (\fixed(\mu_1) - 1) + (\fixed(\mu_2) - 1)$.  Since
  $\mu_1$ and $\mu_2$ are P-nodes in embedding trees this yields
  $\fixed(\mu) \le 2$.  Hence, all P-nodes in all PQ-trees in $D(G,
  C)$ are at most 2-fixed, thus $D(G, C)$ itself is 2-fixed.  Finally,
  we can apply Theorem~\ref{thm:1-critical-instances-sufficient-cond}
  yielding that $D(G, C)$ is 1-critical and thus can be solved
  quadratic time, due to Theorem~\ref{thm:time-solve-in-quadr-time}.
\end{proof}

Since $D(G, C)$ is a special instance of {\sc Simultaneous
  PQ-Ordering}, which seems to be quite simple, it is worth to make a
more detailed runtime analysis, yielding the following theorem.

\begin{theorem}
  \label{thm:plan-emb-with-pq-const-biconn-lin-time}
  {\sc Partially PQ-Constrained Planarity} can be solved in linear
  time for biconnected graphs.
\end{theorem}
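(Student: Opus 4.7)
The plan is to revisit the $\mathcal O(|N|^2)$-time algorithm of Section~\ref{sec:impl-deta} for the specific instance $D(G,C)$ constructed in the proof of Theorem~\ref{thm:plan-emb-with-pq-const-biconn} and show that, on this input, every phase can be carried out in linear time by exploiting its very restricted structure. Concretely, $D(G,C)$ is a two-level DAG whose sources are the embedding trees $T(v)$ and whose sinks are the consistency trees $P(\mu), Q(\eta)$ together with the constraint trees $T'(v)$. Each consistency tree is a single node, and since the SPQR-tree of $G$ has total size $\mathcal O(|G|)$, we also have $\sum_v |T(v)| \in \mathcal O(|G|)$. Together with $\sum_v |T'(v)| \in \mathcal O(|G|)$ this yields $|N|, |A| \in \mathcal O(|G|)$.

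Turning to the phases, I would first observe that the projections from an embedding tree $T(v)$ to the leaves of its consistency children are, by construction, single-node trees and can be produced in constant time per arc together with the necessary additional projection information. The projection $\proj{T(v)}{E'(v)}$ and its intersection with $T'(v)$ can be computed in $\mathcal O(|T(v)| + |T'(v)|)$ time using the standard PQ-tree operations, so that the total cost of normalization is $\mathcal O(|G|)$. For the expansion step, the proof of Theorem~\ref{thm:plan-emb-with-pq-const-biconn} already shows that $D(G,C)$ is 2-fixed, so that the only critical triples have the form $(\mu, P(\mu'), T'(v))$, where $\mu$ is a P-node of $T(v)$ fixed by both its consistency parent $P(\mu')$ and by the constraint tree. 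Each such triple yields a single expansion tree of size $\mathcal O(\deg(\mu))$; projecting the single-node $P(\mu')$ is immediate, and, using the additional projection information stored during normalization, we can extract the relevant part of $T'(v)$ and intersect it with the projection of $P(\mu')$ in $\mathcal O(\deg(\mu))$ time per $\mu$. The remaining ingredients, namely the Q-constraint test via 2-SAT, the critical double arc check via Lemmas~\ref{lem:permutation-order-preserving} and~\ref{lem:permutation-order-reverting}, and the bottom-up order extension of Lemma~\ref{lem:time-extending-orders}, are all linear in $|D_{\ex}|$.

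The main obstacle is amortizing the work across all expansion trees attached to the same constraint tree $T'(v)$. The general bound of Lemma~\ref{lem:time-comp-exp-graph} loses here because it charges $\mathcal O(|N|)$ work per arc incident to $T'(v)$. The crucial observation that saves us is that all critical triples touching $T'(v)$ come from its unique parent $T(v)$, and distinct P-nodes of $T(v)$ fixed by $T'(v)$ correspond, via the normalization, to distinct P-nodes of the normalized $T'(v)$ with disjoint subtrees beneath them. A single DFS of each normalized $T'(v)$ therefore suffices to read off all the required projections in $\mathcal O(|T'(v)|)$ total time; summed over all vertices this is $\mathcal O(|G|)$. Combining this linear-time normalization and expansion with the linear-time bottom-up phase yields the claimed linear-time algorithm.
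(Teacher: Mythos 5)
Your overall strategy mirrors the paper's: revisit the generic quadratic analysis of Section~\ref{sec:impl-deta} phase by phase (normalization, expansion, bottom-up) and argue that each phase is linear on the restricted instance $D(G,C)$. You also correctly identify that the per-triple cost $\mathcal O(\deg(\mu) + |T_1| + |T_2|)$ from Lemma~\ref{lem:time-comp-exp-graph} is the sticking point: when $T_2 = T'(v)$ and $T(v)$ has many P-nodes fixed by $T'(v)$, the naive charging gives $\Theta(|T(v)|\cdot|T'(v)|)$ for a single vertex. The paper's proof is in fact quite terse here (it only asserts that the expansion graph ``can be computed in linear time''), so raising this is reasonable.

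However, the way you resolve it has two genuine gaps. First, the key claim that ``distinct P-nodes of $T(v)$ fixed by $T'(v)$ correspond, via the normalization, to distinct P-nodes of the normalized $T'(v)$ with disjoint subtrees beneath them'' is not correct as stated. Two P-nodes $\mu_1,\mu_2$ of $T(v)$ can be nested, in which case the portions of $T'(v)$ associated with them are nested rather than disjoint. Moreover the correspondence is not one-to-one: a P-node $\mu$ of $T(v)$ with only two fixed edges need not survive in $\proj{T(v)}{E'(v)}$ at all, and, conversely, the intersection $T'(v) \cap \proj{T(v)}{E'(v)}$ can split the node stemming from $\mu$ into several P-nodes (or turn it into part of a Q-node). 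What \emph{is} true and what you actually need is that every inner node of the normalized $T'(v)$ stems from at most one P-node of $T(v)$, so the sets of stemming nodes for different $\mu$ are disjoint; combined with the fact that $\proj{T_2}{R_i}$ only involves those stemming nodes and collapses everything below them, a single DFS can indeed assemble all these small projections in $\mathcal O(|T'(v)|)$ total. But this is a different and more careful statement than the one you wrote, and the argument that the DFS ``reads off all the required projections'' needs to be spelled out before it can be accepted.

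Second, you characterize the critical triples as having exactly the form $(\mu, P(\mu'), T'(v))$ and stop there. That is only the first layer. After these expansion trees are created, each consistency tree $P(\mu')$ acquires up to two new children, so its single P-node can now lie in a critical triple involving two first-layer expansion trees, which triggers a second round of expansion. The paper explicitly notes this and argues that this second round produces no further critical triples, so expansion terminates; its cost is easily bounded by $\mathcal O(\sum \deg(\mu'))$ and is therefore linear. Your proof as written does not account for this second layer, and without it the claim that expansion ``stops'' after one step is incomplete. (A small terminological slip: you call $P(\mu')$ the ``consistency parent'' of $T(v)$; it is a child.)
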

\begin{proof}
  As figured out in Section~\ref{sec:impl-deta} about the
  implementation details, there are four major parts influencing the
  running time.  First, a given instance needs to be normalized
  consuming quadratic time (Lemma~\ref{lem:time-normalization}), the
  expansion graph has quadratic size in worst case
  (Lemma~\ref{lem:time-exp-size}) and its computation consumes
  quadratic time (Lemma~\ref{lem:time-comp-exp-graph}) and finally
  choosing borders bottom-up needs linear time in the size of the
  expansion graph (Lemma~\ref{lem:time-extending-orders}).

  In an instance $D(G, C)$ of {\sc Simultaneous PQ-Ordering} stemming
  from an instance $(G, C)$ of {\sc Partially PQ-Constrained
    Planarity} there are two kinds of arcs.  First, arcs from
  embedding trees to consistency trees, and second, arcs from
  embedding trees to constraint trees.  When normalizing an arc from
  an embedding tree to a consistency tree there is nothing to do,
  since there is a bijection between the consistency tree and an inner
  node of the embedding tree.  The arcs from embedding trees to
  constraint trees can be normalized as usual consuming only linear
  time, since each embedding tree has only one consistency tree as
  child.  Hence, normalization can be done in linear time.  When
  computing the expansion graph, the fixedness of the nodes is
  important.  As seen in the proof of
  Theorem~\ref{thm:plan-emb-with-pq-const-biconn}, the P-nodes in
  embedding and consistency trees are at most 2-fixed, whereas the
  P-nodes in constraint trees are at most 1-fixed.  Note that every
  critical triple $(\mu, T_1, T_2)$ in $D(G, C)$ is of the kind that
  $\mu$ is contained in an embedding tree, $T_1$ is a constraint tree
  and $T_2$ is a consistency tree.  Thus, the expansion tree $T(\mu,
  T_1, T_2)$ created due to such a triple has two parents where one of
  them is at most 1-fixed and the other at most 2-fixed.  Hence, by
  the definition of fixedness, $T(\mu, T_1, T_2)$ itself is at most
  1-fixed.  After creating these expansion trees, all newly created
  critical triple must contain a P-node $\mu$ in a consistency tree
  and two expansion trees.  By creating expansion trees for these
  critical triples no new critical triple are created and hence the
  expansion stops.  It is clear that the resulting expansion graph has
  only linear size and can be computed in linear time.  Choosing
  orders bottom-up takes linear time in the size of the expansion
  graph, as before.  Hence we obtain the claimed linear running time.
\end{proof}

\subsection{Simultaneous Embedding with Fixed Edges}
\label{sec:simult-embedd-biconn-conn}

Let $\1G = (\1V, \1E)$ and $\2G = (\2V, \2E)$ be two planar graphs
sharing a common subgraph $G = (V, E)$ with $V = \1V \cap \2V$ and $E
= \1E \cap \2E$.  {\sc Simultaneous Embedding with Fixed Edges} asks,
whether there exist planar drawings of $\1G$ and $\2G$ such that their
intersection $G$ is drawn the same in both.  Jünger and Schulz show
that this is equivalent to the question whether combinatorial
embeddings of $\1G$ and $\2G$ inducing the same combinatorial
embedding for their intersection $G$
exist~\cite[Theorem~4]{IntersectionGraphsin-Juenger.Schulz(09)}.

Assume that $\1G$ and $\2G$ are biconnected and $G$ is connected.
Then the order of incident edges around every vertex determines the
combinatorial embedding, which is not the case for disconnected
graphs.  Thus, we can reformulate the problem as follows.  Can we find
planar embeddings of $\1G$ and $\2G$ inducing for every common vertex
$v \in V$ the same order of common incident edges $E(v)$ around $v$?
Since both graphs are biconnected, they both have a PQ-embedding
representation and it is straightforward to formulate an instance
$(\1G, \2G)$ of {\sc SEFE} as an instance $D(\1G, \2G)$ of {\sc
  Simultaneous PQ-Ordering}.  The instance $D(\1G, \2G)$ contains the
PQ-embedding representations $D(\1G)$ and $D(\2G)$ of $\1G$ and $\2G$,
respectively.  Every common vertex $v \in V$ occurs as $\1v$ in $\1V$
and as $\2v$ in~$\2V$, thus we have the two embedding trees $T(\1v)$
and $T(\2v)$.  By projecting these two embedding trees to the common
edges incident to $v$ and intersecting the result, we obtain a new
tree $T(v)$ called the \emph{common embedding tree} of $v$.  If we add
the arcs $(T(\1v), T(v))$ and $(T(\2v), T(v))$ to the instance $D(\1G,
\2G)$ of {\sc Simultaneous PQ-Ordering}, we ensure that the common
edges incident to $v$ are ordered the same in both graphs.  Note that
this representation is quite similar to the representation of an
instance of {\sc Partially PQ-Constrained Planarity}.  Every common
embedding tree can be seen as a constraint tree for both graphs
simultaneously.  To obtain the following theorem, we need to show that
the instances $(\1G, \2G)$ of {\sc SEFE} and the instance $D(\1G,
\2G)$ of {\sc Simultaneous PQ-Ordering} are equivalent and that
$D(\1G, \2G)$ is at most 2-fixed.

\begin{theorem}
  \label{thm:simult-embedd-biconn-conn}
  {\sc Simultaneous Embedding with Fixed Edges} can be solved in
  quadratic time, if both graphs are biconnected and the common graph
  is connected.
\end{theorem}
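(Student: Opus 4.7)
The plan is to follow the same strategy as the proof of Theorem~\ref{thm:plan-emb-with-pq-const-biconn}: establish that $(\1{G}, \2{G})$ and $D(\1{G}, \2{G})$ are equivalent, verify that $D(\1{G}, \2{G})$ is 2-fixed, and then combine Theorem~\ref{thm:1-critical-instances-sufficient-cond} with Theorem~\ref{thm:time-solve-in-quadr-time}. Since the two PQ-embedding representations together with one common embedding tree per common vertex have total size linear in $|\1{G}|+|\2{G}|$, this will yield the claimed quadratic running time.

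For the equivalence, I would first invoke the Jünger--Schulz reduction: because $G$ is connected, $(\1{G}, \2{G})$ admits a SEFE if and only if $\1{G}$ and $\2{G}$ have combinatorial embeddings whose restrictions to $G$ coincide. Biconnectedness of $\1{G}$ and $\2{G}$ implies that such an embedding is determined by its vertex rotations, and connectedness of $G$ implies that two embeddings restrict to the same embedding of $G$ exactly when they induce the same circular order of common incident edges at every common vertex. Since the simultaneous orders of $D(\1{G})$ biject with combinatorial embeddings of $\1{G}$ (and likewise for $\2{G}$), and since $T(v)$ is by construction the intersection of the projections of $T(\1{v})$ and $T(\2{v})$ to the common edges at $v$, the two arcs into $T(v)$ enforce precisely the required agreement at $v$. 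Hence simultaneous orders of $D(\1{G}, \2{G})$ biject with SEFEs of $(\1{G}, \2{G})$.

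The main technical step, and in my view the principal obstacle, is the fixedness analysis. Every P-node $\mu$ in an embedding tree $T(\1{v})$ is a source in $D(\1{G}, \2{G})$, has at most one P-consistency-tree child inside $D(\1{G})$ (namely the one stemming from its originating SPQR P-node), and is fixed by at most one further child, the common embedding tree $T(v)$; hence $\fixed(\mu) \le 2$, and symmetrically for $T(\2{v})$. A P-node in a common embedding tree $T(v)$ is a sink with exactly the two parents $T(\1{v})$ and $T(\2{v})$, whose corresponding P-nodes are at most 2-fixed, so by the definition of fixedness for internal nodes it has fixedness at most $(2-1)+(2-1)=2$. A P-node in a P-consistency tree $P(\eta)$ is similarly a sink with two embedding-tree parents whose relevant P-nodes are at most 2-fixed, giving again $\fixed \le 2$, while Q-consistency trees contain no P-nodes. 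Thus $D(\1{G}, \2{G})$ is 2-fixed, hence 1-critical by Theorem~\ref{thm:1-critical-instances-sufficient-cond}, and Theorem~\ref{thm:time-solve-in-quadr-time} solves it in $\mathcal{O}(|N|^2)$ time, which is quadratic in $|\1{G}|+|\2{G}|$.
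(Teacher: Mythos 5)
Your proof is correct and follows essentially the same approach as the paper: reduce SEFE via the J\"unger--Schulz criterion to agreement of rotation systems on the common graph, build $D(\1G,\2G)$ from the two PQ-embedding representations together with the common embedding trees, verify 2-fixedness by the same case split over embedding trees, consistency trees, and common embedding trees, and conclude via Theorems~\ref{thm:1-critical-instances-sufficient-cond} and~\ref{thm:time-solve-in-quadr-time}. The only deviation is cosmetic (you write $P(\eta)$ where the paper reserves $\eta$ for R-nodes and writes $P(\mu)$); the reasoning is identical.
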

\begin{proof}
  Let $(\1G, \2G)$ be an instance of {\sc SEFE} with the common graph
  $G$ such that $\1G$ and $\2G$ are biconnected and $G$ is connected.
  Let further $D(\1G, \2G)$ be the corresponding instance of {\sc
    Simultaneous PQ-Ordering} as defined above.  Since $D(\1G, \2G)$
  contains the PQ-embedding representations $D(\1G)$ and $D(\2G)$,
  every solution of $D(\1G, \2G)$ yields planar embeddings of $\1G$
  and $\2G$.  Furthermore, the common edges incident to a common
  vertex $v \in V$ are ordered the same in the two embedding trees
  $T(\1v)$ and $T(\2v)$ since both orders extend the same order of
  common edges represented by the common embedding tree~$T(v)$.  Thus,
  the embeddings for $\1G$ and $\2G$ induced by a solution of $D(\1G,
  \2G)$ induce the same embedding on the common graph and hence are a
  solution of $(\1G, \2G)$.  On the other hand, if we have a {\sc
    SEFE} of $\1G$ and $\2G$, these embeddings induce orders for the
  leaves of all PQ-trees in $D(\1G, \2G)$ and since the common edges
  around every common vertex are ordered the same in both embeddings,
  all constraints given by arcs in $D(\1G, \2G)$ are satisfied.

  To compute the fixedness of every P-node in every PQ-tree in $D(\1G,
  \2G)$ we distinguish between three kinds of trees, the embedding
  trees, the consistency trees and the common embedding trees.  The
  proof that $\fixed(\mu) \le 2$ for every P-node $\mu$ in every
  embedding and consistency tree works as in the proof of
  Theorem~\ref{thm:plan-emb-with-pq-const-biconn}.  For a P-node $\mu$
  in a common embedding tree $T(v)$ we have two P-nodes $\1\mu$ and
  $\2\mu$ in the parents $T(\1v)$ and $T(\2v)$ of $T(v)$ it stems
  from.  Since $T(v)$ has no other parents and no children, we obtain
  $\fixed(\mu) = (\fixed(\1\mu) - 1) + (\fixed(\2\mu) - 1)$ by the
  definition of fixedness.  Since $\1\mu$ and $\2\mu$ are P-nodes in
  embedding trees, we know that their fixedness is at most~2.  Thus,
  we have $\fixed(\mu) \le 2$.  Hence, all P-nodes in all PQ-trees in
  $D(\1G, \2G)$ are at most 2-fixed, thus $D(\1G, \2G)$ itself is
  2-fixed.
\end{proof}

\subsection{Simultaneous Interval Graphs}
\label{sec:simult-interv-graphs}

A graph $G$ is an \emph{interval graph}, if each vertex $v$ can be
represented as an interval $I(v) \subset \mathbb R$ such that two
vertices $u$ and $v$ are adjacent if and only if their intervals
intersect, that is, $I(u) \cap I(v) \not= \emptyset$.  Such a
representation is called \emph{interval representation} of $G$; see
Figure~\ref{fig:interval-graphs-single} for two examples.  Two graphs
$\1G$ and $\2G$ sharing a common subgraph are \emph{simultaneous
  interval graphs} if $\1G$ and $\2G$ have interval representations
such that the common vertices are represented by the same intervals in
both representations; see Figure~\ref{fig:interval-graphs-simult} for
an example.  The problem to decide whether $\1G$ and $\2G$ are
simultaneous interval graphs is called {\sc Simultaneous Interval
  Representation} having the pair $(\1G, \2G)$ as input.

\begin{figure}[tb]
  \centering \subcaptionbox{ \label{fig:interval-graphs-single}}
  {\includegraphics[page=1]{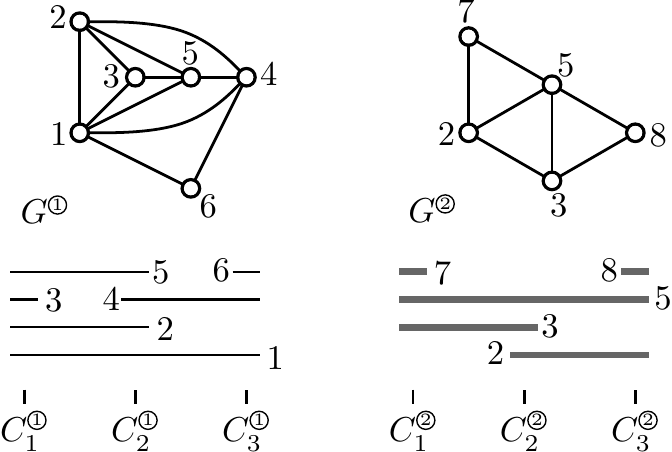}}\hspace{3em}
  \subcaptionbox{ \label{fig:interval-graphs-simult}}
  {\includegraphics[page=2]{fig/interval-graphs}}
  \caption{(\subref{fig:interval-graphs-single}) Two interval graphs
    $\1G$ and $\2G$ with interval representations.  The maximal
    cliques are $\1{C_1},
    \1{C_2}, \1{C_3}$ and $\2{C_1}, \2{C_2}, \2{C_3}$, respectively. \\
    (\subref{fig:interval-graphs-simult}) Interval representations of
    $\1G$ and $\2G$ such that common vertices are represented by the
    same interval in both representations, in other words, a
    simultaneous interval representation of $\1G$ and $\2G$.}
  \label{fig:interval-graphs}
\end{figure}

The first algorithm recognizing interval graphs in linear time was
given by Booth and
Lueker~\cite{TestingConsecutiveOnes-Booth.Lueker(76)} and was based on
a characterization by Fulkerson and
Gross~\cite{IncidenceMatricesand-Fulkerson.Gross(65)}.  This
characterization says that $G$ is an interval graph if and only if
there is a linear order of all its maximal cliques such that for each
vertex $v$ all cliques containing $v$ appear consecutively.  It is
easy to see that an interval graph can have only linearly many maximal
cliques thus it is clear how to recognize interval graphs in linear
time by using PQ-trees.  The problem {\sc Simultaneous Interval
  Representation} was first considered by Jampani and
Lubiw~\cite{SimultaneousIntervalGraphs-Jampani.Lubiw(10)} who show
how to solve it in $\mathcal O(n^2 \log n)$ time.

In Theorem~\ref{thm:interval-graph-char} we give a proof of the
characterization by Fulkerson and Gross that can then be extended to a
characterization of simultaneous interval graphs in
Theorem~\ref{thm:sim-interval-graph-char}.  With this characterization
it is straightforward to formulate an instance of {\sc Simultaneous
  PQ-Ordering} that can be used to test whether a pair of graphs are
simultaneous interval graphs in linear time, improving the so far
known result.  The following definition simplifies the notation.  Let
$C_1, \cdots, C_\ell$ be sets (for example maximal cliques) and let
$v$ be an element contained in some of these sets.  We say that a
linear order of these sets is \emph{$v$-consecutive} if the sets
containing $v$ appear consecutively.

\begin{theorem}[Fulkerson and
  Gross~\cite{IncidenceMatricesand-Fulkerson.Gross(65)}]
  \label{thm:interval-graph-char}
  A graph $G$ is an interval graph if and only if there is a linear
  order of all maximal cliques of $G$ that is $v$-consecutive with
  respect to every vertex $v$.
\end{theorem}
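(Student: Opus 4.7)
The plan is to prove both directions by using the correspondence between maximal cliques of an interval graph and ``clique points'' on the real line, together with the Helly property for intervals.

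For the forward direction, assume $G$ admits an interval representation $v \mapsto I(v)$. The key observation is the Helly property: any family of pairwise intersecting intervals on the real line has a common point. Thus for every maximal clique $C$, the intervals $\{I(v) : v \in C\}$ have a common point $p_C \in \mathbb{R}$; by a small perturbation I may assume all chosen points $p_C$ are distinct. This yields a linear order of the maximal cliques by sorting them according to $p_C$. I then check $v$-consecutiveness: for a fixed vertex $v$, a maximal clique $C$ contains $v$ if and only if $p_C \in I(v)$ (the ``only if'' is immediate, and the ``if'' follows because $\{v\} \cup C$ is then a clique, hence $C$ being maximal forces $v \in C$). Since $I(v)$ is a single interval, the set $\{p_C : v \in C\}$ consists of a consecutive block in the sorted order, as required.

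For the backward direction, assume we have such a linear order $C_1, \dots, C_k$ of the maximal cliques. For each vertex $v$, let $a(v)$ and $b(v)$ be the smallest and largest indices $i$ with $v \in C_i$, and set $I(v) := [a(v), b(v)]$; by the $v$-consecutiveness assumption, $v \in C_i$ holds exactly for $i \in [a(v), b(v)] \cap \mathbb{Z}$. I then verify that this is an interval representation of $G$. If $uv$ is an edge, then $\{u,v\}$ extends to a maximal clique $C_i$, so $i$ lies in both $I(u)$ and $I(v)$. Conversely, if $I(u) \cap I(v)$ contains an integer $i$, then by the characterization of $a,b$ the vertices $u$ and $v$ both lie in $C_i$, hence are adjacent. (Non-integer points in the intersection force an integer point in the intersection because both intervals have integer endpoints.)

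The only subtle step is the forward direction's appeal to the Helly property and to maximality of cliques; everything else is bookkeeping. I expect no serious obstacle, but care is needed to argue that perturbing the chosen points $p_C$ to be distinct is harmless, and that for any vertex $v$ and any clique point $p_C \in I(v)$ one actually gets $v \in C$ (which uses maximality of $C$ together with $C \cup \{v\}$ being a clique, since every $u \in C$ has $p_C \in I(u) \cap I(v)$ and thus $uv \in E$).
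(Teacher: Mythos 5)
Your proof takes essentially the same approach as the paper's: in the forward direction, pick a common point for each maximal clique (the paper does not name the Helly property, but this is exactly what it uses when asserting such a point exists), argue that a clique contains $v$ iff its chosen point lies in $I(v)$ (which is what the paper means by saying the point is ``not contained in any interval represented by another vertex since the clique $C$ is maximal''), and conclude consecutiveness because $I(v)$ is an interval; in the backward direction, define $I(v)$ as the index range of cliques containing $v$ and check adjacency at integer points. One small remark: the perturbation you worry about is actually unnecessary. If two distinct maximal cliques $C_1, C_2$ had the same common point $p$, then $p$ would lie in every interval of $C_1 \cup C_2$, forcing $C_1 \cup C_2$ to be a clique and contradicting maximality; so distinct maximal cliques automatically receive distinct points.
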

\begin{proof}
  Assume $G$ is an interval graph with a fixed interval
  representation.  Let $C = \{v_1, \dots, v_k\}$ be a maximal clique
  in $G$.  It is clear that there must be a position $x$ such that $x$
  is contained in the intervals $I(v_1), \dots, I(v_k)$.  Additionally
  $x$ is not contained in any interval represented by another vertex
  since the clique $C$ is maximal.  By fixing such positions $x_1,
  \dots, x_\ell$ for each of the maximal cliques $C_1, \dots, C_\ell$
  in $G$, we define a linear order on all maximal cliques.  Assume
  this order is not $v$-consecutive for some vertex $v$.  Then there
  are cliques $C_i, C_j, C_k$ with $x_i < x_j < x_k$ such that $v \in
  C_i, C_k$ but $v \notin C_j$.  However, since $v$ is in $C_i$ and
  $C_k$ its interval $I(v)$ needs to contain $x_i$ and $x_k$, and
  hence also $x_j$, which is a contradiction to the construction of
  the position $x_j$.  Hence the defined linear order of all maximal
  cliques is $v$-consecutive with respect to every vertex $v$.

  Now assume $O = C_1 \dots C_\ell$ is a linear order of all maximal
  cliques of $G$ that is $v$-consecutive for every vertex $v$.  Let
  $v$ be a vertex and let $C_i$ and $C_j$ be the leftmost and
  rightmost cliques containing $v$, respectively.  Then define $I(v) =
  [i, j]$ to be the interval representing $v$.  With this
  representation, we obtain all edges contained in the the maximal
  cliques $C_1, \dots, C_\ell$ at the natural numbers $1, \dots,
  \ell$, since for each clique $C_i = \{v_1, \dots, v_k\}$ the
  position $i$ is contained in all the intervals $I(v_1), \dots,
  I(v_k)$.  Furthermore, there is no vertex $u \notin C_i$ such that
  $I(u)$ also contains $i$, because such a vertex would need to be
  contained in a clique on the left and in a clique on the right to
  $C_i$, which is a contradiction since the order $O$ is
  $u$-consecutive.  Thus, at the integer positions $1, \dots, \ell$
  all edges in $G$ are represented and no edges not in $G$.
  Furthermore, all intervals $I(v)$ containing a non integer position
  $1 < x < \ell$ contain also $\lceil x \rceil$ and $\lfloor x
  \rfloor$, yielding that no edge is defined due the position $x$
  which is not already defined due to an integer position.  Hence,
  this definition of intervals is an interval representation of $G$
  showing that $G$ is an interval graph.
\end{proof}

We can extend this characterization of interval graphs to a
characterization of simultaneous interval graphs by using the same
arguments as follows.

\begin{theorem}
  \label{thm:sim-interval-graph-char}
  Two graphs $\1G$ and $\2G$ are simultaneous interval graphs if and
  only if there are linear orders of the maximal cliques of $\1G$ and
  $\2G$ that are $v$-consecutive with respect to every vertex $v$ in
  $\1G$ and $\2G$, respectively, such that they can be extended to an
  order of the union of maximal cliques that is $v$-consecutive with
  respect to every common vertex $v$.
\end{theorem}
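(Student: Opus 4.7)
The proof will closely follow the structure of the proof of Theorem~\ref{thm:interval-graph-char} and use real positions on a common number line as shared scaffolding for both graphs.

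For the forward direction, I start from a simultaneous interval representation in which every common vertex $v$ has the same interval $I(v)$ in both graphs. As in Theorem~\ref{thm:interval-graph-char}, for every maximal clique $C$ of $\1G$ (resp.\ $\2G$) I pick a position $x_C\in\mathbb{R}$ lying in $I(u)$ for every $u\in C$ but in no other interval of the same graph; such $x_C$ exists by maximality. Sorting $\1G$-cliques by $x_C$ yields the order $\1O$, which is $v$-consecutive for every $v\in\1V$ by Theorem~\ref{thm:interval-graph-char}; analogously for $\2O$. Sorting the disjoint union of all positions gives the common extension $O$, which by construction restricts to $\1O$ and $\2O$ respectively. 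To see $O$ is $v$-consecutive for a common $v$, the crucial observation is that for any clique $C$ (of either graph) one has $x_C\in I(v)$ iff $v\in C$: the inclusion direction holds by choice of $x_C$, while $v\notin C$ supplies some $u\in C$ non-adjacent to $v$ in the same graph (by maximality of $C$), forcing $I(u)\cap I(v)=\emptyset$ and hence $x_C\in I(u)\setminus I(v)$. Since $I(v)$ is an interval, the cliques containing $v$ form a consecutive block in $O$.

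For the backward direction, given orders $\1O,\2O$ and a common extension $O$ with the stated properties, I will build intervals using ranks in $O$: for each common vertex $v$ let $I(v)$ be the convex hull of the $O$-ranks of all maximal cliques (of either graph) containing $v$, while for $v\in\1V\setminus\2V$ (resp.\ $v\in\2V\setminus\1V$) use only the $\1G$-cliques (resp.\ $\2G$-cliques) containing $v$. These two assignments agree on common vertices by construction, so if each yields a valid interval representation of its graph we obtain a simultaneous interval representation. The task reduces to checking that, for $u,v\in\1V$, $uv\in\1E$ iff $I(u)\cap I(v)\neq\emptyset$; the forward direction is immediate, since a shared $\1G$-clique contributes a common $O$-rank to both convex hulls.

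The main obstacle is the converse, which I plan to settle by a case analysis on whether each of $u,v$ is a common vertex. When both lie in $\1V\setminus\2V$, the convex hulls involve only $\1G$-cliques, and the Fulkerson--Gross argument from Theorem~\ref{thm:interval-graph-char} transfers directly using that the restriction of $O$ to $\1G$-cliques orders them exactly as $\1O$. For the mixed case (one common, one not) and the both-common case, I will exploit the $v$-consecutiveness of $O$ at common vertices: the $v$-block in $O$ is a single consecutive range of ranks, so any clique that does not contain the common vertex $v$ has its $O$-rank outside this block. Combined with the $v$-consecutiveness of $\1O$ for the other endpoint, this forces the $\1G$-cliques containing a vertex $u$ that shares no $\1G$-clique with $v$ to lie entirely on one side of the $v$-block, so the two convex hulls end up on disjoint sides and cannot intersect. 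This disjointness, together with the immediate forward direction and the symmetric argument for $\2G$, produces the desired simultaneous interval representation.
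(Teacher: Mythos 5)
Your proposal is correct and takes essentially the same approach as the paper: the forward direction picks a witness position $x_C$ inside each maximal clique and sorts them, and the backward direction defines $I(v)$ as the convex hull of $O$-ranks of cliques containing $v$. The only difference is how the backward verification is organized — you check each pair $u,v$ via a case analysis on which of them is common, while the paper checks each integer position and argues that only vertices of $C_i$ (plus vertices exclusive to the other graph) can contain rank $i$ — but both reduce to the same consecutiveness arguments on $O$, $\1O$, and $\2O$.
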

\begin{proof}
  Assume $\1G$ and $\2G$ are simultaneous interval graphs and let for
  every vertex $v$ be $I(v)$ the interval representing $v$.  Assume
  $\1 {\mathcal C} = \{\1{C_1}, \dots, \1{C_k}\}$ and $\2 {\mathcal C}
  = \{\2{C_1}, \dots, \2{C_\ell}\}$ are the maximal cliques in $\1G$
  and $\2G$ respectively.  When considering $\1G$ for itself, we again
  obtain for every maximal clique $\1C = \{v_1, \dots, v_r\}$ a
  position $x$ such that $x$ is contained in $I(v_i)$ for every $v_i
  \in \1C$ but in no other interval representing a vertex in $\1G$.
  The same can be done for the maximal cliques of $\2G$, yielding a
  linear order $O$ of all maximal cliques $\mathcal C = \1 {\mathcal
    C} \cup \2 {\mathcal C}$.  It is clear that the projection of this
  order to the cliques in $\1G$ is $v$-consecutive for every vertex
  $v$ in $\1G$ due to Theorem~\ref{thm:interval-graph-char} and the
  same holds for $\2G$.  It remains to show that $O$ is
  $v$-consecutive for each common vertex $v$.  Assume $O$ is not
  $v$-consecutive for some common vertex $v$.  Then there need to be
  three cliques $C_i$, $C_j$ and~$C_k$, no matter if they are maximal
  cliques in $\1G$ or in $\2G$, with positions $x_i$, $x_j$ and $x_k$
  such that $x_i < x_j < x_k$ and $v \in C_i, C_k$ but $v \notin C_j$.
  However, since the interval $I(v)$ contains $x_i$ and~$x_k$ it also
  contains $x_j$, which is a contradiction to the construction of the
  position $x_j$ for the clique $C_j$ since $v$ is a common vertex.
  Note that this is the same argument as used in the proof of
  Theorem~\ref{thm:interval-graph-char}.

  Conversely, we need to show how to construct an interval
  representation from a given linear order of all maximal cliques.
  Assume we have a linear order $O$ of all maximal cliques satisfying
  the conditions of the theorem.  Rename the cliques such that $C_1
  \dots C_{k + \ell}$ is this order, neglecting for a moment from
  which graph the cliques stem.  Let $v$ be a vertex in $\1G$ or $\2G$
  and let $C_i$ and $C_j$ be the leftmost and rightmost clique in $O$
  containing $v$.  Then we define the interval $I(v)$ to be $[i, j]$,
  as in the case of a single graph.  Our claim is that this yields a
  simultaneous interval representation of $\1G$ and $\2G$.  Again, it
  is easy to see that a non integer position $x$ is only contained in
  intervals also containing $\lceil x \rceil$ and $\lfloor x \rfloor$.
  Thus we only need to consider the positions $1, \dots, k + \ell$,
  let $i$ be such an integral position.  Assume without loss of
  generality that $C_i = \{v_1, \dots, v_r\}$ is a clique of $\1G$.
  Then~$i$ is contained in all the intervals $I(v_1), \dots, I(v_r)$
  by definition.  The position $i$ may be additionally contained in
  the interval $I(u)$ for a vertex that is exclusively contained in
  $\2G$ but this does not create an edge between vertices in $\1G$.
  However, there is no vertex $u \notin C_i$ contained in $\1G$ such
  that $i$ is contained in $I(u)$ since this would violate the
  $u$-consecutiveness either of the whole order or of the projection
  to the cliques in $\1G$.  Since the same argument works for cliques
  in $\2G$, all edges in maximal cliques of $\1G$ and $\2G$ are
  represented by the defined interval representation and at the
  integer positions no edges not contained are represented.  Hence,
  this definition of intervals is a simultaneous interval
  representation of $\1G$ and $\2G$.
\end{proof}

With this characterization it is straightforward to formulate the
problem of recognizing simultaneous interval graphs as an instance of
{\sc Simultaneous PQ-Ordering}.  Furthermore, the resulting instance
is so simple that it can be solved in linear time.  Since we want to
represent linear orders instead of circular orders we need to use
rooted PQ-trees instead of unrooted ones.  This can be achieved as
mentioned in the preliminaries about PQ-trees
(Section~\ref{sec:pq-trees}).  Consider an instance of {\sc
  Simultaneous PQ-Ordering} having rooted PQ-trees as nodes.  By
introducing for every PQ-tree a new leaf $\ell$, the special leaf, on
top of the root, unrooting the PQ-tree and setting $\varphi(\ell) =
\ell$ for every arc $(T, T'; \varphi)$ we obtain an equivalent
instance of {\sc Simultaneous PQ-Ordering} having unrooted PQ-trees as
nodes.  Thus, solving {\sc Simultaneous PQ-Ordering} for the case that
the PQ-trees are rooted reduces to the case where the PQ-trees are
unrooted.  Note that the other direction does not work as simple,
since we cannot necessarily find a single leaf $\ell$ contained in
every PQ-tree.  The PQ-trees mentioned in the remaining part of this
section are assumed to be rooted, representing linear orders.

\begin{theorem}
  \label{thm:sim-interval-graph}
  {\sc Simultaneous Interval Representation} can be solved in linear
  time.
\end{theorem}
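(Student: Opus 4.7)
The plan is to formulate \textsc{Simultaneous Interval Representation} as an at most 2-fixed instance $D$ of \textsc{Simultaneous PQ-Ordering} guided directly by Theorem~\ref{thm:sim-interval-graph-char}, and then tighten the generic quadratic bound of Theorem~\ref{thm:time-solve-in-quadr-time} to linear along the lines of Theorem~\ref{thm:plan-emb-with-pq-const-biconn-lin-time}. Since the characterization is about linear orders of maximal cliques, I would work throughout with rooted PQ-trees and reduce to the unrooted setting of \textsc{Simultaneous PQ-Ordering} using the special-leaf trick described in Section~\ref{sec:pq-trees}.

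First I would run the classical Booth--Lueker algorithm on $\1G$ and $\2G$ separately to obtain, in linear time, rooted PQ-trees $T_1$ and $T_2$ whose leaves are the maximal cliques $\1{\mathcal C}$ and $\2{\mathcal C}$ of $\1G$ and $\2G$, and whose represented linear orders are exactly those that are $v$-consecutive for every $v \in \1V$ and every $v \in \2V$, respectively; these trees have linear size since an interval graph has only linearly many maximal cliques. Next I would build a third rooted PQ-tree $T$ with leaf set $\1{\mathcal C} \cup \2{\mathcal C}$ whose represented orders are exactly those that are $v$-consecutive with respect to every common vertex $v \in V$; this is obtained by starting from the single P-node having all cliques as leaves and applying, for each common $v$, a reduction with the set of all maximal cliques of $\1G$ or $\2G$ that contain $v$. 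The instance $D$ then has $T$ as unique source with two outgoing arcs $(T, T_1; \iota_1)$ and $(T, T_2; \iota_2)$, where $\iota_i$ realizes the natural inclusion $\mathcal{C}_i \hookrightarrow \1{\mathcal C} \cup \2{\mathcal C}$. By Theorem~\ref{thm:sim-interval-graph-char}, simultaneous PQ-orders for $D$ correspond bijectively to simultaneous interval representations of $\1G$ and $\2G$.

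To invoke the machinery of Section~\ref{sec:simult-pq-order}, I would verify that $D$ is 2-fixed: the source $T$ has at most two children, so each of its P-nodes has fixedness at most two; and each P-node of a sink $T_1$ or $T_2$ has fixedness equal to $\fixed(\mu)-1 \le 1$, where $\mu$ is the P-node of $T$ it stems from. By Theorem~\ref{thm:1-critical-instances-sufficient-cond} the instance is 1-critical, which combined with Theorem~\ref{thm:time-solve-in-quadr-time} already yields a quadratic-time algorithm.

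The main obstacle is improving this to linear time, for which I would replay the refined analysis used in the proof of Theorem~\ref{thm:plan-emb-with-pq-const-biconn-lin-time}. Normalization is trivial since $D$ contains only two arcs with $T$ as source. Every critical triple in $D$ has the form $(\mu, T_1, T_2)$ for a P-node $\mu$ of $T$; each such triple triggers a single expansion step producing an expansion tree of size $O(\deg(\mu))$ that is itself a sink and creates no further critical triples, so the expansion graph has total size $O(|D|)$ and can be computed in linear time. Extending orders bottom-up is then linear in the size of the expansion graph by Lemma~\ref{lem:time-extending-orders}. The last loose end is that the upfront construction of $T$ by reductions runs in linear time, which follows because the total size of the reducing sets, namely the number of (common vertex, containing maximal clique) incidences, is bounded by the total number of vertex-clique incidences in $\1G$ and $\2G$ and hence is linear in $|\1V|+|\1E|+|\2V|+|\2E|$.
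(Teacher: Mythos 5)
Your proposal is correct and follows essentially the same route as the paper: build the three PQ-trees $T$, $\1T$, $\2T$ from the Fulkerson--Gross characterization, form the two-arc instance with $T$ as the unique source, observe it is 2-fixed (hence 1-critical), and then note that expansion terminates after a single round so that normalization, expansion, and bottom-up order extension are all linear. You spell out the 2-fixedness check and the linear-time construction of $T$ in somewhat more detail than the paper, which simply states these points as obvious, but the argument is the same.
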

\begin{proof}
  Let $\1 {\mathcal C} = \{\1 {C_1}, \dots, \1 {C_k}\}$ and $\2
  {\mathcal C} = \{\2 {C_1}, \dots, \2 {C_\ell}\}$ be the maximal
  cliques of $\1G$ and $\2G$ respectively and let $\mathcal C = \1
  {\mathcal C} \cup \2 {\mathcal C}$ be the set of all maximal
  cliques.  We define three PQ-trees $T$, $\1T$ and $\2T$ having
  $\mathcal C$, $\1 {\mathcal C}$ and $\2 {\mathcal C}$ as leaves,
  respectively.  The tree $T$ is defined such that it represents all
  linear orders of $\mathcal C$ that are $v$-consecutive with respect
  to all common vertices $v$.  The trees $\1T$ and $\2T$ are defined
  to represent all linear orders of $\1 {\mathcal C}$ and $\2
  {\mathcal C}$ that are $v$-consecutive with respect to all vertices
  $v$ in $\1G$ and $\2G$, respectively.  Note that $\1T$ and $\2T$ are
  the PQ-trees that would be used to test whether $\1G$ and $\2G$
  themselves are interval graphs.  By the characterization in
  Theorem~\ref{thm:sim-interval-graph-char} it is clear that $\1G$ and
  $\2G$ are simultaneous interval graphs if and only if we can find an
  order represented by $T$ extending orders represented by $\1T$ and
  $\2T$.  Hence $\1G$ and $\2G$ are simultaneous interval graphs if
  and only if the instance $D$ of {\sc Simultaneous PQ-Ordering}
  consisting of the nodes $T$, $\1T$ and $\2T$ and the arcs $(T, \1T)$
  and $(T, \2T)$ has a solution.  This can be checked in quadratic
  time using Theorem~\ref{thm:time-solve-in-quadr-time} since $D$ is
  obviously 1-critical.  Furthermore, normalization can of course be
  done in linear time and the expansion tree of linear size can be
  computed in linear time since expansion stops after a single
  expansion step.  Hence the instance $D$ of {\sc Simultaneous
    PQ-Ordering} can be solved in linear time, which concludes the
  proof.
\end{proof}

\subsection{Extending Partial Interval Representations}
\label{sec:extend-part-interv}

Let $G$ be a graph, $H=(V,E)$ be a subgraph of~$G$ and let $I$ be an
interval representation of~$H$.  The problem {\sc Partial Interval
  Graph Extension} asks, whether there exists an interval graph
representation~$I'$ of~$G$ such that for all $v \in V$ we have that
$I'(v) = I(v)$.  We call an instance~$(G,H,I)$ of {\sc Partial
  Interval Graph Extension} a \emph{partial interval graph}.

Klav\'ik et al.~\cite{ExtendingPartialRepresentations-Klavik.etal(11)}
show that {\sc Partial Interval Graph Extension} can be solved in
time~$O(nm)$, where $n = |V(G)|$ and~$m=|E(G)|$.  We show that {\sc
  Partial Interval Graph Extension} can be reduced in~$O(n+m)$ time to
an instance of {\sc Simultaneous Interval Representation}.  It then
follows from Theorem~\ref{thm:sim-interval-graph} that the partial
interval graph extension problem can be solved in~$O(n+m)$ time.

Without loss of generality, we assume that the endpoints of all
intervals~$I(v), v \in V(H)$ are distinct.  For $v \in V(H)$
let~$\ell(v)$ and~$r(v)$ denote the left and right endpoint of~$I(v)$,
respectively.  Further let~$S(I)$ denote the sequence of these
endpoints in increasing order of coordinate.  We call this order the
\emph{signature} of~$I$.  We say that two interval representations~$I$
and~$I'$ of the same graph~$H$ are \emph{equivalent} if they have the
same signature.  Klav\'ik et
al.~\cite{ExtendingPartialRepresentations-Klavik.etal(11)} show that
{\sc Partial Interval Graph Extension} for a partial interval
graph~$(G,H,I)$ is equivalent to deciding whether there exists an
interval representation~$I'$ of~$G$ whose restriction to~$H$ is
equivalent to~$I$.  In the following we construct an interval graph
$G'$ containing $H$ as an induced subgraph such that every interval
representation of $G'$ induces an interval representation of $H$ that
is equivalent to $I$.

Let~$p_1,\dots,p_{2n}$ denote the interval endpoints of~$I$ in
increasing order.  We now add several intervals to the representation.
Namely, for each point~$p_i$ we put three intervals of length~$\eps$.
The interval~$\ell_i$ is to the left of~$p_i$, interval~$r_i$ is to
the right of~$p_i$ and~$m_i$ contains~$p_i$ and intersects
both~$\ell_i$ and~$r_i$.  We choose $\eps$ small enough so that no two
intervals of distinct points~$p_i$ and~$p_j$ intersect.  We call these
intervals \emph{markers}.  Finally, we add~$2n-1$ \emph{connectors},
where the connector~$c_i$, for~$i=1,\dots,2n-1$ lies strictly
between~$p_i$ and~$p_{i+1}$, and intersects~$r_i$ and~$\ell_{i+1}$;
see Figure~\ref{fig:partial-interv-extension} for an example.  Now
consider the graph~$G'$ given by this interval representation
containing~$H$ as induced subgraph and the new vertices $L_i, M_i,
R_i$ and~$C_i$ corresponding to the intervals $\ell_i, m_i, r_i$
and~$c_i$.  Then $(G, G')$ defines an instance of {\sc Simultaneous
  Interval Representation} corresponding to the instance $(G, H, I)$
of {\sc Partial Interval Graph Extension} and we obtain the following
theorem by showing their equivalence.

\begin{figure}[tb]
  \centering
  \includegraphics[page=3]{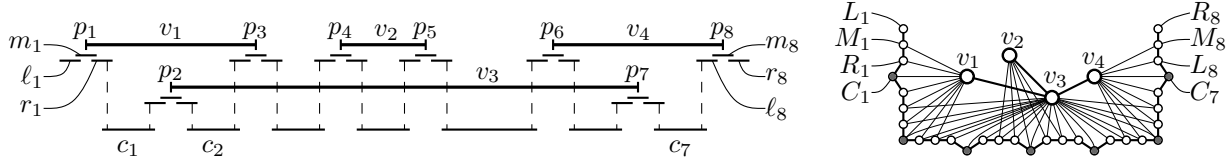}
  \caption{An example graph $H$ containing the vertices $v_1, \dots,
    v_4$ with prescribed interval representation $I$ together with the
    markers $\ell_i, m_i, r_i$ and the connectors $c_i$ on the left.
    The resulting graph $G'$ with the new vertices $L_i, M_i, R_i$ and
    $C_i$ on the right.}
  \label{fig:partial-interv-extension}
\end{figure}

\begin{theorem}
  The problem \textsc{Partial Interval Graph Extension} can be solved
  in linear time.
\end{theorem}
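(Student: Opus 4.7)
The plan is to verify that the instance $(G,G')$ of \textsc{Simultaneous Interval Representation} constructed above is equivalent to $(G,H,I)$ and that both the construction and the subsequent application of Theorem~\ref{thm:sim-interval-graph} run in $O(n+m)$ time.

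The forward direction is straightforward: given an interval representation $I'$ of $G$ whose restriction to $V(H)$ is equivalent to $I$, augmenting $I'$ with markers and connectors placed around the endpoints of $I'|_{V(H)}$ yields an interval representation $I_2$ of $G'$ that coincides with $I'$ on $V(H)$, producing a simultaneous interval representation of $(G,G')$. First I would write this step up and note that the construction of $G'$ adds $6n-1$ new vertices and $O(n+m)$ new edges: for each $v \in V(H)$ the marker $M_i$ is adjacent to $v$ iff $p_i \in I(v)$, which happens for at most $2\deg_H(v) + 2$ values of $i$, so the marker-to-$H$ edges sum to $O(n+m)$, while the marker--marker and connector edges are $O(n)$.

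The substantive direction is the converse, namely that any simultaneous interval representation $(I_1,I_2)$ of $(G,G')$ forces $I_1|_{V(H)} = I_2|_{V(H)}$ to be equivalent to $I$. The main obstacle is showing that the markers and connectors rigidly encode the signature $S(I)$, which I would argue in three steps. First, for each $i$ the induced subgraph on $\{L_i, M_i, R_i\}$ is a path $L_i M_i R_i$, so any interval representation identifies a distinguished pivot point $q_i$ inside the interval assigned to $M_i$ but outside those of $L_i$ and $R_i$. Second, since $C_i$ is adjacent in $G'$ to $R_i$ and $L_{i+1}$ but to neither $M_i$ nor $M_{i+1}$, the interval of $C_i$ must lie strictly between $q_i$ and $q_{i+1}$; up to a global reflection this enforces $q_1 < q_2 < \dots < q_{2n}$. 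Third, for any $v \in V(H)$ and any index $i$, the edge $v M_i$ belongs to $G'$ exactly when $p_i \in I(v)$, so the interval of $v$ in $I_2$ covers precisely the pivots $q_i$ with $p_i \in I(v)$, which form a contiguous range. It follows that the endpoints of the intervals $I_2(v)$ interleave with the pivots $q_i$ in exactly the same pattern as the endpoints of the intervals $I(v)$ interleave with the $p_i$, so $I_2|_{V(H)}$ has signature $S(I)$.

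Combining both directions with the reduction of Klav\'{\i}k et al., $(G,H,I)$ is a yes-instance of \textsc{Partial Interval Graph Extension} iff $(G,G')$ is a yes-instance of \textsc{Simultaneous Interval Representation}. Theorem~\ref{thm:sim-interval-graph} then yields an $O(n+m)$-time algorithm, finishing the proof.
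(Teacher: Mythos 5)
Your overall approach matches the paper's: reduce $(G,H,I)$ to the \textsc{Simultaneous Interval Representation} instance $(G,G')$ via the marker/connector gadget, invoke the Klav\'{\i}k et al.\ signature characterization, argue rigidity of the gadget, and verify linear size. The forward direction and the size analysis are fine (modulo a minor miscount: you add $6n$ markers and $2n-1$ connectors, so $8n-1$ new vertices, not $6n-1$ --- irrelevant to linearity).

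However, your step~(c) contains a genuine gap. You assert that $I_2(v)$ \emph{covers precisely the pivots $q_i$ with $p_i \in I(v)$}. The forward inclusion ($q_i \in I_2(v) \Rightarrow v M_i \in E(G')$) is fine, but the converse fails exactly at the boundary. Concretely, if $p_a = \ell(v)$, then $v$ is adjacent to $M_a$ and $R_a$ but not $L_a$, so $\ell(I_2(v))$ lies strictly right of $I_2(L_a)$ and at most at the right end of $I_2(M_a)$; nothing forces $\ell(I_2(v)) \le q_a$, so $q_a$ need not lie in $I_2(v)$. Consequently, ``the endpoints of $I_2(v)$ interleave with the pivots in the same pattern as the $p_i$'' does not follow: bounding the endpoints only by the open interval $(q_{a-1},q_{a+1})$ is too loose to separate, say, $\ell(I_2(v))$ from $r(I_2(u))$ when $p_{a+1}=r(u)$, since both would then land in overlapping pivot gaps. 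The repair --- and the route the paper takes --- is to argue not about containment of pivots but about where the \emph{endpoint} $e_i$ of $I_2(v_i)$ falls: since $v_i$ is adjacent to $M_i$ and to exactly one of $L_i, R_i$, a short case check pins $e_i$ inside $I_2(M_i)$ itself. Combined with the markers $I_2(M_1),\dots,I_2(M_{2n})$ being pairwise disjoint and sorted (which the paper gets in one stroke from the long induced path $L_1 M_1 R_1 C_1 \cdots R_{2n}$, whereas your connector-by-connector pivot argument in step~(b) would still need to rule out per-$i$ reflections), each $e_i$ sits in its own disjoint slot, and the signature is forced. Your structure is sound, but step~(c) as written would not survive scrutiny without replacing the pivot-containment claim by the endpoint-in-$I_2(M_i)$ claim.
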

\begin{proof}
  Let $(G, H, I)$ be an instance of {\sc Partial Interval Graph
    Extension} and let $(G, G')$ be the corresponding instance of {\sc
    Simultaneous Interval Representation} as defined above.  We need
  to show that these two instances are equivalent and that~$(G, G')$
  has size linear in the size of~$(G, H, I)$.

  Obviously~$G'$ contains $H$ as an induced subgraph.  We claim that
  in any interval representation~$I'$ of~$G'$ the
  subrepresentation~$I'|_H$ is equivalent to~$I$.  First, note that
  the sequence $L_1, M_1, R_1, C_1, \dots, C_{2n-1}, L_{2n}, M_{2n},
  R_{2n}$ is an induced path in~$G'$.  Hence, in every representation
  of~$G'$ the starting points of their intervals occur either in this
  or in the reverse order.  In particular, the marker
  intervals~$I'(M_i)$ are pairwise disjoint and sorted.  Let~$v_i$
  denote the vertex whose interval has~$p_i$ as an endpoint.
  Since~$M_i$ is adjacent to~$L_i$ and~$R_i$, exactly one of which is
  adjacent~$v_i$, it follows that~$I'(M_i)$ contains an endpoint of
  $I'(v_i)$.  Since this holds for each marker~$M_i$, the claim
  follows.

  With this result the equivalence of the instance $(G, H, I)$ and
  $(G, G')$ is easy to see.  If~$(G,H,I)$ admits an interval
  representation of~$G$, then the above construction shows how to
  construct a corresponding simultaneous representation of~$(G,G')$.
  On the other hand, if~$G$ and~$G'$ admit a simultaneous interval
  representation, then the endpoints of the intervals corresponding to
  vertices of~$H$ must occur in the same order as in~$I$, and hence
  the interval representation of~$G$ extends~$I$.
  
  It remains to show that~$G'$ has size linear in the size of~$H$.  To
  this end, we revisit the construction of~$G'$ from~$H$.  Let~$H'$ be
  the subgraph of~$G'$ obtained by removing the vertices corresponding
  to connectors.  We first show that the size of~$H'$ is linear in the
  size of~$H$.

  Clearly, $H'$ contains exactly six additional vertices for each
  vertex of~$H$ (three for each endpoint of an interval representing a
  vertex of~$H$), and thus $|V(H')| = 7n$.  Now consider the edges
  of~$H'$.  We denote by~$I(p)$ the set of vertices whose intervals
  contain~$p$ in the \emph{interior}.  Let again~$p_1,\dots,p_{2n}$
  denote the endpoints of the intervals in the interval
  representation~$I$ of~$H$.  Recall that for each such endpoint we
  add three vertices, which are represented by the intervals~$\ell_i,
  m_i$ and~$r_i$, respectively.  Note that the endpoints~$p_{i-1}$
  and~$p_{i+1}$ (if they exist) lie to the left of~$\ell_i$ and to the
  right of~$r_i$, respectively, and hence do not intersect with these
  intervals.  The neighbors of~$L_i, M_i$ and~$R_i$ belonging to~$H$
  are contained in~$I(p_i) \cup \{v_i\}$.  This implies that the
  degree of~$L_i$,~$M_i$ and~$R_i$ is linear in the degree of~$v_i$
  in~$H$, and hence the total number of edges in~$H'$ is linear in
  $|E(H)|$.

  For the step from~$H'$ to~$G'$, we add the connectors.  Consider the
  $i$th connector~$C_i$, which is adjacent to~$R_i$ and~$L_{i+1}$.
  Since no other intervals start or end in between, the vertex
  corresponding to the connector~$C_i$ is adjacent to the same
  vertices as~$R_i$ and~$L_{i+1}$.  Thus, the size of~$G'$ is linear
  in the size of~$H'$ and the claim follows.  Moreover, it is clear
  that assuming the intervals of~$I$ are given in sorted order,
  then~$G'$ can be constructed from~$G$ in~$O(n+m)$ time.
\end{proof}

\subsection{Generalization to Non-Biconnected Graphs}
\label{sec:thoughts-gener}

The reason why our solutions for {\sc Partially PQ-Constrained
  Planarity} and {\sc SEFE} are restricted to the case where the
graphs are biconnected is that the set of possible orders of edges
around a cutvertex may not be PQ-representable.  However, this is not
really necessary.  Assume we have a representation of all embeddings
of a planar graph as instance of {\sc Simultaneous PQ-Ordering} with
the following two properties.  First, this instance contains a PQ-tree
$T(v)$ for every vertex $v$ having the edges incident to $v$ as
leaves.  Second, this instance remains 1-critical even if we introduce
an additional child to $T(v)$.  If this is the case, {\sc Simultaneous
  PQ-Ordering} can be solved by introducing the constraint tree
$T'(v)$ as child of $T(v)$.  Similarly, in the setting of {\sc SEFE}
common edges around a vertex $v$ can be enforced to be ordered the
same by introducing a common embedding tree $T(v)$ having the common
edges incident to $v$ as leaves as child of the trees $T(\1v)$ and
$T(\2v)$, where $T(\1v)$ and $T(\2v)$ have the edges incident to $v$
in $\1G$ and $\2G$ as leaves, respectively.  We show that all
embeddings can be represented by such an instance for the special case
that every cutvertex is contained in only two blocks.  Furthermore,
this extends to the case where each block containing the cutvertex $v$
consists of a single edge except for up to two blocks.

Consider a cutvertex $v$ that is contained in two blocks $B_1(v)$ and
$B_2(v)$ and let $E_1(v)$ and $E_2(v)$ be the edges incident to $v$
contained in $B_1(v)$ and $B_2(v)$, respectively.  As before, the
orders of $E_1(v)$ around $v$ that can occur in a planar drawing can
be represented by a PQ-tree $T_1(v)$ with $E_1(v)$ as leaves; call
$T_1(v)$ the \emph{block embedding tree} with respect to $B_1$.  Let
$T_2(v)$ be the block embedding tree of $v$ with respect to the second
block $B_2$.  It is clear that in a planar drawing of the whole graph
the edges $E_1(v)$ (and with it also $E_2(v)$) appear consecutively
around $v$.  This condition can be formulated independently from the
PQ-trees $T_1(v)$ and $T_2(v)$ by an other PQ-tree $T(v)$ consisting
of two P-nodes $\mu_1$ and $\mu_2$ with the edge $\{\mu_1, \mu_2\}$
and leaves $E_1(v)$ and $E_2(v)$ attached to $\mu_1$ and $\mu_2$,
respectively.  It is clear that the instance of {\sc Simultaneous
  PQ-Ordering} consisting of the PQ-tree $T(v)$ with $T_1(v)$ and
$T_2(v)$ as children represents all possible circular orders of edges
around $v$ in the sense that in every planar embedding the order of
edges around $v$ induces a solution of this instance and vice versa;
Figure~\ref{fig:ext-pq-emb-rep} depicts this instance of {\sc
  Simultaneous PQ-Ordering}.  We call the PQ-tree $T(v)$ the
\emph{combined embedding tree} of $v$.  Of course the order of edges
around each vertex cannot be chosen independently, but since the block
embedding trees are embedding trees of biconnected components the P-
and Q-nodes stem from P- and R-nodes in the SPQR-tree and we can again
ensure consistency by introducing the consistency trees for each
block.  This yields an extension of the PQ-embedding representation to
the case that $G$ may contain cutvertices that are contained in two
blocks.  Note that the embedding tree of a vertex that is not a
cutvertex can be seen as combined and block embedding tree at the same
time.

\begin{figure}[tb]
  \centering
  \includegraphics[page=1]{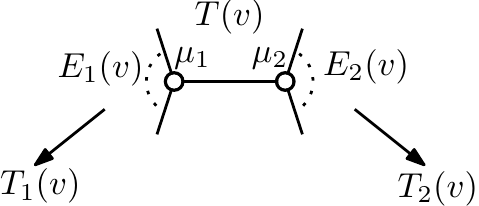}
  \caption{Representation of the possible orders of edges around a
    cutvertex $v$ for the special case that $v$ is contained in two
    blocks in terms of an instance of {\sc Simultaneous PQ-Ordering}.}
  \label{fig:ext-pq-emb-rep}
\end{figure}

It is easy to see that this representation satisfies the conditions
mentioned above.  First, the combined embedding tree $T(v)$ has the
edges incident to $v$ as leaves.  Second, if an additional child is
introduced to every combined embedding tree the instance remains
2-fixed, which can be seen as follows.  The combined embedding tree
has three children, the two block embedding trees and the additional
child.  However, each P-node in $T(v)$ is fixed with respect to only
one of the block embedding trees, thus it is 2-fixed.  Every P-node in
the block embedding trees is fixed with respect to one child, the
corresponding consistency tree, thus it is 2-fixed since it has the
combined embedding tree as parent.  The P-nodes in consistency trees
are also 2-fixed, since they have two 2-fixed parents.  Hence we
obtain a 2-fixed instance, if we use this extended PQ-embedding
representation to formulate {\sc Partially PQ-Constrained Planarity}
or {\sc SEFE} as instance of {\sc Simultaneous PQ-Ordering}.
Furthermore, the runtime analysis yielding linear time for {\sc
  Partially PQ-Constrained Planarity} in
Theorem~\ref{thm:plan-emb-with-pq-const-biconn-lin-time} works
analogous.

Assume now that all blocks containing $v$ consist of a single edge
except for up to two blocks $B_1$ and $B_2$.  A block consisting of a
single edge is identified with this edge and called \emph{bridge}.  It
is clear that each bridge can be attached arbitrarily to an embedding
of $B_1 + B_2$.  Hence we can modify the above defined extension of
the PQ-embedding representation by introducing a single P-node
containing all edges incident to $v$ as parent of the combined
embedding tree.  The analysis from above works analogously yielding
the following two theorems.

\begin{theorem}
  {\sc Partially PQ-Constrained Planarity} can be solved in linear
  time, if each vertex is contained in up to two blocks not consisting
  of a single edge.
\end{theorem}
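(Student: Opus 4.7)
The plan is to combine the extended PQ-embedding representation introduced in this section with the constraint-tree attachment used in Theorem~\ref{thm:plan-emb-with-pq-const-biconn-lin-time}. For an instance $(G,C)$ in which every cutvertex lies in at most two non-trivial blocks, I first build the extended PQ-embedding representation $D(G)$; for each vertex $v$ this provides a top-level PQ-tree $T(v)$ whose leaves are exactly the edges $E(v)$ incident to $v$ (either the combined embedding tree itself, or the added top-level P-node that collects the bridges together with the combined embedding tree of $B_1+B_2$ as its descendant). Then I form $D(G,C)$ by attaching, for each $v$, the constraint tree $T'(v)$ as a new child of $T(v)$ via the identity on $E'(v)\subseteq E(v)$. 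Equivalence of $(G,C)$ and $D(G,C)$ is immediate, because the simultaneous PQ-orders of $D(G)$ correspond bijectively to the planar embeddings of $G$, and the new arc $(T(v),T'(v))$ encodes exactly the condition that the circular order around $v$ extends some order represented by $T'(v)$.

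The main technical step is to verify that $D(G,C)$ is $2$-fixed; Theorem~\ref{thm:1-critical-instances-sufficient-cond} then gives $1$-criticality and Theorem~\ref{thm:time-solve-in-quadr-time} a polynomial-time algorithm. I would handle each node type separately. P-nodes of constraint trees are sinks with the single parent $T(v)$, so they are $1$-fixed. P-nodes of consistency trees have the two block embedding trees containing them as their only parents and no children, giving fixedness at most $2$ exactly as in Theorem~\ref{thm:plan-emb-with-pq-const-biconn}. P-nodes of a block embedding tree are fixed by one consistency-tree child, and they see their unique parent (the combined embedding tree) through a P-node that is fixed by at most one of the two block embedding trees; this contributes at most $1$, so the total fixedness is at most $2$. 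For a P-node of the combined embedding tree, the children are the at most two block embedding trees and, possibly, the constraint tree; but by construction each such P-node is fixed by at most one of the two block embedding trees (the node linking the two blocks is free with respect to both), so the contribution is at most $2$. The analogous argument applies to the added top-level P-node in the bridge case, where the bridge leaves appear only under this P-node and the constraint tree is the only extra child.

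The hard part will be pushing the quadratic bound down to linear, rather than the $2$-fixedness itself. Here I would replay the analysis of Theorem~\ref{thm:plan-emb-with-pq-const-biconn-lin-time}: the arcs inside the extended PQ-embedding representation (between combined, block and consistency trees) come with bijective correspondences between inner nodes and therefore normalize in zero time, while the arcs $(T(v),T'(v))$ normalize in time $O(|T'(v)|)$, summing to $O(n+m)$. Every critical triple of $D(G,C)$ pairs a constraint tree with a block embedding tree at a P-node of $T(v)$ (or, in the bridge case, involves the top-level P-node as well), so the resulting expansion tree is at most $1$-fixed and spawns only one further expansion layer among consistency trees before the process terminates. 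This keeps both the size of $D(G,C)_{\ex}$ and the cost of constructing it linear, and Lemma~\ref{lem:time-extending-orders} then supplies a linear-time bottom-up order extension. The main obstacle is the careful bookkeeping around the added top-level P-node for bridges: one must verify that its interaction with the constraint tree does not create additional critical-triple cascades, but since the bridge edges sit as leaves directly beneath this P-node with no further structure below, they can be absorbed in a single expansion step and do not disrupt the linear-time bound.
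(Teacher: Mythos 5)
Your proposal follows the same strategy as the paper's argument: build the extended PQ-embedding representation with combined embedding trees (plus an extra top-level P-node when bridges are present), attach the constraint tree $T'(v)$ to the unique top-level tree whose leaves are all of $E(v)$, verify $2$-fixedness node type by node type, and then replay the linear-time analysis from Theorem~\ref{thm:plan-emb-with-pq-const-biconn-lin-time}. Your fixedness accounting is correct and matches the paper's (the paper checks the same three cases — combined, block, and consistency trees — and in the bridge variant just asserts the ``analysis works analogously,'' much as you do); one small caveat is that when there are bridges the P-nodes of the combined embedding tree acquire a parent ($T^*(v)$) while the constraint tree moves up to $T^*(v)$, so the bookkeeping shifts but still closes at fixedness $\le 2$, a point you gesture at but do not fully carry through — the paper is equally terse here, so this is not a gap relative to the source.
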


\begin{theorem}
  {\sc Simultaneous Embedding with Fixed Edges} can be solved in
  quadratic time, if in both graphs every vertex is contained in at
  most two blocks not consisting of a single edge and the common graph
  is connected.
\end{theorem}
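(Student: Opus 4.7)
The plan is to imitate the proof of Theorem~\ref{thm:simult-embedd-biconn-conn}, replacing the PQ-embedding representation of each input graph by the extension described just above (with combined embedding trees at cutvertices lying in two nontrivial blocks, and an additional outer P-node on top when further bridges are incident). As in the biconnected case, for every common vertex~$v$ I would introduce a common embedding tree~$T(v)$ obtained by intersecting the projections of the combined embedding trees~$T(\1v)$ and~$T(\2v)$ to the common incident edges, together with arcs $(T(\1v),T(v))$ and $(T(\2v),T(v))$. Let $D(\1G,\2G)$ denote the resulting instance of {\sc Simultaneous PQ-Ordering}.

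The equivalence between $(\1G,\2G)$ and $D(\1G,\2G)$ then follows essentially verbatim: since the extended representation faithfully encodes all planar embeddings of a graph whose cutvertices lie in at most two nontrivial blocks, any simultaneous PQ-ordering of $D(\1G,\2G)$ projects to planar embeddings of~$\1G$ and~$\2G$, and the arcs to~$T(v)$ force the common edges around every common vertex to be ordered identically in the two embeddings. By the result of J\"unger and Schulz~\cite{IntersectionGraphsin-Juenger.Schulz(09)}, which applies because the common graph is connected, this is exactly a SEFE; the converse is immediate.

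The main work is to check that $D(\1G,\2G)$ is 2-fixed, so that Theorem~\ref{thm:1-critical-instances-sufficient-cond} gives 1-criticality and Theorem~\ref{thm:time-solve-in-quadr-time} the quadratic running time. The preceding analysis already shows that in the extended PQ-embedding representation of a single graph every P-node is 2-fixed; the delicate step is to re-verify this after attaching the common embedding trees. I would go through the node types in turn: for a P-node in a consistency tree both of whose parents are combined embedding trees (or outer P-nodes), the bound $\fixed\le 2$ survives because each of the two parent P-nodes already had fixedness exactly one before~$T(v)$ was attached and gains at most one from its new child; for a P-node in a combined embedding tree, it is fixed by at most one block embedding tree and possibly by~$T(v)$, again giving fixedness at most two; for a P-node in an outer P-node (present only in the bridge case) the only fixing children are the combined embedding tree below and, possibly, $T(v)$; and for a P-node in a common embedding tree it is a sink, and by the definition of fixedness inherits $\fixed(\1\mu)-1+\fixed(\2\mu)-1\le 2$ from its two parents.

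The only real obstacle is this bookkeeping: one has to be careful that the outer P-node introduced for bridges does not inflate the fixedness of P-nodes above it, which is why it is important that bridges contribute only leaves and not further children with internal structure. Once the 2-fixed property is established, the theorem follows directly from Theorems~\ref{thm:1-critical-instances-sufficient-cond} and~\ref{thm:time-solve-in-quadr-time}, since the size of $D(\1G,\2G)$ is linear in $|\1G|+|\2G|$.
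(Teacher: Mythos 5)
Your proposal takes essentially the same approach as the paper: build the extended PQ-embedding representation for each input graph (block embedding trees $T_1, T_2$, a combined embedding tree with two P-nodes on top of them, and an additional outer P-node when bridges are present), attach a common embedding tree for each shared vertex, verify 2-fixedness, and invoke Theorems~\ref{thm:1-critical-instances-sufficient-cond} and~\ref{thm:time-solve-in-quadr-time}. The paper's own proof is terser than yours, consisting of the observation that ``the analysis from above works analogously.'' One bookkeeping slip in your write-up: the parents of a P-node in a consistency tree are the \emph{block} embedding trees, not the combined embedding trees or outer P-nodes, and a block-embedding-tree P-node gains the extra unit of fixedness through its parent (the combined embedding tree, whose fixedness rose to~2 once the common embedding tree was attached), not from a new child of its own. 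The chain is combined embedding tree (fixedness~$\le 2$) $\rightarrow$ block embedding tree (fixedness~$\le 1 + (2-1) = 2$) $\rightarrow$ consistency tree (fixedness~$\le (2-1)+(2-1)=2$); the final bound is the same as the one you state, so the conclusion is unaffected.
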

Note that this special case always applies if the cutvertices have
degree at most~5.  In particular, for \textsc{SEFE} we obtain the
following corollary.

\begin{cor}
  {\sc Simultaneous Embedding with Fixed Edges} can be solved in
  quadratic time for maxdeg-5 graphs whose intersection is connected.
\end{cor}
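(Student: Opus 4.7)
The plan is to derive the corollary directly from the preceding theorem by observing that the maximum-degree-5 condition forces the structural hypothesis that at most two blocks containing any given vertex are non-trivial.

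The key observation I would establish is: if a vertex $v$ lies in a block $B$ that is not a single edge, then $B$ is biconnected with at least three vertices, and so $v$ has at least two neighbors inside $B$. This means $v$ contributes at least two distinct edges to every such non-trivial block containing it. Since different blocks of a graph share at most a single vertex (they cannot share edges), these edge contributions from distinct non-trivial blocks are disjoint.

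Consequently, if $\deg(v) \le 5$, then at most $\lfloor 5/2 \rfloor = 2$ blocks containing $v$ can fail to be bridges. Applying this to every vertex $v$ in $\1G$ and $\2G$ (both of which have maximum degree at most $5$) shows that both input graphs satisfy the hypothesis of the preceding theorem: every vertex is contained in at most two blocks that do not consist of a single edge. Together with the assumption that the intersection $G$ is connected, the theorem yields a quadratic-time algorithm, which is exactly the statement of the corollary.

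There is no real obstacle here; the entire content of the corollary is the elementary degree-counting step above, after which the previously stated theorem does all the work. The only thing worth being slightly careful about is the edge case of a vertex of degree less than $2$ (an isolated vertex or a leaf), which trivially lies in at most one non-trivial block and hence also satisfies the hypothesis.
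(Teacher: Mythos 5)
Your argument is correct and is exactly the (implicit) reasoning the paper relies on: each non-trivial block containing $v$ uses at least two of the at most five edges at $v$, and blocks are edge-disjoint, so at most two such blocks can exist, which puts the input graphs into the hypothesis of the preceding theorem. The paper states this as a one-line remark without spelling out the degree-counting; your write-up just makes that explicit.
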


\section{Conclusion}
\label{cha:conclusion}

In this work we introduced a new problem called {\sc Simultaneous
  PQ-Ordering}.  It has as input a set of PQ-trees with a child-parent
relation (a DAG with PQ-trees as nodes) and asks, whether for every
PQ-tree a circular order can be chosen such that it is an extension of
the orders of all its children.  This was motivated by the possibility
to represent the possible circular orders of edges around every vertex
of a biconnected planar graph by a PQ-tree.  Unfortunately, {\sc
  Simultaneous PQ-Ordering} turned out to be $\mathcal {NP}$-complete
in general.  However, we were able to find an algorithm solving {\sc
  Simultaneous PQ-Ordering} in polynomial time for ``simple''
instances, the 1-critical instances.  To achieve this result we showed
that satisfying the Q-constraints and the critical triples is
sufficient to extend orders of several children simultaneously to a
parent, if each P-node is contained in at most one critical triple.
We were able to ensure that a critical triples are satisfied
automatically when choosing orders bottom-up by inserting new
PQ-trees, the expansion trees.  Creating the expansion trees
iteratively for every critical triple led to the expansion graph that
turned out to have polynomial size for 1-critical instances.  Hence,
we are able to solve a 1-critical instance of {\sc Simultaneous
  PQ-Ordering} in polynomial time, essentially by choosing orders
bottom-up in the expansion graph.  We have shown how this framework
can be applied to solve {\sc Partially PQ-Constrained Planarity} for
biconnected graphs and {\sc Simultaneous Embedding with Fixed Edges}
for biconnected graphs with a connected intersection in polynomial
time (linear and quadratic, respectively), which were both not known
to be efficiently solvable before.  Furthermore, we have shown how to
solves {\sc Simultaneous Interval Representation} and {\sc Partial
  Interval Graph Extension} in linear time, which improves over the
best known algorithms with running times $\mathcal O(n^2 \log n)$ and
$\mathcal O(nm)$ algorithm, respectively.  We stress that all these
results can be obtained in a straightforward way from the main result
of this work, the algorithm for {\sc Simultaneous PQ-Ordering} for
2-fixed instances.

\paragraph{Open problems.} However, several questions remain open for
the applications as well as for problems related to {\sc Simultaneous
  PQ-Ordering}.  Since the set of possible orders of edges around a
cutvertex in a planar drawing is not necessarily PQ-representable our
solutions for {\sc Partially PQ-Constrained Planarity} and {\sc SEFE}
cannot handle graphs containing cutvertices, except for the special
cases discussed in Section~\ref{sec:thoughts-gener}.  Another
limitation in the case of {\sc SEFE} is that the common graph needs to
be connected.  Since {\sc Simultaneous PQ-Ordering} focuses on very
local conditions for every vertex, it is difficult to formulate
conditions concerning the relative positions of different connected
components in terms of such conditions, at least if we need to ensure
that the resulting instances are 1-critical.  It seems worthwhile to
investigate the problem of ensuring consistent relative positions in
the absence of other embedding constraints, e.g., if the intersection
of the two graphs is a set of disjoint cycles.  For {\sc Simultaneous
  Interval Representation} the complexity is still open for the case
where more than two graphs are allowed.

One approach to address these problems is to extend the results on
{\sc Simultaneous PQ-Ordering} to instances that are not 1-critical or
generalize it in the sense that structures different to PQ-trees are
used as nodes in the DAG.  Questions forming the basis of such an
approach could be of the following kind.  Given three PQ-trees having
some leaves in common, can we find an order for each of the trees such
that the three resulting orders can be extended to a common order?
Note that testing this for three fixed orders can be done efficiently.
Does it make the problem easier if we consider rooted PQ-trees
representing linear orders?  Can we find other structures representing
sets of orders that are closed with respect to intersection?  Can the
possible orders of edges around cutvertices be represented by such
structures?

\bibliographystyle{amsalpha} \bibliography{SimPQOrd}

\end{document}